\documentclass{article}

\usepackage[bibliosources=refs]{pomegranate}
%\crefname{algocf}{Algorithm}{Algorithms}
%\Crefname{algocf}{Algorithm}{Algorithms}

% -- Macros

\usepackage[shortlabels]{enumitem}

\DeclareDocumentTextCommand{\NC}{}{\textsf{NC}}
\DeclareDocumentTextCommand{\RNC}{}{\textsf{RNC}}
\DeclareDocumentTextCommand{\QuasiNC}{}{\textsf{quasi-NC}}
\DeclareDocumentTextCommand{\SharpP}{}{\textsf{\#P}}

\DeclareOperator{\sign}
\DeclareOperator{\PM}
\DeclareDocumentMathCommand{\crc}{}{\operatorname{circ}}
\DeclareDocumentMathCommand{\mis}{}{\operatorname{mismatch}}
\DeclareDocumentMathCommand{\F}{}{{\mathcal{F}}}
\DeclareDocumentMathCommand{\L}{}{{\mathcal{L}}}
\DeclareDocumentMathCommand{\W}{}{{\mathcal{W}}}
\DeclareDocumentMathCommand{\O}{}{{\mathcal{O}}}

% -- Algorithm

\SetKwInOut{Input}{Input}
\SetKwInOut{Output}{Output}
\SetKwBlock{InParallel}{in parallel do}{end}
\SetKwFor{ParallelFor}{for}{in parallel do}{end}
\SetKwFunction{PerfectMatching}{\textsc{PerfectMatching}}
\SetKwFunction{PartialMatching}{\textsc{PartialMatching}}
\SetKwFunction{Support}{\textsc{Support}}
\SetKwFunction{Reduce}{\textsc{Reduce}}

% -- Graphics

\provideboolean{havetikz}
\setboolean{havetikz}{true}

% -- Colors

\definecolor{Black}{RGB}{0, 0, 0}
\definecolor{LightGray}{RGB}{216, 216, 216}
\definecolor{Gray}{RGB}{127, 127, 127}
\definecolor{Orange}{RGB}{237, 125, 49}
\definecolor{LightOrange}{RGB}{251, 229, 214}
\definecolor{Yellow}{RGB}{255, 192, 0}
\definecolor{Blue}{RGB}{91, 155, 213}
\definecolor{LightBlue}{RGB}{222, 235, 247}
\definecolor{Green}{RGB}{112, 173, 71}
\definecolor{LightGreen}{RGB}{226, 240, 217}
\definecolor{Navy}{RGB}{68, 114, 196}
\definecolor{LightNavy}{RGB}{218, 227, 243}

% -- Styles and Macros
\tikzstyle{node}=[circle, line width=1, draw=Black, fill=Gray, inner sep=3]
\tikzstyle{edge}=[line width=1, color=Gray]
\tikzstyle{highlight}=[line width=2, color=Orange]
\tikzstyle{set}=[line width=1, color=Black, dashed]
\tikzstyle{fillset}=[color=Black, fill=Gray, line width=1]
\makeatletter
\tikzset{use path/.code=\tikz@addmode{\pgfsyssoftpath@setcurrentpath#1}}
\makeatother

% -- Pictures

\tikzset{
	graph/.pic={
		\foreach \v/\coord in {{a/(-2, -1)}, {b/(-2, 1)}, {c/(-1, 0)}, {d/(2, -1)}, {e/(2, 1)}, {f/(1, 0)}} \node[node] (\v) at \coord {};
		\foreach \u/\v in {a/d, b/e, c/f, a/b, b/c, c/a, d/e, e/f, f/d} \draw[edge] (\u) -- (\v);
		\draw[set] (-3, 0) to[out=90, in=180] (-2, 1.5) to[out=0, in=90] (-0.5, 0) to[out=-90, in=0] (-2, -1.5) to[out=180, in=-90] cycle;
		\foreach \u/\v/\pos in {b/e/above, b/c/above, e/f/above, a/b/left, d/e/right} \draw (\u) edge[edge] node[pos=0.5, \pos, Black] {$\frac{1}{3}$} (\v);
		\foreach \u/\v/\pos in {a/d/below, c/f/above}
			\path (\u) edge[highlight] node[pos=0.5, \pos, Black] {$\frac{1}{3}\Orange{-\epsilon}$} (\v);
		\foreach \u/\v/\pos in {a/c/right, d/f/left}
			\path (\u) edge[highlight] node[pos=0.5, \pos, Black] {$\frac{1}{3}\Orange{+\epsilon}$} (\v);
		\node at (-3, 1.5) {$S$};
	}
}

\tikzset{
	contracted/.pic={
		\foreach \v/\coord in {{a/(-2, 0)}, {b/(-2, 0)}, {c/(-2, 0)}, {d/(2, -1)}, {e/(2, 1)}, {f/(1, 0)}} \node[node] (\v) at \coord {};
		\foreach \u/\v in {a/d, b/e, c/f, a/b, b/c, c/a, d/e, e/f, f/d} \draw[edge] (\u) -- (\v);
		\foreach \u/\v/\pos in {a/d/below, a/e/above, a/f/right, d/e/right, e/f/above, f/d/above}
			\path (\u) edge[edge] node[pos=0.5, Black, \pos] {$\frac{1}{3}$} (\v);
		\node at (-2.3, 0.3) {$S$};
	}
}

\tikzset{
	matching-polytope-base/.pic={
		\begin{scope}[scale=2]
			\coordinate (O) at (0.5, 0.5, 0.5);
			\coordinate (U1) at (1, 0.5, 0.5);
			\coordinate (U2) at (0.5, 1, 0.5);
			\coordinate (V) at (1, 1, 0.5);
			\coordinate (A1) at (0, 0, 0);
			\coordinate (A2) at (1, 0, 0);
			\coordinate (A3) at (1, 1, 0);
			\coordinate (A4) at (0, 1, 0);
			\coordinate (B1) at (0, 0, 1);
			\coordinate (B2) at (1, 0, 1);
			\coordinate (B3) at (1, 1, 1);
			\coordinate (B4) at (0, 1, 1);
			\coordinate (C1) at (1, 0.6, 0);
			\coordinate (C2) at (0.6, 1, 0);
			\coordinate (D1) at (1, 0.6, 1);
			\coordinate (D2) at (0.6, 1, 1);
			\coordinate (E1) at (-0.5, 1.5, -0.5);
			\coordinate (E2) at (-0.5, 1.5, 1.5);
			\coordinate (F2) at (-0.5, 1.5, 1);
			\coordinate (E4) at (1.5, -0.5, -0.5);
			\coordinate (E3) at (1.5, -0.5, 1.5);
			\coordinate (F3) at (1.5, -0.5, 1);
			\coordinate (M1) at (0, 0, 0.5);
			\coordinate (M2) at (-0.5, -0.5, 0.5);
		\end{scope}
	}
}

\tikzset{
	bipartite-matching-polytope0/.pic={
		\pic{matching-polytope-base};
		\begin{scope}[fill=Gray, opacity=0.3]
			\fill (A1) -- (A2) -- (A3) -- (A4) -- cycle;
			\fill (A1) -- (A2) -- (B2) -- (B1) -- cycle;
			\fill (A1) -- (A4) -- (B4) -- (B1) -- cycle;
		\end{scope}
	}
}

\tikzset{
	bipartite-matching-polytope1/.pic={
		\begin{scope}[line width=1, Black, dashed, opacity=0.6]
			\foreach \u/\v in {A1/A4, A1/A2, A1/B1}
				\draw (\u) -- (\v);
		\end{scope}
	}
}
\tikzset{
	bipartite-matching-polytope2/.pic={
		\begin{scope}[fill=Gray, opacity=.3]
			\fill (B1) -- (B2) -- (B3) -- (B4) -- cycle;
			\fill (B3) -- (A3) -- (A4) -- (B4) -- cycle;
			\fill (B3) -- (A3) -- (A2) -- (B2) -- cycle;
		\end{scope}
		\begin{scope}[line width=1, Black, opacity=1.0]
			\foreach \u/\v in {A2/A3, A3/A4, B1/B2, B2/A2, B1/B4, B4/A4, B3/B2, B3/B4, B3/A3}
				\draw (\u) -- (\v);
		\end{scope}
	}
}

\tikzset{
	bipartite-matching-polytope/.pic={
		\pic{bipartite-matching-polytope0};
		\pic{bipartite-matching-polytope1};
		\pic{bipartite-matching-polytope2};
	}
}

\tikzset{
	bipartite-matching-polytope01/.pic={
		\pic{bipartite-matching-polytope0};
		\pic{bipartite-matching-polytope1};
	}
}

\tikzset{
	matching-polytope0/.pic={
		\pic{matching-polytope-base};
		\begin{scope}[fill=Gray, opacity=0.3]
			\fill (A1) -- (A2) -- (C1) -- (C2) -- (A4) -- cycle;
			\fill (A1) -- (A2) -- (B2) -- (B1) -- cycle;
			\fill (A1) -- (A4) -- (B4) -- (B1) -- cycle;
		\end{scope}
	}
}

\tikzset{
	matching-polytope1/.pic={
		\begin{scope}[line width=1, Black, dashed, opacity=0.6]
			\foreach \u/\v in {A1/A4, A1/A2, A1/B1}
				\draw (\u) -- (\v);
		\end{scope}
	}
}

\tikzset{
	matching-polytope2/.pic={
		\begin{scope}[fill=Gray, opacity=.3]
			\fill (B1) -- (B2) -- (D1) -- (D2) -- (B4) -- cycle;
			\fill (B2) -- (D1) -- (C1) -- (A2) -- cycle;
			\fill (B4) -- (D2) -- (C2) -- (A4) -- cycle;
			\fill (C1) -- (D1) -- (D2) -- (C2) -- cycle;
		\end{scope}
		\begin{scope}[line width=1, Black, opacity=1.0]
			\foreach \u/\v in {B1/B2, B2/A2, B1/B4, B4/A4, A2/C1, C1/C2, C2/A4, B2/D1, D1/D2, D2/B4, C1/D1, C2/D2}
				\draw (\u) -- (\v);
		\end{scope}
	}
}

\tikzset{
	matching-polytope/.pic={
		\pic{matching-polytope0};
		\pic{matching-polytope1};
		\pic{matching-polytope2};
	}
}

\tikzset{
	matching-polytope01/.pic={
		\pic{matching-polytope0};
		\pic{matching-polytope1};
	}
}

\tikzset{
	bipartite-matching-polytope-seq/.pic={
		\pic at (-2.5, 0) {bipartite-matching-polytope01};
		\node at (O) {\LARGE $\Orange{\bullet}$};
		\draw[Yellow, line width=2, ->] (O) -- (U1);
		\pic at (-2.5, 0) {bipartite-matching-polytope2};
		\node at (0, 0.5) {\Huge $+$};
		\pic at (1.5, 0) {bipartite-matching-polytope01};
		\node at (O) {\LARGE $\Orange{\bullet}$};
		\draw[Yellow, line width=2, ->] (O) -- (U2);
		\pic at (1.5, 0) {bipartite-matching-polytope2};
		\node at (0, -1) {$\underbrace{\hspace{15em}}$};
		\node at (0, -1.6) {\Huge $\Downarrow$};
		\pic at (-0.5, -4.2) {bipartite-matching-polytope01};
		\node at (O) {\LARGE $\Orange{\bullet}$};
		\draw[Yellow, line width=2] (O) -- (barycentric cs:C1=1,D1=1,C2=1,D2=1);
		\draw[Yellow, line width=2, ->] (barycentric cs:C1=1,D1=1,C2=1,D2=1) -- (V);
		\pic at (-0.5, -4.2) {bipartite-matching-polytope2};
	}
}

\tikzset{
	matching-polytope-seq/.pic={
		\pic at (-2.5, 0) {matching-polytope01};
		\node at (O) {\LARGE $\Orange{\bullet}$};
		\draw[Yellow, line width=2, ->] (O) -- (U1);
		\pic at (-2.5, 0) {matching-polytope2};
		\node at (0, 0.5) {\Huge $+$};
		\pic at (1.5, 0) {matching-polytope01};
		\node at (O) {\LARGE $\Orange{\bullet}$};
		\draw[Yellow, line width=2, ->] (O) -- (U2);
		\pic at (1.5, 0) {matching-polytope2};
		\node at (0, -1) {$\underbrace{\hspace{15em}}$};
		\node at (0, -1.6) {\Huge $\Downarrow$};
		\pic at (-0.5, -4.2) {matching-polytope01};
		\node at (O) {\LARGE $\Orange{\bullet}$};
		\draw[Yellow, line width=2] (O) -- (barycentric cs:C1=1,D1=1,C2=1,D2=1);
		\pic at (-0.5, -4.2) {matching-polytope2};
		\draw[Yellow, line width=2, ->] (barycentric cs:C1=1,D1=1,C2=1,D2=1) -- (V);
	}
}

\tikzset{
	hyperplane/.pic={
		\pic{matching-polytope-base};
		\fill[Blue, opacity=0.9] (B4) -- (A4) -- (A2) -- (B2) -- (F3) -- (E4) -- (E1) -- (F2) -- cycle;
		\draw[Orange, line width=2, ->] (M1) -- (M2) node[Black, pos=1.2] {$w$};
		
		\pic{matching-polytope0};
		\pic{matching-polytope1};
		\draw[Orange, line width=2] (O) -- (M1);
		\fill[Blue, opacity=0.9] (A4) -- (B4) -- (B2) -- (A2) -- cycle;
		\node at (O) {\LARGE $\Orange{\bullet}$};
		\draw[Yellow, line width=2, ->] (O) -- (U1);
		\draw[Yellow, line width=2, ->] (O) -- (U2);
		\pic{matching-polytope2};
		%\fill[Orange, opacity=0.6] (E1) -- (F2) -- (F3) -- (E4) -- cycle;
		%\pic{matching-polytope2};
		\fill[Blue, opacity=0.9] (E2) -- (F2) -- (F3) -- (E3) -- cycle;
		\node[rotate=45] at (barycentric cs:E1=7,E2=7,E3=1,E4=1) {$\dotp{w, x}=\text{const}$};
		\node (minimizer) at (barycentric cs:C1=1,C2=1,D1=1,D2=1) {\LARGE $\bullet$};
		\node (text) at (barycentric cs:O=1,V=-2) {minimizer of $\dotp{w,x}$};
		\draw[dashed, ->] (text) edge[out=-120, in=30] (minimizer);
	}
}

\tikzset{
	shrinking1/.pic={
		\foreach \u/\coord in {{a/(0, 0)}, {b/(2, 0)}, {c/(1, 0.5)}, {d/(0, -2)}, {e/(-0.5, -1)}, {f/(-1, -2)}, {g/(1, -2)}, {h/(2, -1)}, {i/(1, -1)}, {j/(2.3, -2.3)}}
			\node[node] (\u) at \coord {\u};
		\foreach \u/\v in {a/b, b/c, c/d, d/e, e/f, f/d, d/g, g/h, h/i, i/a, a/e, h/b, i/g, g/j, h/j}
			\draw[edge] (\u) -- (\v);
		\draw[set] (-1.5, -2) to[out=90, in=-180] (-0.5, -0.5) to[out=0, in=90] (0.5, -2) to[out=-90, in=-90] cycle;
	}
}

\tikzset{
	shrinking2/.pic={
		\foreach \u/\coord in {{a/(0, 0)}, {b/(2, 0)}, {c/(1, 0.5)}, {def/(-0.8, -1.3)}, {g/(1, -2)}, {h/(2, -1)}, {i/(1, -1)}, {j/(2.3, -2.3)}}
			\node[node] (\u) at \coord {\u};
		\foreach \u/\v in {a/b, b/c, c/def, def/g, g/h, h/i, i/a, a/def, h/b, i/g, g/j, h/j}
			\draw[edge] (\u) -- (\v);
		\draw[set] (0.7, -0.7) to[out=45, in=60] (2.5, -1) to[out=-120, in=30] (1, -2.5) to[out=-150, in=-135] cycle;
		\draw[set] (-1.2, -2.3) to[out=-145, in=-135] (-0.4, 0.5) to[out=45, in=50] (1.6, 0.5) to[out=-130, in=35] cycle;
	}
}

\tikzset{
	shrinking3/.pic={
		\foreach \u/\coord in {{acdef/(0, -0.5)}}
			\node[node, color=Orange, fill=LightOrange] (\u) at \coord {\u};
		\foreach \u/\coord in {{b/(2, 0)},  {ghi/(1.2, -1.2)}, {j/(2.3, -2.3)}}
			\node[node] (\u) at \coord {\u};
		\foreach \u/\v in {acdef/b, acdef/ghi, ghi/b, ghi/j}
			\draw[edge] (\u) -- (\v);
	}
}

\tikzset{
	shrinking/.pic={
		\pic{shrinking1};
		\node at (3, -1) {\Huge $\Rightarrow$};
		\pic at (5, 0) {shrinking2};
		\node at (8.2, -1) {\Huge $\Rightarrow$};
		\pic at (9.5, 0) {shrinking3};
	}
}

\tikzset{
	sameblock/.pic={
		\node[node] (a1) at (-1, -1) {};
		\node[node] (a2) at (1, -1) {};
		\node[node] (b1) at (-1, 1) {};
		\node[node] (b2) at (1, 1) {};
		\coordinate (a0) at (-1, -2); 
		\coordinate (a3) at (1, -2);
		\coordinate (b0) at (-1, 2);
		\coordinate (b3) at (1, 2);
		\draw (a0) edge[highlight] node[pos=0.5, Black, right] {$-\epsilon_1$} (a1);
		\path (a1) edge[highlight] node[pos=0.5, Black, above] {$+\epsilon_1$} (a2);
		\path (a2) edge[highlight] node[pos=0.5, Black, left] {$-\epsilon_1$} (a3);
		\path (a3) edge[dashed, out=-120, in=-60] (a0);
		\path (b0) edge[highlight] node[pos=0.5, Black, right] {$-\epsilon_2$} (b1);
		\path (b1) edge[highlight] node[pos=0.5, Black, below] {$+\epsilon_2$} (b2);
		\path (b2) edge[highlight] node[pos=0.5, Black, left] {$-\epsilon_2$} (b3);
		\path (b3) edge[dashed, out=120, in=60] (b0);
		\draw[set] (-2, 0) to[out=90, in=180] (0, 2) to[out=0, in=90] (2, 0) to[out=-90, in=0] (0, -2) to[out=180, in=-90] cycle;
		\node at (0, 0) {$S$};
	}
}

% -- Preamble

\title{Matching is as Easy as the Decision Problem,\\ in the NC Model}

\author{Nima Anari}
\affil{Stanford University, \textsf{anari@cs.stanford.edu}}
\author{Vijay V.~Vazirani}
\affil{University of California, Irvine, \textsf{vazirani@ics.uci.edu}}

\iffalse \addbibresource{refs.bib} \fi

\begin{document}
	\maketitle
	
	\begin{abstract}
		Is matching in \NC{}, i.e., is there a deterministic fast parallel algorithm for it? This has been an outstanding open question in TCS for over three decades, ever since the discovery of randomized \NC{} matching algorithms \cite{KUW85, MVV87}.  Over the last five years, the theoretical computer science community has launched a relentless attack on this question, leading to the discovery of several powerful ideas. We give what appears to be the culmination of this line of work: An \NC{} algorithm for finding a minimum-weight perfect matching in a general graph with polynomially bounded edge weights, provided it is given an oracle for the decision problem. Consequently, for settling the main open problem, it suffices to obtain an \NC{} algorithm for the decision problem. We believe this new fact has qualitatively changed the nature of this open problem.

		All known efficient matching algorithms for general graphs follow one of two approaches: given by \textcite{Edmonds65} and \textcite{Lovasz79}. Our oracle-based algorithm follows a new approach and uses many of the ideas discovered in the last five years.
		
		The difficulty of obtaining an \NC{} perfect matching algorithm led researchers to study matching vis-a-vis clever relaxations of the class \NC{}. In this vein, recently \textcite{GG15} gave a pseudo-deterministic \RNC{} algorithm for finding a perfect matching in a bipartite graph, i.e., an \RNC{} algorithm with the additional requirement that on the same graph, it should return the same (i.e., unique) perfect matching for almost all choices of random bits. A corollary of our reduction is an analogous algorithm for general graphs. 
	\end{abstract}

	\section{Introduction}\label{sec:intro}

Is matching in \NC, i.e., is there a deterministic fast parallel\footnote{That runs in polylogarithmic time using polynomially many processors.} algorithm for finding a perfect or, more generally, a maximum matching in a general graph? This has been an outstanding open question in theoretical computer science for over three decades, ever since the discovery of \RNC{} matching algorithms \cite{KUW85,MVV87}. Over the last five years, the TCS community has launched a relentless attack on this question, leading to the discovery of numerous powerful ideas \cite{FGT16, ST17, GG15, AV18, Sankowski18}. We give what appears to be the culmination\footnote{We note that since the appearance of this paper in aXiv, in January 2019, we are not aware of any new development on the "Is matching in \NC?" question, in contrast to the frenzied activity over the prior years.} of this line of work: An \NC{} algorithm for finding a minimum weight perfect matching in a general graph with polynomially bounded edge weights, provided it is given an oracle, say $\O$, for the decision problem. Consequently, for settling the main open problem, it suffices to obtain an \NC{} algorithm for the decision problem. We believe this new fact has qualitatively changed the nature of this open problem. Henceforth, by \emph{small weights} we will mean \emph{polynomially bounded edge weights} and \emph{acronym MWPM} will be short for \emph{minimum weight perfect matching.}

 The difficulty of obtaining an \NC{} matching algorithm led researchers to study matching vis-a-vis certain clever relaxations of the class \NC{}. One such relaxation is pseudo-deterministic \RNC{}. This is an \RNC{} algorithm with the additional property that on the same graph, it must return the same (i.e., unique) solution for almost all choices of random bits \cite{GG11, GG15}. Recently, \Textcite{GG15} gave such an algorithm for perfect matching in bipartite graphs. A second relaxation of \NC{} is \QuasiNC{}, under which the algorithm must run in polylogarithmic time, though it can use $O(n^{\log^{O(1)} n})$ processors; see \cref{sec.history} for results obtained for this model.
 
 A corollary of our result extends \cite{GG15} to general graphs as follows: The precise decision problem for our result is: Given a graph $G$ with small weights and a number $W$, is there a perfect matching of weight at most $W$ in $G$. Since binary search over $W$ will take $O(\log n)$ iterations, this is \NC{} equivalent to: Find the weight of a minimum weight perfect matching in $G$. This question is easy to answer in \RNC{} with inverse-polynomial probability of error using the algorithm\footnote{Note that the \RNC{} algorithm of \cite{MVV87} also finds a minimum weight perfect matching with high probability; however, unlike the weight, the latter is not guaranteed to be the same with any sizable probability.} of \cite{MVV87}. Therefore, using this \RNC{} algorithm in place of the oracle, we get an \RNC{} matching algorithm with the property that in a run, all queries to the decision problem will be answered correctly with overwhelming probability. Whenever the latter happens, the algorithm outputs the same (unique) perfect matching. Hence this is a pseudo-deterministic \RNC{} matching algorithm.  
 
 All known efficient matching algorithms for general graphs follow one of two approaches: given by \textcite{Edmonds65} and \textcite{Lovasz79}. Our oracle-based algorithm follows a new approach and uses many of ideas discovered in the last five years. The contributions of various authors is given in detail in Section \ref{sec:overview}; here we mention two main ingredients. Our algorithm uses the overall structure, as well as an \NC{} algorithm for finding a balanced viable set (see Section \ref{sec.balanced}), from the recent \NC{} algorithm of \cite{AV18} for finding a perfect matching in planar graphs. (Since oracle $\O$ can be implemented in \NC{} for planar graphs, our current paper yields a simpler \NC{} algorithm for finding a perfect matching in planar graphs.) The second key ingredient is an \NC{} algorithm for finding a maximal laminar family of tight odd sets in a given face of the perfect matching polytope. This follows from the works of \textcite{CGS15} and \textcite{Sankowski18}.

\begin{notation}
\label{not.oracle}
$\O$ will represent the oracle that answers, in one step, the decision question: 
Given a graph $G$ with small weights and a number $W$, is there a perfect matching of weight at most $W$ in $G$?	
\end{notation}

Our main result is:

\begin{theorem}\label{thm:main}
	There is an \NC{} algorithm for finding a MWPM in general graphs with small weights, provided the algorithm is given access to oracle $\O$ for the decision problem. The latter is: Given a graph $G$ with small weights and a target weight $W$, is there a perfect matching of weight at most $W$ in $G$?
\end{theorem}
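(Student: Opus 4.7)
I reduce the MWPM search task to the decision oracle via a structural walk on the MWPM-face of the perfect matching polytope. The reduction rests on three pieces of prior machinery, each already available in \NC{}: parallel binary search via $\O$ to extract the optimal value; the \textcite{CGS15} and \textcite{Sankowski18} algorithms for a maximal laminar family of tight odd sets; and the balanced-viable-direction primitive of \cite{AV18} (\cref{sec.balanced}).

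\textbf{Algorithm.} Since edge weights are polynomially bounded, I first query $\O$ in parallel on the $\operatorname{poly}(n)$ candidate target weights and set $W^\ast$ to be the smallest ``yes''-answer, the MWPM value. Let $F$ denote the face of the perfect matching polytope cut out by $\langle w, x\rangle = W^\ast$; its vertices are exactly the MWPMs. I maintain a point $x \in F$ (initially an easy-to-build convex combination of MWPMs certified by $\O$) and push it toward a vertex of $F$ by iterating the following round in \NC{}. Using the \textcite{CGS15}/\textcite{Sankowski18} algorithm, compute a maximal laminar family $\L$ of odd vertex subsets tight at $x$, and contract its maximal members so that on the quotient graph only degree and non-negativity constraints remain active. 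On the quotient, invoke the balanced-viable-direction routine of \cite{AV18} to obtain in \NC{} a direction $d$ and step $\epsilon > 0$ with $x \pm \epsilon d \in F$, whose support is balanced with respect to $\L$; then update $x \leftarrow x + \epsilon d$ (feasibility at weight $W^\ast$ is verified by a single call to $\O$ on the appropriately modified instance). At least one previously slack constraint now activates --- either an edge coordinate hits $0$ and the edge is removed, or a new odd cut becomes tight, enlarging $\L$.

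\textbf{Termination and main obstacle.} Because $d$ is balanced, each round reduces the free dimension of the current face by a constant factor, so $O(\log n)$ outer rounds drive $x$ to an extreme point of $F$, which by integrality of the matching polytope is an integer MWPM. The crux is executing a single round in \NC{}: the laminar family $\L$ and the balanced direction $d$ must be produced together while $\O$ is used only as a black-box ``does this sub-face still contain a PM of weight $W^\ast$?'' test; the balancedness property from \cite{AV18} is what guarantees the $O(\log n)$ round count. Composing the \cite{CGS15, Sankowski18} tight-odd-set construction with the \cite{AV18} balanced-viable-set construction so that they cooperate within \NC{}-depth, and showing that oracle queries replace cleanly the unweighted combinatorial subroutines of \cite{AV18}, is the main technical hurdle.
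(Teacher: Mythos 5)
Your proposal has the right high-level shape (contract tight odd sets via \textcite{CGS15,Sankowski18}, use the oracle to find allowed edges and the MWPM value, use the balanced idea from \cite{AV18} to control depth), but it skips the central quantitative mechanism and, as written, the progress argument does not hold.

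The key claim ``Because $d$ is balanced, each round reduces the free dimension of the current face by a constant factor, so $O(\log n)$ outer rounds suffice'' is unjustified. A single step along a viable direction $d$ only forces \emph{one} new constraint tight --- one edge drops to zero or one new odd cut becomes tight. Balancedness in \cite{AV18} (and in this paper, in \cref{sec.balanced}) is about the odd set $S$ and its complement each containing a constant fraction of vertices; this controls the \emph{recursion depth} of the outer divide-and-conquer, not the rate at which the dimension of the current face decreases. You have conflated the two roles, and there is nothing in your round that forces more than one constraint to activate.

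What the paper actually does to get geometric progress is quite different and cannot be read off from your proposal. It does not maintain a point $x\in F$ at all. Instead, each iteration of \PartialMatching tries all $w$ in a polynomial-size weight-vector family $\W$ (\cref{lem:w-exists}, from \cite{FGT16}), plus a triad-contraction candidate. For the ``right'' $w\in\W$, every even walk in a large collection of edge-disjoint even walks acquires nonzero circulation, and by \cref{lem:walk-destroy,lem:walk-lose-edge} each such walk loses an edge, either because an edge leaves $E[w]$ or because it gets absorbed into a shrunk top-level tight odd set. The existence of $\Omega(\card{E}/\log^2\card{V})$ edge-disjoint even walks (or $\Omega(\card{E})$ triads in the very-sparse case) is what \cref{lem:many-walks,lem:many-triads} supply, and this is what yields the $1-\Omega(1/\log^2\card{V})$ per-iteration reduction in \cref{lem:progress}. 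The iteration count is therefore $O(\log\card{E}\cdot\log^2\card{V})$, not $O(\log n)$. Without the circulation machinery, the family $\W$, and the counting lemmas for even walks and triads, there is no lever by which one round makes more than $O(1)$ progress, and your outer loop would be polynomial, not polylogarithmic.
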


\begin{corollary}
	There is an \NC{} algorithm for finding a maximum matching in general graphs, provided the algorithm is given access to oracle $\O$.
\end{corollary}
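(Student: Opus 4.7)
The plan is to reduce maximum matching on an input graph $G = (V, E)$ to a MWPM problem on a slightly larger graph and then invoke \cref{thm:main}. First I would construct $G^*$ by taking two disjoint copies of $G$, on vertex sets $V$ and $V' = \{v' : v \in V\}$ with corresponding edge sets $E$ and $E'$, and joining each $v \in V$ to its twin $v'$ by a \emph{twin edge}. I would assign weight $0$ to every edge in the two copies of $G$ and weight $1$ to every twin edge. The weights lie in $\{0,1\}$, so they are polynomially bounded, and $G^*$ is thus a legitimate input for the oracle $\O$.

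Next I would verify correctness of the reduction. The set of all $n = |V|$ twin edges is itself a perfect matching of $G^*$, so a MWPM exists. For any perfect matching $M^*$ of $G^*$ using $k_1$ edges of $E$, $k_2$ edges of $E'$, and $w$ twin edges, coverage of $V$ gives $2k_1 + w = n$, and symmetrically $2k_2 + w = n$; hence $k_1 = k_2$ and the weight of $M^*$ equals $n - 2k_1$. Conversely, any matching of $G$ of size $k$ extends to a perfect matching of $G^*$ of weight $n - 2k$ by mirroring it inside both copies and completing with twin edges between the still-unmatched twins. Therefore a MWPM of $G^*$, restricted to $E$, is a maximum matching of $G$.

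Finally, building $G^*$, invoking the algorithm of \cref{thm:main} on $G^*$ (which uses only queries to $\O$), and reading off the $E$-component of the returned matching are all manifestly \NC{} operations. The heavy lifting is done by \cref{thm:main}; the construction above is the standard trick for encoding cardinality maximization as weight minimization, and I do not foresee any real obstacle beyond what the main theorem already handles.
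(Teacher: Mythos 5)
Your proposal is correct and is the standard reduction (the ``doubling'' or Tutte construction) from maximum matching to minimum weight perfect matching; the paper states this corollary without an explicit proof precisely because this is the expected argument, and your verification of the weight accounting is right. One small remark worth adding for completeness: the construction $G^*$ is not a minor of $G$, so this reduction would \emph{not} be admissible in the setting of \cref{thm:minor} (minor-closed families); but the corollary at hand concerns general graphs and invokes \cref{thm:main}, so there is no issue.
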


\begin{corollary}
\label{cor:pm}
	There is a pseudo-deterministic \RNC{} algorithm for finding a minimum weight perfect matching in general graphs with small weights.
\end{corollary}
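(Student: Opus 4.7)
The plan is to implement the oracle $\O$ from \cref{thm:main} by an \RNC{} subroutine derived from \cite{MVV87}, and then to argue that the resulting algorithm meets the pseudo-determinism definition of \cite{GG15}.

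The starting observation is that the \RNC{} algorithm of \cite{MVV87} computes the weight $W^{*}$ of a minimum weight perfect matching of $G$ with probability at least $1 - 1/\mathrm{poly}(n)$. Crucially, although \cite{MVV87} does not pin down a canonical matching attaining this weight, the number $W^{*}$ itself is a fixed integer depending only on $G$, so on success the output is the same across all random seeds. Given $W^{*}$, any decision query of the form ``does $G$ have a PM of weight at most $W$?'' is answered deterministically by comparing $W$ with $W^{*}$. This yields an \RNC{} implementation of $\O$, and a standard amplification (independent repetition and majority vote, which stays within the \RNC{} budget) drives the failure probability of this simulation down to $1/n^{c}$ for any desired constant $c$.

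Next I would plug this into the \NC{}-with-oracle algorithm of \cref{thm:main}. That algorithm issues at most $\mathrm{poly}(n)$ queries to $\O$, but since a single amplified computation of $W^{*}$ answers all of them consistently within a given run, it suffices to condition on the single event ``$W^{*}$ is computed correctly'', which has probability $1 - 1/\mathrm{poly}(n)$. On this event every oracle query inside the run receives the truthful answer, so the algorithm of \cref{thm:main} behaves as a fully deterministic procedure and returns its canonical MWPM of $G$. Hence on every fixed input $G$ the overall \RNC{} algorithm produces the same (unique) perfect matching on an overwhelming fraction of random strings, which is precisely the property in the definition of pseudo-deterministic \RNC{}.

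The only point that needs care, and the main conceptual obstacle, is that all queries in a single run must be answered \emph{consistently}: two oracle calls that receive contradictory answers could derail the algorithm of \cref{thm:main} and destroy pseudo-determinism. I sidestep this by performing the \cite{MVV87} estimation of $W^{*}$ \emph{once} and reusing the result for every subsequent decision query, rather than treating each oracle call as an independent randomized probe whose answers on different calls might disagree. I do not anticipate any further difficulty.
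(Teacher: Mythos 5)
Your high-level strategy is the same as the paper's: replace $\O$ by the \RNC{} MWPM algorithm of \cite{MVV87} and argue that, conditioned on every oracle query being answered truthfully, the algorithm of \cref{thm:main} becomes deterministic and returns a canonical matching. However, your specific implementation of $\O$ has a genuine gap. You compute the single number $W^{*}=\O(G,w)$ for the \emph{original} input and answer every subsequent query by comparison against it. But the algorithm of \cref{thm:main} does not only query $\O$ on $(G,w)$: \cref{lem:support} calls $\O(G-\set{u}-\set{v},w)$ for every edge $\set{u,v}$, \cref{lem:laminar} calls $\O(G-\set{u}-\set{v},w)$ for every vertex pair, \cref{alg:partial-matching} iterates over polynomially many weight vectors $w'\in\W$, and \cref{alg:perfect-matching} recurses on smaller minors of $G$. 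Each of these is a genuinely different instance $(G',w')$ with its own optimal weight, and the number $W^{*}$ for the original $(G,w)$ carries no information about any of them. So your single cached value cannot serve as the oracle.

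The ``consistency'' obstacle you flag is in fact a non-issue, and your fix addresses the wrong problem. Two oracle calls can only ``contradict'' each other if at least one of them is wrong; truthful answers to $\O(G',w')$ are, by definition, mutually consistent across all instances, because $\O$ is a deterministic function of $(G',w')$. The paper's argument is therefore simply a union bound: implement each oracle call by an independent run of \cite{MVV87} with per-call error probability $1/n^{c}$ (the error can be driven down to any inverse polynomial), and since the algorithm makes at most polynomially many calls, with probability $1-1/\mathrm{poly}(n)$ every call is answered truthfully. On that event the entire execution coincides with the deterministic oracle algorithm of \cref{thm:main} and outputs its canonical MWPM, which is exactly the pseudo-determinism property. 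So the correct fix is not to cache one answer, but to make each independent probe reliable enough that the union bound over all of them still succeeds.
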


We further show that our algorithm only need to call the decision oracle for minors of the input graph.

\begin{theorem}\label{thm:minor}
	Let $\F$ be a minor-closed family of graphs. If there is an \NC{} algorithm for deciding whether a perfect matching of weight at most $W$ exists in graphs from $\F$, weighted with polynomially small weights, then there is also an \NC{} algorithm for \emph{finding} a MWPM in such graphs.
\end{theorem}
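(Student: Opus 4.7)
The plan is to observe that the \NC{} algorithm furnished by \cref{thm:main} interacts with its oracle $\O$ only through graphs that arise as minors of the original input. Since $\F$ is minor-closed, every query then lies in $\F$, and so the assumed \NC{} decision algorithm for $\F$ may be substituted for $\O$.

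First, I would inspect the algorithm underlying \cref{thm:main} and catalog every way in which it produces a graph that is fed to $\O$. Matching algorithms in the Edmonds/Lov\'asz tradition, and in particular the algorithm sketched in the introduction, which inherits its structure from \cite{AV18} and uses contractions of tight odd sets, maximal laminar families of such sets, and restriction to faces of the perfect matching polytope, manipulate the graph only by (i) deleting edges, for example edges not in any MWPM or edges outside the currently selected face, and (ii) contracting vertex sets, such as tight odd sets or blossoms. Both operations are instances of taking a graph minor, and their compositions are again minor operations.

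Second, I would argue by induction on the reduction sequence that every intermediate graph, and hence every graph passed to $\O$, is a minor of the input $G$. Since $G \in \F$ and $\F$ is minor-closed, each such graph again belongs to $\F$. It remains to verify that the weights queried to $\O$ are polynomially bounded; but the algorithm either carries over the original weights or perturbs and translates them by polynomially bounded quantities in the style of \cite{MVV87}, consistent with the promise already made in \cref{thm:main} that oracle calls are on polynomially weighted instances. Consequently, each call to $\O$ can be implemented by one invocation of the hypothesized \NC{} decision algorithm for $\F$, and the resulting procedure remains in \NC{}.

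The main obstacle is the bookkeeping step of certifying that \emph{every} graph modification in the algorithm of \cref{thm:main} is realized purely by deletions and contractions of the input graph, with no vertex splitting, edge addition, or gadget construction that would escape the minor closure. In the algorithm presented here this is essentially built in, since all reductions go through blossom-style contractions (tight odd sets) and edge deletions (face restrictions); thus the verification is a direct, step-by-step inspection of the reductions used in the proof of \cref{thm:main} rather than a genuinely new argument.
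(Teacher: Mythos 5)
Your overall strategy matches the paper's: the paper proves \cref{thm:minor} via \cref{lem:minor}, which asserts that \cref{alg:reduce,alg:partial-matching,alg:perfect-matching} invoke the oracle only on minors of the input graph, and the theorem then follows immediately from minor-closure of $\F$. However, your proposal glosses over the single nontrivial point that the paper's proof actually has to establish, and treats it as mere bookkeeping. Contracting an arbitrary vertex set is \emph{not} a minor operation; it is a minor operation only when the contracted set induces a connected subgraph. The tight odd sets $S$ produced by \cref{alg:balanced} are not a priori internally connected, so ``deletions and contractions'' does not by itself put the resulting graphs inside the minor closure. The paper's proof of \cref{lem:minor} devotes itself precisely to this: it shows that in a matching-covered graph, any tight odd set $S$ inside a connected component must be internally connected. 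The argument is that if $S$ had a disconnected odd piece $S'$, one could find an allowed edge $e \in \delta(S - S') \cap \delta(S)$, and a minimum-weight perfect matching $M \ni e$ would then need a second edge $f \in M \cap \delta(S') \subseteq \delta(S)$, contradicting tightness of $S$. Without this argument your induction on the reduction sequence does not close.

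There is also a second wrinkle you do not account for: \cref{alg:partial-matching} can contract the complement of a heavy node, which may well be disconnected and hence not a minor operation at all. The paper observes that this is harmless because the algorithm returns immediately afterward and never calls the oracle on that graph. You would need to notice and handle this case explicitly, or your claim ``every intermediate graph is a minor of $G$'' is simply false as stated. In short, the high-level plan is correct and is the one the paper uses, but the ``step-by-step inspection'' you defer is where the actual mathematical content lives, and it requires a matching-theoretic connectivity lemma plus a special-case argument, not just a catalog of deletion and contraction steps.
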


\subsection{Related work and a brief history of parallel matching algorithms}
\label{sec.history}

The recent surge in activity on this problem was initiated by the elegant work of \textcite{FGT16} giving a \QuasiNC{} algorithm for perfect matching in bipartite graphs. The essential idea underlying their algorithm is to give a partial derandomization of the Isolation Lemma. In the process, they introduced some powerful ideas which were crucially used in later works and are detailed in Section \ref{sec:overview}. This was followed by the \QuasiNC{} algorithm of \textcite{ST17} for non-bipartite graphs. This work clarified the basic difficulty encountered in such graphs and ways of dealing with them; see Section \ref{sec:overview} for details.

The very first result on obtaining parallel matching algorithms was that the decision problem, of determining if a graph has a perfect matching, can be solved in \RNC. This is a folklore result -- it follows in a straightforward manner from Lovasz's \cite{Lovasz79} matching algorithm and Csanky's result \cite{Csanky} that the determinant of a matrix can be computed in \NC. 

The first \RNC{} algorithm for the search problem, of actually finding a perfect matching, was obtained by \textcite{KUW86}. This was followed by a simpler and more versatile algorithm due to \textcite{MVV87};  besides perfect matching, it also yielded \RNC{} algorithms for the problem of exact matching (see \cref{sec:discussion}) and for finding a MWPM in a graph with small weights. The latter fact is crucially used for obtaining pseudo-deterministic \RNC{} algorithms for bipartite graphs \cite{GG15} and general graphs (current paper). The ``philosophy'' behind \cite{MVV87} will be useful for dealing with a difficulty that arises in the design of the current algorithm as well, so it is recalled below\footnote{Under the \NC{} model, any one processor does not even have enough time to read the entire input, and hence can perform only local computations. On the other hand, a perfect matching is a global object, unlike say, a maximal independent set. Further difficulties arise from the fact that the number of perfect matchings in a graph can vary widely, all the way from one to exponentially many (assuming it has at least one). If there were a unique perfect matching in the graph, the algorithm's task would become a lot simpler. \Cite{MVV87} achieve uniqueness via their probabilistic fact, the Isolating Lemma: under an assignment of randomly chosen small weights to the edges it claims that the MWPM will be unique with high probability.}.

The matching problem occupies an especially distinguished position in the theory of algorithms: Some of the most central notions and powerful tools within this theory were discovered in the context of an algorithmic study of this problem, including the notion of polynomial time solvability \cite{Edmonds65}, the counting class \SharpP{} \cite{Valiant79} and a polynomial time equivalence between random generation and approximate counting for self-reducible problems \cite{JVV}, which lies at the core of the Markov chain Monte Carlo method. The perspective of parallel algorithms has also led to such a gain, namely the Isolation Lemma \cite{MVV87}, which has found several applications in complexity theory and algorithms; see the Wikipedia page \cite{IsolationL}. Due to the fundamental insights gained from an algorithmic study of matching, and the possibility of additional insights, the problem of obtaining an \NC{} algorithm for it has remained a premier open question ever since the 1980s.

The first substantial progress on this question was made for the case of planar bipartite graphs by \textcite{MN89} via a flow-based approach, followed by \textcite{MV00} using the fact that there is an \NC{} algorithm for counting perfect matchings in planar graphs. The long-standing problem of extending this result to non-bipartite planar graphs was resolved by \textcite{AV18}. Subsequently, \textcite{Sankowski18} also got the same result using different ideas. \Cite{AV18} also extended their algorithm to constant genus graphs. Subsequently, \textcite{EV18} gave an \NC{} algorithm for perfect matching in one-crossing-minor-free graphs, which include $K_5$-free graphs and $K_{3,3}$-free graphs; the resolution of the latter class settles an open problem asked in \cite{Vazirani89}.

The notion of a pseudo-deterministic algorithm with polynomial expected running time was given by \textcite{GG11} and was applied to several number theoretic and cryptographic problems. The notion of pseudo-deterministic \RNC{} algorithms was defined by \textcite{GG15}.

Recently, algorithms were also obtained for the generalization of bipartite matching to the linear matroid intersection problem by \textcite{GT17}, and to a further generalization of finding a vertex of a polytope with faces given by totally unimodular constraints, by \textcite{GTV17}.

\subsection{What is the ``right'' decision problem?}
\label{sec:right}

Consider the following two decision problems for perfect matching:
\begin{enumerate} 
	\item Given a graph $G$ with small weights and a target $W$, is there a perfect matching of weight at most $W$ in $G$?
	\item Does graph $G$ have a perfect matching?
\end{enumerate}

Clearly, the second can be reduced to the first and is therefore ``easier''. This leads to a legitimate question: why not attempt to reduce, in \NC, the search problem to the second decision problem? Our experience suggests that the first problem is much more basic for the setting at hand. We next provide evidence to this effect. 

Seeking a MWPM in a graph with small weights was the central problem in the work of \cite{MVV87}. The Isolating Lemma helped find small weights under which there was a unique MWPM. The second half of \cite{MVV87} gave an \NC{} algorithm for finding this (unique) perfect matching, using the Tutte matrix of the graph and matrix inversion; the latter is known to be in \NC{} \cite{Csanky}. Ever since then, perhaps the most used avenue for obtaining an \NC{} matching algorithm was to derandomize the Isolating Lemma. This would deterministically yield small weights under which there is a unique MWPM, and it could be found using the second half of \cite{MVV87}. 

The question of MWPM in a graph with small weights plays a central role in \NC-type approaches to all non-bipartite, and even some bipartite, perfect matching algorithms: partial derandomization leading to \QuasiNC{} algorithms \cite{FGT16,ST17}, resolution of the open problem of non-bipartite planar graphs \cite{AV18,Sankowski18}, and quasi-deterministic \RNC{} algorithms for bipartite \cite{GG15} and general graphs (current paper).  

In mathematics, sometimes solving a more general problem turns out to be easier than solving the special case, if the former has a better ``behavior''. Our belief is that this is the case here. The main avenue studied for solving the second decision problem was by derandomizing polynomial identity testing \cite{??}. However, more than three decades of work on the latter has yielded no substantial results. We believe it is time to wholeheartedly attack the first decision problem. Going forward, that is the main message of our paper.

\subsection{Bipartite vs non-bipartite matching: An intriguing phenomenon}
\label{sec:curious}

Decades of algorithmic work on the matching problem, from numerous perspectives, exhibits the following intriguing phenomenon: The bipartite case gets solved first. Then, using much more elaborate machinery, involving structural facts and algorithmic insights, the general graph case  follows, yielding the exact same result! This phenomenon is made all the more fascinating by the fact that the ``elaborate machinery'' consists not of one fact but numerous different structural properties and mathematical facts which happen to be just right for the problem at hand! We give a number of examples below.

The duality between maximum matching and minimum vertex cover for bipartite graphs extends to general graphs via the notion of an odd set cover, see \textcite{LP09}. The formulation of the perfect matching polytope for bipartite graphs extends by introducing constraints corresponding to odd sets \cite{Edmonds65}. Polynomial time algorithms for maximum matching and maximum weight matching in bipartite graphs generalize via the notion of blossoms \cite{LP09}. The most efficient known algorithm for maximum matching in bipartite graphs \cite{HK73, Karzanov73} obtained via an alternating breadth first search, extends via a much more elaborate algorithm with the same running time using the graph search procedure of double depth first search \cite{MV80} and blossoms defined from the perspective of minimum length alternating paths \cite{Vazirani94}. The \RNC{} matching algorithms \cite{KUW86, MVV87} use Tutte's theorem to extend to general graphs. The randomized matching algorithm of \cite{RV89} uses Tutte's theorem and a theorem of Frobenius about ranks of sub-matrices of skew-symmetric matrices. 

More recent work exhibits this phenomenon as well. The \QuasiNC{} algorithm of \textcite{FGT16} for bipartite graphs extends by handling tight odd cuts appropriately \cite{ST17}. The \NC{} algorithm of \cite{MV00} for planar bipartite graphs was extended to non-bipartite graphs via Edmonds' formulation of the perfect matching polytope \cite{Edmonds65}, an \NC{} algorithm for max-flow in planar graphs \cite{Johnson87}, and a result of \textcite{PR82} proving that the Gomory-Hu tree of a graph must contain a tight odd cut, and an elaborate \NC{} algorithm for uncrossing tight odd cuts \cite{AV18}. In the same vein, the current paper is extending the pseudo-deterministic \RNC{} bipartite algorithm of \cite{GG15} by giving a way of dealing with tight odd cuts in Edmonds' formulation of the perfect matching polytope \cite{Edmonds65} and using an \NC{} procedure for finding a maximal laminar family of tight odd cuts \cite{CGS15,Sankowski18}.

	\section{Overview and Technical Ideas}\label{sec:overview}

Most of this paper will concentrate on the problem of finding a perfect matching in a general graph in \NC{}, given oracle $\O$. In \cref{sec:weighted} we will extend our ideas to finding a MWPM for small weights; an algorithm for finding a maximum matching in a general graph in \NC{} will easily follow. In this section, we will also give a number of key definitions which will be used throughout the  paper.

\subsection{The bipartite case}

For ease of comprehension, we will first give an outline of a proof of \cref{thm:main} for the case of bipartite graphs. Such a proof can be gleaned from the paper of \textcite{GG15}; however, to the best of our knowledge, this important fact was not derived so far. Below, we build on the \QuasiNC{} algorithm of \textcite{FGT16} to obtain a somewhat simpler proof of this result.

The algorithm of \cite{FGT16} first finds a point in the interior of the perfect matching polytope and then iteratively moves to lower dimensional faces of this polytope, terminating when a vertex of the polytope is reached; this will be a perfect matching. 

\begin{definition}
In a general graph $G = (V, E)$ with edge weight function $w$, an edge $e$ is called an \emph{allowed edge} if it participates in MWPM. Let $E[w]$ denote the set of all allowed edges. Edges in the complement of this set will be called \emph{disallowed edges}.
\end{definition}

Assume $w$ are small weights and let $\PM[w]$ denote the face of the polytope containing all fractional and integral MWPMs w.r.t.\ $w$. Since we are in the bipartite case, $\PM[w]$ has a simple description: It is defined by the set of disallowed edges, since they are set to zero, or equivalently its complement, i.e., the set of allowed edges, $E[w]$. The description of the algorithm given above can be refined to: Iteratively modify the weight vector $w$ so that the dimension of face $\PM[w]$ keeps dropping, and equivalently $E[w]$ keeps getting sparser, until $E[w]$ is a perfect matching.

As argued earlier, using oracle $\O$, we can find the weight of a MWPM in $G$. Further, it is easy to see that for a given edge $e$, we can  determine in \NC{} if $e$ participates in a MWPM, i.e., if $e \in E[w]$. Repeating for all edges in parallel, we get the following easy fact for general graphs as well:

\begin{fact}
\label{fact.allowed}	
Given a graph $G = (V, E)$ and small weights $w$, and given oracle $\O$, we can compute $E[w]$ in \NC.
\end{fact}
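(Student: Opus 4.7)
The plan is a direct self-reduction made parallel by two observations: (i) the MWPM weight $W^*$ lies in a polynomial range and can be pinned down by parallel oracle queries, and (ii) membership of each edge in some MWPM is an independent local condition that reduces to one oracle call on a slightly modified graph.

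First I would compute $W^* = w(\PM[w])$. Because the edge weights are polynomially bounded, the weight of any perfect matching is an integer in an interval $[-W_{\max}, W_{\max}]$ with $W_{\max} = \operatorname{poly}(n)$. In parallel, query $\O(G, W)$ for every integer $W$ in this interval; the minimum $W$ for which the oracle returns YES is $W^*$. This is one round of polynomially many oracle calls followed by an $O(\log n)$-depth parallel minimum, all within \NC{}.

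Next, for each edge $e = (u,v) \in E$ in parallel, I would ask whether $e$ participates in some MWPM. Observe that $e \in E[w]$ if and only if the graph $G - \{u,v\}$, with the induced weights, has a perfect matching of weight at most $W^* - w(e)$: if such a matching $M'$ exists, then $M' \cup \{e\}$ is a perfect matching of $G$ of weight at most $W^*$, hence a MWPM containing $e$; conversely, any MWPM containing $e$ restricts to such an $M'$. Thus one additional oracle call $\O(G - \{u,v\},\, W^* - w(e))$ decides membership of $e$ in $E[w]$. Running these $|E|$ calls in parallel collects $E[w]$ in constant additional parallel depth on top of the oracle.

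The only potential obstacle is ensuring the weight range is polynomial so that the first step fits in \NC{}; this is exactly the \emph{small weights} hypothesis of \cref{not.oracle}. Since all remaining operations (parallel oracle queries, taking a minimum, forming vertex-deletions, and aggregating YES answers) are standard \NC{} primitives, the overall procedure places $E[w]$ in \NC{} given access to $\O$.
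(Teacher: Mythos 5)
Your proof is correct and takes essentially the same approach as the paper's Lemma 4.1: an edge $e=\{u,v\}$ is allowed precisely when $G-\{u\}-\{v\}$ has a perfect matching of weight $W^*-w_e$, which is checked for all edges in parallel with one oracle call each. The paper phrases the oracle as directly returning the MWPM weight (writing the condition as $\O(G,w)=w_e+\O(G-\{u\}-\{v\},w)$), whereas you first recover $W^*$ from the decision oracle by a polynomial range of parallel queries; this is the standard equivalence noted in the introduction and is not a substantive difference.
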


The following is a fundamental notion in all recent \NC-type matching algorithms: 

\begin{definition}
\label{def:circulation}
(\textcite{Datta10})
Given a general graph $G$ with edge-weights $w$ and an even cycle $C$ in it, number the edges of $C$ consecutively, starting from an arbitrary edge. Then the \emph{circulation of cycle $C$} is the absolute value of the difference of the sum of weights of odd-numbered and even-numbered edges and is denoted by $\crc_w(C)$.
\end{definition}

 It is easy to prove that if the MWPM in $G$ is not unique, then any cycle in the symmetric difference of two such matchings must have zero circulation. It follows that if we find a weight vector $w$ such that each cycle in $G$ has nonzero circulation, then the MWPM must be unique and can be found in \NC{}. The next fact shows how to achieve this one cycle at a time. 
  
  \begin{fact}
  \label{fact:bipartite}
  (\textcite{FGT16})
  In a bipartite graph, let cycle $C \subseteq E[w]$ have $\crc_w(C)=0$. Let $G$ denote the graph on edge set $E[w]$. Assign small weights $w'$ to edges $E[w]$ so that $\crc_w'(C) > 0$. Then $C$ will not be present in $E[w']$, i.e., at least one of its edges will be dropped in going from $E[w]$ to $E[w']$. We will say that $C$ got \emph{destroyed}.
    \end{fact}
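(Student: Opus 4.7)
The plan is to prove the contrapositive via a general structural lemma: \emph{in any bipartite graph and for any weight function, every cycle $C$ contained in the allowed-edge set $E[w]$ satisfies $\crc_w(C) = 0$.} Granting this lemma, the fact is immediate: if every edge of $C$ remained in $E[w']$, applying the lemma to the bipartite graph on edge set $E[w]$ with weights $w'$ would force $\crc_{w'}(C) = 0$, contradicting the chosen inequality $\crc_{w'}(C) > 0$. Hence at least one edge of $C$ must be dropped when passing from $E[w]$ to $E[w']$.

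To prove the lemma I would use LP duality. In a bipartite graph, the perfect matching polytope is cut out purely by the degree equalities $\sum_{e \ni v} x_e = 1$ and the non-negativity constraints $x_e \ge 0$; crucially, no odd-set constraints are needed. The LP dual of $\min w \cdot x$ therefore introduces only a vertex multiplier $y_v$ for each $v$, subject to $y_a + y_b \le w_e$ for every edge $e = (a,b)$. By complementary slackness, any edge in $E[w]$ is tight with respect to every optimal dual: $y_a + y_b = w_e$ whenever $e = (a,b) \in E[w]$.

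Now fix one optimal dual $y$ and record, for each edge $e = (a,b)$ of $C \subseteq E[w]$, the identity $y_a + y_b = w_e$. Summing these separately over the even-indexed subset $C_0$ and the odd-indexed subset $C_1$: because $C_0$ and $C_1$ are each a perfect matching on the vertex set $V(C)$, both left-hand sides collapse to the same quantity $\sum_{v \in V(C)} y_v$. Therefore $w(C_0) = w(C_1)$, i.e.\ $\crc_w(C) = 0$, proving the lemma.

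The main obstacle is conceptual rather than computational: the whole argument depends on the bipartite matching LP having only vertex-indexed duals, which telescope cleanly around an even cycle. For non-bipartite graphs, Edmonds' description of the perfect matching polytope adds constraints for odd vertex subsets $S$, and the associated dual multipliers do not cancel across $C_0$ and $C_1$; indeed, cycles inside $E[w]$ can genuinely have nonzero circulation in the non-bipartite setting. Addressing this obstruction via tight odd cuts and laminar families is what occupies the rest of the paper.
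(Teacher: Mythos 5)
Your proof is correct, and it takes a genuinely different route from the one used in \textcite{FGT16} and in the general-graph analogue in this paper (\cref{lem:walk-destroy}). The paper works in the primal: pick a point $x$ in the relative interior of the face $\PM[w']$; since $C\subseteq E[w']$, every edge of $C$ has positive coordinate in $x$, so $x+\epsilon\sign(C)$ remains in the bipartite perfect matching polytope for small $\epsilon$ (degree equalities hold because $\sign(C)$ alternates, nonnegativity holds by slack), yet $\dotprod{w',\sign(C)}\neq 0$ lets one choose the sign of $\epsilon$ to strictly decrease the objective, contradicting optimality of $x$. You instead work in the LP dual: complementary slackness forces $y_a+y_b=w_e$ on every allowed edge, and telescoping the vertex potentials around the even cycle gives $w(C_0)=w(C_1)$ outright. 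The two arguments are dual to each other, and yours is arguably a bit more self-contained since it avoids reasoning about relative interiors of faces; the paper's primal argument, however, is the one that generalizes verbatim to non-bipartite graphs (where the failing constraint need not be $x_e\geq 0$ but may be an odd-set inequality), which is why the authors phrase it that way. One small observation: the hypothesis $\crc_w(C)=0$ in the statement plays no role in the conclusion, and you are right to ignore it --- the claim holds for any $C\subseteq E[w]$ with $\crc_{w'}(C)>0$.
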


  Hence, if we find a weight vector that destroys all cycles of $G$, we would be done. However, $G$ may have exponentially many cycles, so this is non-trivial. One of the key ideas of \cite{FGT16} is a systematic way of destroying cycles: They iteratively destroy cycles of length $4, 8, 16, \dots, n$; clearly, the number of iterations needed is $O(\log n)$. In the first round, $G$ has at most $O(n^4)$ cycles of length 4. \Textcite{FGT16} show that if all cycles of length at most $2^i$ have already been destroyed, then there are at most $O(n^4)$ cycles of length at most $2^{i+1}$ left. Hence, in each iteration only $O(n^4)$ cycles need to be destroyed.

Suppose the current iteration starts with small weights $w$ under which all cycles of length at most $2^i$ have already been destroyed. In this iteration, the algorithm finds a weight vector $w'$ for the edges in $E[w]$ under which all cycles of length at most $2^{i+1}$ are destroyed. The following fact will play a central role in the current paper as well:

\begin{fact}
\label{fact:functions}
(\textcite{FGT16})
In order to destroy any set of $s$ cycles, it suffices to try certain well-chosen $O(n^2s)$ integral weight vectors each of which uses numbers that are $O(n^2s)$; one of these vectors is sure to work. 
 \end{fact}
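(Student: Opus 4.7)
The plan is to prove the fact by a polynomial non-vanishing argument of Isolation-Lemma-derandomization flavor. First I would observe that destroying a cycle $C_j$ amounts to non-vanishing of a fixed integer linear form in $w$: numbering the edges of $C_j$ consecutively yields signs $\sigma_{j,e} \in \{+1, -1\}$ for $e \in C_j$ with $\crc_w(C_j) = |L_j(w)|$, where $L_j(w) := \sum_{e \in C_j} \sigma_{j,e} w_e$. Thus $C_j$ is destroyed under $w$ iff $L_j(w) \neq 0$, and the goal reduces to exhibiting a single weight vector $w$ with $L_j(w) \neq 0$ for every $j = 1, \dots, s$ simultaneously.

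Next I would exhibit a universal family of integer weight vectors of size $O(n^2 s)$, with entries of size $O(n^2 s)$, such that at most $O(n^2)$ members can be bad for any single $L_j$. Fix an injective edge labeling $i : E \to \{1, \dots, m\}$ where $m = |E| \leq \binom{n}{2}$, pick a prime $p$ with $ms < p = O(n^2 s)$ (available by Bertrand's postulate), and for each $y \in \{1, \dots, p-1\}$ define
\[
    w^{(y)}_e \;:=\; y^{i(e)} \bmod p \;\in\; \{0, 1, \dots, p - 1\}.
\]
For any cycle $C_j$, the residue $L_j(w^{(y)}) \bmod p$ equals the evaluation at $y$ of the polynomial $f_j(x) := \sum_{e \in C_j} \sigma_{j,e}\, x^{i(e)}$. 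Since $i$ is injective, $f_j$ has $|C_j|$ distinct monomials with $\pm 1$ coefficients; hence its reduction modulo $p$ is a nonzero polynomial of degree at most $m$ and has at most $m$ roots in $\{1, \dots, p-1\}$.

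A union bound over $j = 1, \dots, s$ then bounds the number of ``bad'' $y$, those for which $L_j(w^{(y)}) \equiv 0 \pmod p$ for some $j$, by $ms < p - 1$. Hence a good $y \in \{1, \dots, p-1\}$ exists; for such $y$, every $L_j(w^{(y)})$ is nonzero modulo $p$ and in particular is a nonzero integer, so the single weight vector $w^{(y)}$ destroys every cycle in the collection simultaneously. The family $\{w^{(y)}\}_{y=1}^{p-1}$ contains $p - 1 = O(n^2 s)$ vectors and each entry lies in $\{0, \dots, p-1\}$, matching the resource bounds stated.

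The step I expect to be the main obstacle is simultaneously controlling the three resources: the size of the family, the magnitude of the entries, and the ``resolution'' in the parameter $y$ needed to avoid $s$ distinct linear forms. The construction above is essentially tight on all three axes, because a shorter polynomial family would force too small a degree to handle $\Theta(n^2)$-length cycles, while larger entries would violate the polynomially-bounded weights regime required downstream by the oracle $\O$. The remaining details — locating a suitable prime, fixing the labeling $i$, and enumerating $y$ — are routine and cause no trouble in \NC.
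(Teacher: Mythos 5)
Your proof is correct and follows essentially the same route as the source the paper cites. The paper states this fact without proof, attributing it to \textcite{FGT16}; their argument is precisely the polynomial-evaluation hashing scheme you reconstructed: identify each cycle with a $\pm 1$ linear form, encode the form as a polynomial $f_j(x)=\sum_{e\in C_j}\sigma_{j,e}x^{i(e)}$ whose reduction mod a suitable prime $p=O(ms)$ is nonzero and has $\leq m$ roots, union-bound over the $s$ forms, and conclude that some evaluation point $y\in\{1,\dots,p-1\}$ avoids all of them, giving the family $\{w^{(y)}\}$ of size $O(n^2s)$ with entries $O(n^2s)$. Two small points worth tightening: the inequality you need is $p-1>ms$ rather than $p>ms$, which is harmless since $x=0$ is always a root of each $f_j$ (every monomial has degree $\geq 1$) so the bad set inside $\{1,\dots,p-1\}$ has size at most $s(m-1)<p-1$; and for the even-walk generalization used later in the paper the coefficients lie in $\{-2,\dots,2\}$ rather than $\{-1,0,1\}$, which is fine as long as $p>2$ but is worth stating so the nonvanishing of $f_j\bmod p$ is immediate.
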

 
Since in the current iteration $s = O(n^4)$, at most $O(n^6)$ weight vectors suffice. The algorithm for choosing a weight vector that works is as follows. In parallel, for each of the $O(n^6)$ weight vectors, $y$, compute $E[y]$ and find the girth of the resulting graph; this can easily be done in \NC. Pick the lexicographically first weight vector, say $w'$, such that $E[w']$ has girth $>2^{i+1}$. Clearly, $w'$ destroys all cycles of length at most $2^{i+1}$.

\subsection{Extension to general graphs}
\label{sec.extension}

In a wide range of computational models, matching algorithms for general graphs are far harder than for bipartite graphs, mainly because they need to handle odd cycles in special ways. The set of constraints capturing the perfect matching polytope is also more complex: it includes exponentially many odd set constraints. An odd set $S \subset V$ which satisfies this constraint with equality is called a \emph{tight odd set}. The description of face $\PM[w]$ is also much more involved: in addition to edges $E[w]$, we need a maximal laminar family of tight odd sets, say $\L$; see \cref{sec:face}. 

\paragraph{The ``engine'' underlying our algorithm:} Analogous to  \Cref{fact:bipartite}, which yielded the ``engine'' for the bipartite case, there is an ``engine'' underlying our algorithm as well -- it  iteratively reduces the size of the graph. This engine can be thought of as composed of three components which draw on different domains to establish structural facts and algorithms.

\subsubsection{Component based on the structure of the perfect matching polytope} 
We first note that 
 \Cref{fact:bipartite} does not hold in general graphs:  a non-bipartite graph may have an even cycle $C \subseteq E[w]$ with $\crc_w(C) > 0$. The reason is the presence of a tight odd set. As a result, \cref{fact:bipartite} needs to be enhanced to the fact stated below. We will say that a cycle $C$ \emph{crosses} a tight odd set $S$ if $C$ has vertices in $S$ as well as in $(V-S)$. Similarly, edge $e$ \emph{crosses $S$} if one of its endpoints is in $S$ and the other is in $V-S$.

\begin{fact}
\label{fact:cross}
(\textcite{ST17})
	In a general graph $G$, suppose even cycle $C \subseteq E[w]$ has $\crc_w(C) > 0$. Then, there must be a tight odd set $S$ such that $C$ \emph{crosses} $S$.
\end{fact}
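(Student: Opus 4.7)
The plan is to prove the contrapositive: if $C \subseteq E[w]$ is an even cycle that does not cross any tight odd set, then $\crc_w(C) = 0$. The idea is to perturb a relative-interior point of the face $\PM[w]$ along the cycle $C$ in both signs and show that the perturbation stays in the perfect matching polytope, which then forces the weights of the two alternating halves of $C$ to agree.

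Concretely, let $M_1, M_2$ be the two alternating edge-halves of $C$ and set $d := \chi(M_1) - \chi(M_2)$, a signed vector on $E$ supported on $C$ with entries in $\{+1,-1\}$. One checks directly that $w \cdot d = \pm \crc_w(C)$, so the goal reduces to showing $w \cdot d = 0$. I would then fix a point $x$ in the relative interior of $\PM[w]$; because $E[w]$ is by definition the set of edges used by some MWPM (and the average of all MWPM-vertices lies in the relative interior), such an $x$ satisfies $x_e > 0$ for every $e \in E[w]$, and in particular for every $e \in C$.

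The core step is to verify that for all sufficiently small $\epsilon > 0$, both $x + \epsilon d$ and $x - \epsilon d$ still lie in the perfect matching polytope. Vertex-degree equalities are preserved because $d$ contributes $+1$ on $M_1$ and $-1$ on $M_2$ at each vertex of $V(C)$ and zero elsewhere; non-negativity is preserved because $d$ is supported on $C$, where $x$ is strictly positive; and any odd-set constraint that is strictly satisfied at $x$ remains satisfied for small $\epsilon$ by continuity. The delicate case is a tight odd set $S$ at $x$: by hypothesis $C$ does not cross $S$, so either $V(C)\cap S = \emptyset$ or $V(C) \subseteq S$, and in either case $\sum_{e \in E(S)} d_e = 0$ — trivially in the first case, and in the second because $|M_1| = |M_2| = |C|/2$ — so the equality defining the tight constraint is preserved. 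Since $x$ is already a MWPM and $x \pm \epsilon d$ are then fractional perfect matchings, optimality of $x$ gives $w \cdot (x \pm \epsilon d) \geq w \cdot x$, which forces $\pm \epsilon (w \cdot d) \geq 0$ and hence $w \cdot d = 0$.

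The main subtlety to watch for is the tight-odd-set check: one must use a \emph{relative-interior} point $x$ of $\PM[w]$ rather than an arbitrary MWPM vertex, so that the odd sets whose inequality is tight at $x$ are exactly the odd sets tight on the whole face $\PM[w]$ — precisely the collection that the hypothesis controls. With that choice, the perturbation argument collapses to a short LP calculation; the entire combinatorial content is carried by the evenness of $C$ together with the non-crossing assumption.
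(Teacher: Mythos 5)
Your argument is correct and is essentially the same as the one the paper sketches in the discussion preceding \cref{lem:walk-destroy}: fix a relative-interior point $x$ of the face $\PM[w]$, perturb it by $\pm\epsilon\,\sign(C)$, and observe that when $C$ is contained in $E[w]$ and crosses no tight odd set, every constraint of the perfect matching polytope (degree equalities, edge nonnegativity, odd-set inequalities) is preserved for small $\epsilon$, so optimality of $x$ forces $\dotp{w,\sign(C)}=0$. One minor point worth tidying: the tight odd-set constraint in \cref{eq:perfect-matching-polytope} is written on the cut $\delta(S)$, not on the internal edges $E(S)$; with that form the check is immediate (no edge of $C$ lies in $\delta(S)$ when $C$ does not cross $S$), and you don't even need the evenness of $C$ for this step. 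Your version using $\sum_{e\in E(S)} d_e = 0$ corresponds to the equivalent formulation $\sum_{e\in E(S)} x_e \le (|S|-1)/2$, and is also fine — but it's worth being explicit about which formulation you're invoking.
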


This is illustrated in \cref{fig:blocked}. In this graph, the three edges in $\delta(S)$ have weight 1 and the rest have weight 0. Observe that each edge participates in a MWPM and hence $E[w]$ consists of all edges. The cycle consisting of the four orange edges, say $C$, has positive circulation  even though it is contained in $E[w]$. Cycle $C$ crosses tight odd set $S$.  

\begin{definition}
\label{def:mismatch}
	Assume that even cycle $C$ \emph{crosses} tight odd set $S$. Number the edges of $C$ starting from an arbitrary edge. Let $n_o$ and $n_e$ denote the number of odd-numbered and even-numbered edges, respectively, that cross $S$. Then the \emph{mismatch of C and S}, denoted $\mis(C, S)$, is $\abs{n_o - n_e}$.
\end{definition}

Note that in \cref{fig:blocked}, $\mis(C, S) = 2$.
Observe that if the MWPM is not unique and $C$ is a cycle in the symmetric difference of two such perfect matchings then the following must hold:
\begin{itemize}
	\item $\crc_w(C) = 0$.
	\item If $C$ crosses a tight odd set $S$, then $\mis(C, S) = 0$; the reason is that each perfect matching crosses each tight set exactly once.
\end{itemize}

\begin{figure}
	\begin{Columns}
		\Column
			\Tikz*{\pic{graph};}
		\Column
			\Tikz*{\pic{contracted};}
	\end{Columns}
	\begin{Columns}[Top]
		\Column
			\caption{The orange even cycle crosses tight odd set $S$; example due to \cite{FGT16,ST17}.}
			\label{fig:blocked}
		\Column	
			\caption{Resulting graph after shrinking tight odd set $S$.}
			\label{fig:contracted}
	\end{Columns}
\end{figure}

\begin{fact}
\label{fact:trap}
(\Cref{lem:walk-lose-edge})
		Consider a general graph $G$ with weights $w$ and even cycle $C \subseteq E[w]$ with $\crc_w(C) > 0$. Let $S$ be a tight odd set such that $C$ crosses $S$. Then $\mis(C, S) > 0$ and at least one edge of $C$ has both its endpoints in $S$. 
\end{fact}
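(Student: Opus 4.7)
The plan is to prove (i) $\mis(C,S) > 0$ via an LP duality (complementary slackness) calculation, then deduce (ii) from a short parity walk around $C$. Number the edges of $C$ as $e_1, \dots, e_{2m}$ in cyclic order, so that consecutive $C$-edges have opposite parity. Suppose for contradiction that no $C$-edge has both endpoints in $S$. Then for every $v \in V(C) \cap S$, both $C$-edges incident to $v$ cross $\delta(S)$ (their other endpoints lying outside $S$) and have opposite parity, so $v$ contributes exactly one edge to each of $n_o$ and $n_e$. Summing over $v \in V(C) \cap S$ yields $n_o = n_e$, i.e.\ $\mis(C,S) = 0$. This contrapositive argument delivers (ii) once (i) is in hand.

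For (i), suppose $\mis(C, S) = 0$ and aim for the contradiction $\crc_w(C) = 0$. Fix an optimal dual $(y, z)$ to the minimum weight perfect matching LP, with $z$ supported on a laminar family of tight odd sets containing $S$; a laminar optimal dual exists for the matching polytope by Edmonds' theorem. Complementary slackness on each $e = (u, v) \in C \subseteq E[w]$ gives $w_e = y_u + y_v + \sum_{\{u,v\} \subseteq S',\, z_{S'}>0} z_{S'}$. Substitute into the alternating sum $\sum_i (-1)^i w_{e_i}$: since the two $C$-edges at every $v \in V(C)$ have opposite parity, all $y$-contributions cancel. The remaining $z$-contribution simplifies via the identity $|C_o \cap E(S')| - |C_e \cap E(S')| = \tfrac{1}{2}\bigl(n_e(S') - n_o(S')\bigr)$ — obtained by double-counting signed vertex incidences with $S'$ — giving
\[
\pm\crc_w(C) \;=\; \tfrac{1}{2}\sum_{S'\text{ tight}} z_{S'}\bigl(n_e(S') - n_o(S')\bigr),
\]
where $n_o(S'), n_e(S')$ denote the mismatch counts for $S'$. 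If $\mis(C,S') = 0$ for every tight $S'$ in the support of $z$, this sum vanishes, contradicting $\crc_w(C) > 0$.

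The main obstacle is localizing the nonzero mismatch to the specific $S$ in the hypothesis, rather than to some other tight odd set appearing in the dual. My plan is to choose the optimal dual carefully so that, among tight odd sets crossed by $C$, only $S$ has $z_{S} > 0$; the laminar structure of the tight family describing $\PM[w]$ should permit redistributing dual mass onto $S$ while preserving optimality and tightness. Once this dual is in place, only the $S$-term survives on the right-hand side, and $\crc_w(C) > 0$ forces $\mis(C,S) \neq 0$, hence $\mis(C,S) > 0$. This dual-selection step is where I expect the bulk of the work to lie; an alternative would be to directly construct the walk-and-lose-edge argument suggested by the name of \cref{lem:walk-lose-edge}, tracing around $C$ through $S$ to extract both the parity imbalance and the edge inside $S$ in one combinatorial stroke.
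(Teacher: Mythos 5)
Your argument for the second claim (the edge with both endpoints in $S$) is exactly the paper's proof of Lemma~\ref{lem:walk-lose-edge}: if no edge of $C$ lies inside $S$, each $C$-vertex in $S$ contributes one odd- and one even-indexed crossing, so $n_o=n_e$ and the mismatch vanishes. For the first claim ($\mis(C,S)>0$), your LP-duality route is a valid alternative to the paper's. The paper argues polyhedrally (Lemma~\ref{lem:walk-destroy}): take $x$ in the relative interior of $\PM[w]$, observe that $x\pm\epsilon\sign(C)$ cannot stay in $\PM[w]$ because it changes the objective, and since non-tight constraints have slack, some constraint tight for the whole face must be violated; as $C\subseteq E[w]$ it cannot be an edge constraint, so it is an odd-set constraint, giving $\mis(C,S)>0$ for some $S$ in a laminar family describing the face. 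Your alternating-sum-of-complementary-slackness identity
$\crc_w(C)=\bigl|\sum_{S'}\pi(S')\bigl(n_e(S')-n_o(S')\bigr)\bigr|$
is the algebraic shadow of the same fact, and it works.

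The place where you go astray is the ``dual-selection step.'' You read the statement as universal (``for \emph{any} tight odd set $S$ crossed by $C$, $\mis(C,S)>0$'') and then try to redistribute dual mass onto the given $S$. That universal reading is false (there can be tight odd sets crossed by $C$ with zero mismatch), and the redistribution cannot work in general. Fact~\ref{fact:trap} is stated informally in the overview; its precise form is Lemma~\ref{lem:walk-destroy}, which is an \emph{existence} statement: there is \emph{some} tight $S$ in a laminar dual with $\mis(C,S)>0$. Your complementary-slackness computation already delivers exactly that — if $\crc_w(C)>0$ then some $S'$ with $\pi(S')>0$ has nonzero mismatch, and singletons always have zero mismatch with a cycle, so $S'$ is a genuine odd set — so no dual-selection step is needed; you should simply drop it. One small calculational point: the dual constraint in this LP (Definition~\ref{def:laminar-dual}) is $w_e\geq\sum_{S':\,e\in\delta(S')}\pi(S')$, summed over sets $e$ \emph{crosses}, not sets containing both endpoints. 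Using that form, the alternating sum gives the display above directly, and the factor-of-$\tfrac12$ conversion identity between inside-edges and crossing-edges, while correct, is unnecessary.
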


Our strategy for dealing with cycle $C$ having $\crc_w(C)>0$ is to \emph{shrink} the tight odd set $S$ it crosses; this is illustrated in \cref{fig:contracted}. By \cref{fact:trap}, this will shrink at least one edge of $C$, hence resulting in a smaller graph. Our overall strategy is as follow: Suppose w.r.t. weight vector $w$, $\crc_w(C) = 0$. Let $w'$ be a weight vector such that $\crc_{w'}(C) > 0$. If so, \textcite{ST17} show that either $C$ must lose an edge in going from $E[w]$ to $E[w']$ or a new odd set $S$ goes tight w.r.t.\ $w'$ such that $C$ crosses $S$. In the latter case, we shrink $S$. In either case we will obtain a smaller graph and in both cases we will say that $C$ is \emph{destroyed}.

\subsubsection{Component based on graph-theoretic facts}
As stated in the Introduction, the overall structure of our algorithm is similar to that of  \cite{AV18}. Both algorithms require in each iteration a large enough number of edge-disjoint even cycles whose destruction will result in the removal of a corresponding number of edges. However, in both cases, the graph may have not such cycles. The recourse is to resort to even walks.

\begin{definition}
(\textcite{ST17})
\label{def:walk}
	We call an ordered list of an even number of edges $C = (e_1, \ldots , e_{2k})$, not necessarily distinct, that start and end at the same vertex, an \emph{even walk} if this list traverses either a simple even cycle or two odd cycles with a path joining them; in the latter case, the cycles are traversed once each and the path twice, once in each direction.
\end{definition}

The list $C$ of edges of an even walk contains each edge either once or twice, and if it contains an edge $e$ twice, then both copies will have the same parity. The notions of circulation and mismatch can be extended to even walks in a natural way by taking into consideration multiplicity of edges. Thus if $e$ occurs twice in walk $C$, is odd-numbered and crosses tight odd set $S$, then it contributes 2 to $n_o$ in the computation of $\mis(C, S)$ (see  \cref{def:mismatch}) and it contributes $2w_e$ to the sum of odd-numbered edges in the computation of $\crc_w(C)$ (see \cref{def:circulation}). As shown in \cite{ST17}, all statements made above about destroying even cycles carry over to even walks as well. 

\cite{AV18} critically used Euler's formula and the planar dual of $G$ for first finding a large number of edge-disjoint cycles in \NC. If more than half were even, they sufficed. Otherwise, they paired up odd cycles and found paths connecting each pair to obtain even walks. This was done in a such a manner that the resulting even walks were edge-disjoint.

Finding edge-disjoint cycles in a general graph in \NC{} appears to be quite difficult. Instead, we take a cue from the bipartite case, which finessed the issue of finding edge-disjoint cycles by   using \cref{fact:bipartite}. As a result, showing the \emph{existence} of cycles sufficed! However, there is a subtle difference: in the bipartite case, we needed to upper bound the number of cycles that needed to be destroyed in each iteration, whereas here we need to lower bound them; the latter is the case in \cite{AV18} as well.

Using ideas from \textcite{CPR03} we show that if the graph $G=(V, E)$ is not very sparse (see \cref{def:sparse}), then it contains $\Omega\parens*{\frac {\card{E}} {\log^2 \card{V}}}$ edge-disjoint even cycles. Then, using ideas from \cite{AV18}, we show how to pair up odd cycles to form walks. Unlike \cite{AV18}, the walks don't need to be found explicitly -- establishing existence suffices.

If in an iteration the graph is very sparse, it will not have the required number of edge-disjoint cycles. For this case, we define the notion of a \emph{triad} in \cref{def:triad}; this is a tight odd set consisting of three vertices. We show that the graph has sufficiently many disjoint triads, and a maximal independent set algorithm can find a large enough subset of these in \NC. These can be shrunk simultaneously.

\subsubsection{Component based on facts from matching theory}
Suppose that in a certain iteration our algorithm is trying weight function $w$, as per \cref{fact:bipartite}. We will need to find in \NC{} a description of face $\PM[w]$, which involves, in addition to edges $E[w]$, a maximal laminar family of tight odd sets, say $\L$. As  stated in Fact \ref{fact.allowed}, computing $E[w]$ using oracle $\O$ is straightforward. However, finding family $\L$ in \NC{} is a difficult question. The difficulty is similar to that of finding a perfect matching in a graph, i.e., the presence of a plethora of solutions. Recall the ``philosophy'' of \cite{MVV87} given in \cref{sec.history}, for dealing with this issue for perfect matching, namely  attempt to narrow down the choices to one. Clearly unlike \cite{MVV87}, randomization is not a resource we can use for this purpose. The solution involves imposing more and more restrictions on the family of tight odd sets until it becomes unique! These restrictions arise from deep structural facts from matching theory. Additional facts lead to an \NC{} algorithm for computing $\L$ with the help of $\O$. These ideas are from \textcite{CGS15} and \textcite{Sankowski18} and are given in \cref{sec:face}.

 For the ``correct'' weight function, say $w$, among the set of even walks being handled in this iteration, some will be destroyed by losing an edge and some by crossing a tight odd set. By updating the edge set to $E[w]$, we can accrue the advantage from the first set of walks. For obtaining advantage from the second set of walks, for each such walk, say $C$, we need to shrink a tight odd set, say $S$, that it crosses. A major obstacle is that our algorithm does not ``know'' \emph{any} of the walks! The way we finesse this difficulty is to shrink all outermost sets of $\L$, which are clearly disjoint, in the graph on edge set $E[w]$. 
 
 Finally, among all weight functions, we will pick the one, say $w$, that yields a graph with the smallest number of edges. There is no guarantee that $w$ would have destroyed all $s$ walks which we had established the existence of up-front. However, at least one of the weight functions must have done so and therefore led to a decrease of at least $s$ edges. Hence, $w$ must also decrease at least $s$ edges, and that suffices for making progress. As shown in \cref{lem:progress}, the number of non-isolated edges gets reduced by a factor of $1-\Omega(1/\log^2 \card{V})$ in each iteration.

\subsection{The final idea: balanced viable set}
\label{sec.balanced}

Our current strategy is to iteratively reduce the number of edges until a perfect matching remains. After picking its edges, we need to recursively find a perfect matching in each of the shrunk sets (after removing its matched vertex). The resulting algorithm would have polylogarithmic depth; however, it does not run in polylogarithmic time because of the following inherent sequentiality: Perfect matchings in shrunk sets can be found only \emph{after} finding a perfect matching in the shrunk graph, because the algorithm needs to know the vertex in $S$ that is matched outside $S$. Moreover, perfect matchings in the shrunk graph and the shrunk sets need to be found via a recursive application of the full algorithm described so far. 

The exact same issue arose in \cite{AV18} as well. The solution proposed there was meant for general graphs and hence it works here as well. The solution is quite elaborate and hence is not repeated here; instead, we direct the reader to Section 4.2 in \cite{AV18}. We note that the task is somewhat easier here because we have recourse to oracle $\O$; \cite{AV18} had to resort to computing Pfaffians orientations, etc. We give a short, high-level summary below.

An odd set $S$ is \emph{viable} if there is at least one perfect matching in $G$ which picks exactly one edge from $\delta(S)$. A set $S$ is \emph{balanced} if both $S$ and its complement contain a constant fraction of the vertices. \cite{AV18} show how to find in \NC{} a balanced viable odd set. Let $S$ be such a set. Clearly, using oracle $\O$, we can find an edge $e \in \delta(S)$  which is the unique edge in a perfect matching from this cut. Now we are done by a simple divide-and-conquer strategy: match $e$, remove its end-points and find perfect matchings in the two sides of the cut recursively, in parallel. Observe that even though perfect matchings in the two sides can be found only after finding the matched edge $e$, the latter can be done without any recursive calls, hence, leading to a polylogarithmic running time.

%%%%%%%%%%%%%%%%%%%%%%%%%%

	\section{Preliminaries}\label{sec:prelims}

We represent undirected graphs by $G=(V,E)$, where $V$ is the set of vertices and $E$ is the set of edges. Unless otherwise specified, we only work with graphs that have no loops, i.e., an edge from a vertex to itself. An edge between vertices $u$ and $v$ is represented as $\set{u, v}$. For a set $S\subseteq V$, we use $\delta(S)$ to denote the cut between $S$ and its complement, i.e., $ \delta(S)=\set{\set{u, v}\in E\given u\in S, v\notin S}$. When $S$ is a singleton, i.e., $\set{v}$ for some $v\in V$, we use the shorthand $\delta(v)=\delta(\set{v})$. A perfect matching is a subset of edges $M\subseteq E$ such that for all $v\in V$ we have $\card{M\cap \delta(v)}=1$.

\begin{definition}\label{def:isolated-edge}
	We call an edge $e=\set{u, v}$ \emph{isolated} if $\deg(u)=\deg(v)=1$.
\end{definition}
By this definition a graph is a perfect matching if it has no isolated vertices and all of its edges are isolated.

For a set $S\subseteq E$ of edges we use $\1_S\in \R^E$ to denote the indicator of $S$. We use the shorthand $\1_e$ to denote the $e$-th element of the standard basis for $\R^E$, where $e\in E$. We denote the standard inner product between vectors $w, x\in \R^E$ by $\dotprod{w, x}$.

Given a convex polytope $P\subseteq \R^E$, and a weight vector $w\in \R^E$, we use $P[w]$ to denote the set of points minimizing the weight function $x\mapsto \dotprod{w, x}$:
\[ P[w]=\set{x\in P\given \forall y\in P: \dotprod{w, x}\leq \dotprod{w, y}}. \]
Note that $P[w]$ is a face of $P$; all faces of $P$ can be obtained as $P[w]$ for appropriately chosen $w$.

\subsection{The perfect matching polytope}
Given a graph $G=(V, E)$, we call a subset of edges $M\subseteq E$ a perfect matching if it contains exactly one edge in every degree cut, i.e., $\card{M\cap \delta(v)}=1$ for all $v$. We call a graph matching-covered if any of its edges can be extended to a perfect matching.
\begin{definition}\label{def:matching-covered}
	A graph $G=(V, E)$ is \emph{matching-covered} if for every edge $e\in E$, there exists a perfect matching $M$ such that $e\in M$.
\end{definition}

The perfect matching polytope for $G=(V, E)$ is the convex hull of all perfect matchings of $G$ in $\R^E$. Thus,
\[ \PM_G=\conv\set{\1_M\given M\subseteq E\text{ is a perfect matching of }G}. \]
Clearly the perfect matchings of $G$ are in one-to-one correspondence with the vertices of this polytope.

When $G$ is clear from context, we simply use $\PM$ to refer to this polytope. $\PM$ is alternatively described by the following set of linear equalities and inequalities \cite{Edmonds65}:
\begin{equation}\label{eq:perfect-matching-polytope}
	\PM=\set*{x\in \R^E\given
		\begin{array}{ll}
			\dotprod{\1_{\delta(v)}, x}=1& \forall v\in V,\\
			\dotprod{\1_{\delta(S)}, x}\geq 1& \forall S\subseteq V,\text{ with $\card{S}$ odd},\\
			\dotprod{\1_e, x}\geq 0&\forall e\in E.
		\end{array}
	}.
\end{equation}

Any face $F$ of $\PM$ can be either described by a weight vector $w$, i.e., $F=\PM[w]$, or it can be alternatively described by the set of inequalities turned into equalities in \cref{eq:perfect-matching-polytope}. These correspond to odd sets $S$ and edges $e$. When face $F$ is clear from context, we call odd sets whose inequalities have been turned into equalities, \emph{tight} odd sets. We call an edge $e$ \emph{allowed} if $x_e>0$ for some $x\in F$, i.e., if the inequality corresponding to $e$ in \cref{eq:perfect-matching-polytope} has not been turned into equality. We use $E[w]$ or $E[F]$ to denote the set of allowed edges in the face $F=\PM[w]$. Putting it all together, to describe a face $F$ it is enough to describe the set of allowed edges as well as tight odd sets.

\subsection{Finding a description of a face}
\label{sec:face}

A key step in our oracle-based algorithm is: given small weights $w$, compute a description of the face $F= \PM[w]$. As stated before, using oracle $\O$, $E[w]$ can be computed in \NC{}. However, as far as tight odd sets go, there are typically exponentially many choices of a family of such sets that suffice. At this point, it will be useful to recall the ``philosophy'' of \cite{MVV87} given in \cref{sec.history}, namely when designing an \NC{} algorithm, faced with a plethora of solutions, one should attempt to narrow down the choices to one. Clearly unlike \cite{MVV87}, randomization is not a resource we can use for this purpose. The solution to this puzzle is indeed one of the keys that enables our result and is described below. It involves imposing more and more structure on the family of tight odd sets we seek until it becomes unique! It turns out that the latter can be computed in \NC{} with the help of $\O$.

Two tight odd sets $S_1, S_2\subseteq V$ are said to \emph{cross} if they are not disjoint and neither is a subset of the other. A family of these sets $\L\subseteq 2^V$ is said to be \emph{laminar} if no pair of sets in it cross. It is well-known that each face $F$ of the perfect matching polytope can be described by the set of allowed edges and a laminar family of tight odd sets $\L$:
\[ F = \set*{x\in \PM\given \begin{array}{ll}
	\dotprod{\1_{\delta(S)}, x}=1&\forall S\in \L,\\
	\dotprod{1_e, x}=0&\forall e\notin E[F]. 
\end{array}}. \]
In fact, $\L$ can be taken to be any \emph{maximal} laminar family of tight odd sets for the given face $F$ \cite[see, e.g.,][Lemma 2.2]{ST17}. (Note that we will always include all singletons $\set{v}$ in the laminar family $\L$ since the equalities $\dotprod{\1_{\delta(v)}, x}=1$ are automatically satisfied over all of $\PM$.) However, there are still potentially many choices for the laminar family $\L$ describing face $F$, so we impose more conditions on $\L$.

\begin{definition}\label{def:laminar-dual}
Suppose we are given a face $F=\PM[w]$. A \emph{laminar optimal dual solution} is a laminar family $\L$ of tight odd sets, including all singletons, together with a function $\pi:\L\to \R$ such that for $S \in \L$, $\pi(S)>0$ whenever $\card{S}>1$ and for all edges $e$
	\[ w_e\geq \sum_{S\in \L: e\in \delta(S)}\pi(S), \]
	with equality for allowed edges.
\end{definition}

This definition gives dual solutions for the linear program $\min\set{\dotprod{w, x}\given x\in \PM}$ that satisfy complimentary slackness and are in \emph{laminar} form. By complimentary slackness, for any such solution, $\sum_{S\in \L}\pi(S)$ is equal to the weight of a MWPM. Laminar optimal dual solutions exist but are still not unique. 

 \Textcite{CGS15} showed that extra conditions can be imposed on laminar optimal dual solution to make it unique. They studied the notion of \emph{balanaced critical dual solutions} and they showed how this unique $\L$ can be found by computing \emph{primal} solutions to the MWPM problem. \Textcite{Sankowski18} used this procedure to design an alternative \NC{} algorithm for planar graph perfect matching. We describe this procedure below. For more details see the work of \textcite{CGS15}. Note that we will not use these rather complex and elaborate extra conditions in any other context, so we will not state them explicitly. 

The following was shown by \textcite{CGS15}.
\begin{lemma}[\cite{CGS15}, Lemma 28]\label{lem:balanced}
	If $E[w]$ is connected, then a balanced critical dual is unique and \cref{alg:balanced} finds its support, the laminar family $\L$.
\end{lemma}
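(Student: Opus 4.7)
The plan is to leverage the structural results of \cite{CGS15}, which constitute the real mathematical content here, and then verify that the resulting procedure falls into \NC{} once we have access to oracle $\O$.

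For uniqueness, I would invoke \cite[Lemma 28]{CGS15} essentially as a black box. Balanced critical duals are defined by layering a specific extremal condition on top of the notion of a laminar optimal dual from \cref{def:laminar-dual}; this extra condition pins down the choice of $(\L, \pi)$, but only when the allowed subgraph is connected. The necessity of connectivity is easy to see directly: if $E[w]$ decomposed into two components $V_1 \sqcup V_2$, then shifting $\pi$ on the singletons of $V_1$ up by some $c$ and on the singletons of $V_2$ down by $c$ preserves the equality at every allowed edge (since no allowed edge crosses) and the inequality at non-allowed edges for small enough $c$, so the dual is never unique. Connectivity is exactly what rules out this slack.

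For correctness of \cref{alg:balanced}, the strategy is to identify $\L$ by running, in parallel, a test on each candidate subset of vertices. The point exploited by \cite{CGS15} and \cite{Sankowski18} is that membership of a set $S$ in the balanced critical $\L$ is determined by relationships among MWPM weights in certain reweightings or contractions of $G$ derived from $S$. Each such weight is a single integer of polylogarithmic bit-length, which \NC{} can compute by binary searching over $\O$. Running all candidate tests in parallel and collecting the sets that pass produces $\L$. Since $\L$ has at most $O(n)$ members, the number of parallel oracle calls is polynomial and the depth is polylogarithmic.

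The main obstacle, and the reason I would not try to reprove the lemma from scratch, is to show that the extra balanced-critical conditions of \cite{CGS15} really do reduce to a polynomial family of MWPM-weight comparisons, rather than quantities that seem to require access to a matching itself. I would lean directly on the translation already carried out by \cite{Sankowski18}, who package this primal-based procedure for their planar algorithm; their steps depend only on MWPM weights and so transport verbatim to the general graph setting as soon as $\O$ is available. Combined with the uniqueness coming from \cite[Lemma 28]{CGS15}, this yields both conclusions.
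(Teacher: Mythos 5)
You are right that the mathematical content of this lemma is imported wholesale from \textcite{CGS15}: the paper itself offers no proof, just the citation, and your plan to invoke Lemma~28 of \cite{CGS15} as a black box matches the paper's treatment. Your sketch of why connectivity is needed for uniqueness (shifting $\pi$ on singletons between components) is a correct and useful observation that the paper does not spell out.

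However, your description of how \cref{alg:balanced} actually works is not what the algorithm does, and the discrepancy matters. You describe it as ``running, in parallel, a test on each candidate subset of vertices'' to decide whether $S$ belongs to $\L$; as stated that is an exponentially large search, and you try to rescue it by noting $\card{\L}=O(n)$, which bounds the output but not the search space. The real algorithm avoids enumerating subsets entirely: it computes, for each vertex $v$, a quantity $\mu(v)$ (the minimum weight of an allowed matching that leaves $v$ and one other vertex uncovered), reweights each allowed edge $e=\set{u,v}$ to $w'_e = w_e + \mu(u) + \mu(v)$, and then for each threshold $t$ in the set of observed $w'_e$ values collects the nontrivial connected components of the subgraph on edges with $w'_e \le t$. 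The laminar family $\L$ is the union of these components over all thresholds. This is a sweep, not a membership oracle; and it is precisely this structure that makes the number of oracle calls polynomial. Separately, you are blending the content of this lemma with the companion fact that the algorithm runs in \NC{} given $\O$; in the paper that computational claim is \cref{lem:laminar}, whose proof reduces the two nontrivial steps (computing $E[w]$ and the $\mu(v)$'s) to polynomially many calls of the form $\O(G-\set{u}-\set{v}, w)$. The present lemma is purely the structural statement from \cite{CGS15} that the balanced critical dual is unique and that this specific threshold procedure recovers its support.
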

\begin{Algorithm}
\caption{Finding a balanced critical dual.}\label{alg:balanced}
	$\L \leftarrow \set{\set{v}\given v\in V}$.\;
	\ParallelFor{$v\in V$}{
		$\mu(v)\leftarrow \min\set{\dotprod{w, \1_M}\given M\subseteq E[w]\text{ is a perfect matching on }V\setminus\set{u, v}\text{ for some vertex }u}.$\;
	}
	Let $w'_e\leftarrow w_e+\mu(u)+\mu(v)$ for each $e\in E[w]$.\;
	\ParallelFor{$t \in \set{w'_e\given e\in E[w]}$}{
		Find the connected components of the graph $(V, \set{e\in E[w]\given w'_e\leq t})$.\;
		Add each nontrivial connected component to $\L$.\;
	}
	\Return $\L$.\;
\end{Algorithm}

It was observed by \textcite{Sankowski18} that all steps of \cref{alg:balanced} can be performed in \NC{} except for finding allowed edges $E[w]$ and the computation of $\mu(v)$'s. We note that using oracle $\O$, both these steps can be also be performed in \NC.

\begin{remark}
	When $E[w]$ is not connected, \cref{alg:balanced} still works but should be run in parallel for \emph{each connected component} of $E[w]$.
\end{remark}

\subsection{Contraction of tight odd sets, matching minors, and triads}

\textcite{Edmonds65} observed that if a collection of tight odd sets are disjoint, one can shrink each one to a single node and obtain a smaller graph whose perfect matchings can be extended to perfect matchings in the original graph. For the sake of completeness we state and prove this fact here.
\begin{fact}\label{fact:contraction}
	Suppose that $F=\PM[w]$ is a face of the matching polytope for $G=(V, E)$ and $S_1,\dots,S_k$ are tight odd sets w.r.t.\ $F$. Let $H$ be obtained from $G$ by removing disallowed edges and contracting each $S_i$ to a single node. Then any perfect matching in $H$ can be extended to a perfect matching in $G$.
\end{fact}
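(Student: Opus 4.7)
The overall strategy is to take a perfect matching $M_H$ in the contracted graph $H$, interpret each edge of $M_H$ as a specific edge of $G$ on the allowed edge set $E[w]$, and then independently fill in the interior of each contracted set $S_i$ by using the fact that $S_i$ is tight. Concretely, for each $i$ let $e_i \in M_H$ be the unique edge of $M_H$ incident to the contracted vertex $v_i$ representing $S_i$. Since $H$ was obtained from $G$ by removing disallowed edges and contracting the pairwise disjoint sets $S_1,\dots,S_k$, the edge $e_i$ corresponds to a unique allowed edge of $G$ which lies in $\delta(S_i)$; call its endpoint in $S_i$ by $u_i$. (If $e_i$ also joins $v_i$ to another contracted vertex $v_j$, the analysis for index $j$ is symmetric.)

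Next I would use the assumption that the $S_i$ are tight to construct an \emph{interior} matching on each $S_i \setminus \{u_i\}$. Because $e_i$ is allowed with respect to $F = \PM[w]$, there is some vertex of $F$ that assigns mass $1$ to $e_i$; equivalently, there is a MWPM $M^*_i$ in $G$ containing $e_i$. Since $S_i$ is tight for $F$, the corresponding inequality is satisfied with equality at every point of $F$, so in particular $\dotprod{\1_{\delta(S_i)}, \1_{M^*_i}} = 1$. Combined with $e_i \in M^*_i \cap \delta(S_i)$, this forces $M^*_i \cap \delta(S_i) = \{e_i\}$. Therefore the restriction $N_i := M^*_i \cap E(G[S_i])$ is a perfect matching of $S_i \setminus \{u_i\}$ using only allowed edges interior to $S_i$.

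Finally I would define $M$ to be the union of the lifts of the edges of $M_H$ (each $M_H$-edge reinterpreted as the corresponding edge in $E[w] \subseteq E$) together with $\bigcup_{i=1}^k N_i$, and verify that $M$ is a perfect matching of $G$ by a vertex-by-vertex degree count. For $v \notin \bigcup_i S_i$, the degree of $v$ in $M$ is exactly its degree in $M_H$, which is $1$. For $v = u_i$, the edge $e_i$ from $M_H$ covers it and no edge of $N_i$ is incident to $u_i$. For $v \in S_i$ with $v \neq u_i$, no lifted edge of $M_H$ is incident to $v$ (the only $M_H$-edge reaching $S_i$ terminates at $u_i$), and $N_i$ covers $v$ exactly once.

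\textbf{Main obstacle.} There is no deep obstacle here; the only nontrivial step is the translation between the dual-side description of the face $F$ and the primal-side existence of useful MWPMs. One must simultaneously exploit two properties of $F$: that $e_i$ being allowed yields a MWPM containing $e_i$, and that $S_i$ being tight forces that MWPM to use only the edge $e_i$ across $\delta(S_i)$. Once this interaction is spelled out, combining pieces into a single perfect matching is routine, and disjointness of the $S_i$ ensures the lifted and interior pieces do not interfere.
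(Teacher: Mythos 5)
Your proposal is correct and follows essentially the same approach as the paper's proof: lift the matching from $H$ into $E[w]$, use allowedness of the unique crossing edge $e_i$ to find a MWPM $M_i^*$ through it, invoke tightness of $S_i$ to force $M_i^* \cap \delta(S_i)=\{e_i\}$, and restrict $M_i^*$ to the interior of $S_i$ to patch in the matching on $S_i\setminus\{u_i\}$. Your version merely spells out the final vertex-by-vertex degree check that the paper leaves implicit.
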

\begin{proof}
	Suppose that $M$ is a perfect matching in $H$. We can think of edges in $M$ as edges in $E$ as well; in fact $M\subseteq E[w]$, because we remove disallowed edges to obtain $H$. Because $M$ is a perfect matching in $H$, for each $S_i$, there is a unique $e_i\in M\cap \delta_G(S_i)$. Now since $e_i$ is an allowed edge, there must be some perfect matching $M_i$ of $G$ such that $e_i\in M_i$ and $\1_{M_i}\in F$. Since $S_i$ is a tight odd set, $M_i$ cannot have any other edge in $\delta(S_i)$, except for $e_i$. So if we look at $\set{\set{u, v}\in M_i\given u, v\in S_i}$, we must have a matching covering all vertices of $S_i$ except for the endpoint of $e_i$. Combining all of these matchings for $i=1,\dots,k$ together with $M$ will give us a perfect matching in $G$ as desired.
\end{proof}

Note that the graph $H$ obtained above is a minor of the graph $G$. But it is not an arbitrary minor. It has the additional property that every perfect matching of it can be extended back to a perfect matching of the original graph. For convenience we name these minors, matching minors.
\begin{definition}\label{def:matching-minor}
	A matching minor $H$ of a graph $G$, is a graph that can be obtained by a sequence of the following operations: Pick a face of the matching polytope and a collection of disjoint tight odd sets. Remove disallowed edges, and contract each tight odd set into a single node.
\end{definition}
The following statement follows directly from \cref{fact:contraction}.
\begin{lemma}\label{lem:matching-minor}
	If $H$ is a matching minor of the graph $G$, then every perfect matching in $H$ can be extended to a perfect matching in $G$.
\end{lemma}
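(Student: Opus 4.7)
The plan is to prove this by induction on the number of matching-minor operations used to obtain $H$ from $G$, using \cref{fact:contraction} as the single-step building block.

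First, I would set up the induction. By \cref{def:matching-minor}, there is a sequence $G = G_0, G_1, \dots, G_k = H$ of graphs such that each $G_{i+1}$ is obtained from $G_i$ by choosing a face $F_i$ of the perfect matching polytope of $G_i$, choosing a collection of disjoint tight odd sets w.r.t.\ $F_i$, deleting the disallowed edges, and contracting each chosen tight odd set to a single vertex. The base case $k=0$ is trivial since $H=G$. For the inductive step, assume that any perfect matching of $G_{i+1}$ can be extended to a perfect matching of $G_i$. Given a perfect matching $M$ of $H = G_k$, I apply the hypothesis along the chain: extend $M$ to a perfect matching $M_{k-1}$ of $G_{k-1}$, then to $M_{k-2}$ of $G_{k-2}$, and so on, obtaining eventually a perfect matching of $G_0 = G$.

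The only content in the induction step is the single-hop claim: a perfect matching in $G_{i+1}$ extends to a perfect matching in $G_i$. This is precisely \cref{fact:contraction} applied to the graph $G_i$ with the face $F_i$ and the collection of disjoint tight odd sets chosen at step $i$. The proof of that fact constructs the extension explicitly: each contracted set $S$ has exactly one matching edge $e_S$ leaving it in $M_{i+1}$, and since $e_S$ is allowed w.r.t.\ $F_i$, we can pick any perfect matching $M_i^S$ of $G_i$ witnessing this, whose restriction to $S$ (removing $e_S$) gives a perfect matching of $S$ minus the endpoint of $e_S$; gluing these restrictions to $M_{i+1}$ yields the desired perfect matching of $G_i$.

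I do not anticipate any real obstacle here; the lemma is essentially a formal iteration of \cref{fact:contraction}. The only thing to be careful about is that at each step the face and the tight odd sets are taken with respect to the \emph{current} graph $G_i$, not the original $G$ — but this is built into the definition of matching minor, so no additional work is required. The whole proof is short enough to be written in a single paragraph once the inductive framing is set up.
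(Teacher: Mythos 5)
Your proof is correct and is essentially the paper's proof made explicit: the paper simply remarks that the lemma ``follows directly from Fact~\ref{fact:contraction},'' and your induction on the length of the operation sequence is the obvious formalization of that iteration. No gap, no divergence in approach.
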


In our algorithms, we use the simple observation that a path of length $2$ on vertices of degree $2$ yields a tight odd set for the entire matching polytope. We call these paths triads.

\begin{definition}
\label{def:triad}
	A \emph{triad} in graph $G=(V, E)$ is a set of three vertices $\set{a, b, c}$ such that $\deg(a)=\deg(b)=\deg(c)=2$, and $\set{a, b}, \set{b, c}\in E$. 
\end{definition}

\begin{lemma}\label{lem:triad}
	A triad $\set{a, b, c}$ is a tight odd set for the matching polytope and all of its faces.
\end{lemma}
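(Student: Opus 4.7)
The plan is to show that the odd-set inequality $\dotprod{\1_{\delta(S)}, x} \geq 1$ corresponding to $S=\set{a,b,c}$ is in fact an equality on all of $\PM$. Since $\card{S}=3$ is odd, this establishes that $S$ is a tight odd set for $\PM$, and tightness for every face follows automatically since each face lies inside $\PM$. Moreover, we may assume $\PM\ne\emptyset$: otherwise every inequality is trivially tight and there is nothing to prove.

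First I would unpack the local structure forced by the degree-$2$ hypothesis. The edges $\set{a,b}$ and $\set{b,c}$ already account for both edges incident to $b$, so $b$ has no edge in $\delta(S)$. Vertex $a$ has exactly one further incident edge $e_a$, and vertex $c$ has exactly one further incident edge $e_c$. The only way one of $e_a,e_c$ could fail to cross $S$ is if $e_a=e_c=\set{a,c}$; but then $\set{a,b,c}$ would be an isolated triangle component of $G$, which has no perfect matching and contradicts $\PM\ne\emptyset$. Therefore $\delta(S)=\set{e_a,e_c}$.

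Next I would combine the degree equalities already present in the description \eqref{eq:perfect-matching-polytope} of $\PM$. For any $x\in\PM$,
\[ x_{\set{a,b}} + x_{e_a} = 1, \qquad x_{\set{a,b}} + x_{\set{b,c}} = 1, \qquad x_{\set{b,c}} + x_{e_c} = 1. \]
Adding the first and the third and subtracting the second yields $x_{e_a} + x_{e_c} = 1$, i.e.\ $\dotprod{\1_{\delta(S)}, x}=1$. Hence the odd-set inequality for $S$ is an equality throughout $\PM$, which is exactly what it means for $S$ to be tight.

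The only real obstacle is the triangle-component case; once that is excluded by the non-emptiness of $\PM$, the argument is just a three-line linear combination of degree constraints, so no further subtlety is expected.
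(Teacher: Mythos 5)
Your proof is correct, but it takes a genuinely different route from the paper's. The paper argues combinatorially at the level of individual perfect matchings: since $\deg(b)=2$, in any perfect matching $b$ is matched to $a$ or $c$; the remaining vertex of $\set{a,c}$ must then be matched to a vertex outside the triad, and that is the unique edge of the matching in $\delta(\set{a,b,c})$. Tightness over all of $\PM$ follows because the constraint holds with equality at every vertex of the polytope and hence on every convex combination. You instead argue linear-algebraically inside Edmonds' LP description~\eqref{eq:perfect-matching-polytope}: form the combination (degree equality at $a$) plus (degree equality at $c$) minus (degree equality at $b$), which telescopes to $x_{e_a}+x_{e_c}=1=\dotprod{\1_{\delta(S)},x}$ once the internal edges $\set{a,b},\set{b,c}$ cancel and the triangle case $e_a=e_c=\set{a,c}$ is ruled out by $\PM\ne\emptyset$. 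Both arguments are sound; yours buys a cleaner ``purely polyhedral'' derivation, but note that it uses the degree-$2$ hypothesis at all three vertices to get the two-term degree constraints at $a$ and $c$, whereas the paper's proof (as its ensuing remark emphasizes) requires only $\deg(b)=2$ — the latter leaves the extra assumptions available for the separate counting argument that guarantees \emph{many} triads in a sparse graph.
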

\begin{proof}
	The only two neighbors of $b$ are $a, c$. So in every perfect matching, $b$ must be matched to one of them. The other vertex must have an edge to an outside vertex, and in fact that is the only possible edge in $\delta(\set{a, b, c})$.
\end{proof}
\begin{remark}
	Note that the proof of \cref{lem:triad} does not use the assumptions $\deg(a)=\deg(c)=2$ and only uses $\deg(b)=2$. We will use these extra assumptions elsewhere, to prove that in certain situations, we can find many triads in our graph.
\end{remark}

\subsection{Even walks and weight vectors}
\label{sec:walks}

Even walks were defined in \cref{def:walk}. For an even walk $C$, define the \emph{signature} of $C$ to be the vector:
\[ \sign(C) = \sum_{i=1}^{2k} {(-1)^i \1_{e_i}} . \]
The notions of circulation and mismatch can be stated in terms of signature: 
 \[\crc_w(C) = |\dotprod{w, \sign(C)}| \]
 \[ \mis(C, S) = |\dotprod{\1_{\delta(S)}, \sign(C)}| \]

Now, there cannot be two distinct points $x, y\in \PM[w]$ whose difference $x-y$ is a multiple of $\sign(C)$, since otherwise we would have $\dotprod{w, x}\neq \dotprod{w, y}$. Another way of stating this is that if $x\in \PM[w]$, then $x+\epsilon\sign(C)\notin \PM[w]$ for any $\epsilon\neq 0$. So, some inequality or equality describing $\PM[w]$ must be violated for this point. If we pick $x$ to be in the relative interior of the face $\PM[w]$ we will have some slack for non-tight inequalities describing $\PM[w]$. So the violated constraint for $x+\epsilon\sign(C)$ must be a constraint that is tight for the entire face $\PM[w]$. This implies that:

\begin{lemma}
\label{lem:walk-destroy}
	Let $C$ be an even walk with $\crc_w(C) > 0$. Then either there is an edge $e\in C$ that is disallowed, i.e., $e\notin E[w]$, or for any laminar dual $(\L, \pi)$ describing $\PM[w]$, there is some set $S$ such that $\mis(C, S) > 0$.
\end{lemma}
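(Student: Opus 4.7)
The plan is to argue by contrapositive: assuming every edge of $C$ is allowed and that $\mis(C,S) = 0$ for every $S\in \L$, I will derive $\crc_w(C) = 0$, contradicting the hypothesis.

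The key tool is the complementary-slackness identity built into the definition of a laminar optimal dual solution (\cref{def:laminar-dual}): for every allowed edge $e$,
\[ w_e \;=\; \sum_{S\in \L,\, e\in \delta(S)} \pi(S). \]
Since by assumption every edge appearing in the walk is allowed, and $(\sign(C))_e = 0$ for edges $e$ outside the walk, I can substitute this identity into $\dotprod{w, \sign(C)}$ and swap the order of summation to obtain
\[ \dotprod{w, \sign(C)} \;=\; \sum_{S\in \L} \pi(S)\,\dotprod{\1_{\delta(S)}, \sign(C)}. \]

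It then remains to check that every summand on the right vanishes. For a non-singleton $S\in \L$, $|\dotprod{\1_{\delta(S)}, \sign(C)}| = \mis(C,S) = 0$ by assumption, so the term is zero. For a singleton $\set{v}$, the closedness of $C$ forces $\dotprod{\1_{\delta(v)}, \sign(C)} = 0$: at each visit of $C$ to $v$ the walk uses two consecutive edges of $\delta(v)$, whose contributions $(-1)^i$ and $(-1)^{i+1}$ cancel in the alternating sum; the start/end visit also produces a cancelling pair because $C$ has even length $2k$, pairing $e_1$ (odd-indexed) with $e_{2k}$ (even-indexed). Hence $\crc_w(C) = |\dotprod{w, \sign(C)}| = 0$, the desired contradiction.

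The only delicate step is this singleton cancellation, particularly checking that the start/end visit of the closed walk also pairs two edges of opposite parity in the walk ordering; after that combinatorial observation, the entire argument is a routine rearrangement powered by LP duality. I note that the paper's preceding paragraphs sketch an alternative perturbation-based argument (moving from a relative-interior point $x$ by $\epsilon\sign(C)$ and identifying which tight constraint is violated), but the dual-side computation above seems cleaner because it directly targets membership of $S$ in the chosen laminar family $\L$ rather than in the set of all tight odd sets.
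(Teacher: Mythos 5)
Your proof is correct, and it takes a genuinely different route from the one the paper sketches. The paper's argument is a \emph{primal} perturbation: take a point $x$ in the relative interior of $\PM[w]$, note that $x+\epsilon\sign(C)$ must leave the face, and use the slack in all non-tight constraints to conclude the violated constraint is one that is tight for the whole face. Your argument is \emph{dual}: substitute the complementary-slackness identity $w_e=\sum_{S\ni e\text{ crossed}}\pi(S)$ into $\dotprod{w,\sign(C)}$, swap summation order, and kill each term — non-singletons by the hypothesis $\mis(C,S)=0$, singletons by the pairing of consecutive (and wrap-around) edges with opposite parities. Both are standard incarnations of LP duality, but yours has a concrete advantage that you correctly identify: it produces the witness $S$ inside the \emph{specific} laminar family $\L$ fixed by the chosen dual, which is exactly what the lemma quantifies over and what the algorithm needs (it shrinks sets of $\L$, not arbitrary tight odd sets). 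The paper's primal sketch lands first on ``some tight constraint of $\PM[w]$'' and would need a further step (e.g., uncrossing, or the fact that $\L$ is maximal) to conclude $S\in\L$; indeed the paper defers details to \cite{AV18}. Your singleton cancellation is handled correctly, including the wrap-around pair $(e_1,e_{2k})$ and the case of edges repeated in the walk, since the pairing is by walk \emph{positions} rather than by edges, so repeated edges are counted once per incidence and always paired off with a partner of opposite parity.
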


For a more detailed proof of this, see \cite{AV18}. Note that if $\mis(C, S) > 0$, then $C$ must have an edge with both endpoints inside $S$.

\begin{lemma}
\label{lem:walk-lose-edge}
	Suppose that $C$ is an even walk and $S$ is a tight odd set such that $\mis(C, S) > 0$. Then there is an edge $e=\set{u, v}\in C$ such that $u, v\in S$.
\end{lemma}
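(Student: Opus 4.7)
The plan is to argue by contradiction, proving the contrapositive: if every edge of $C$ has at most one endpoint in $S$, then $\mis(C, S) = 0$. I will write the walk as a vertex sequence $v_0, v_1, \ldots, v_{2k} = v_0$ with $e_i = \set{v_{i-1}, v_i}$. Under the contrapositive hypothesis, whenever $v_i \in S$ (indices taken cyclically), both $v_{i-1}$ and $v_{i+1}$ lie outside $S$; equivalently, the two walk-edges incident to that occurrence of $v_i$, namely $e_i$ and $e_{i+1}$, both lie in $\delta(S)$.

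The main step is to observe that this structure induces a perfect pairing of the walk positions at which $\delta(S)$-edges occur. Every position $j$ with $e_j \in \delta(S)$ has exactly one endpoint in $S$, so there is a unique $v_i \in \set{v_{j-1}, v_j}$ that ``claims'' it; and pairs $(e_i, e_{i+1})$ coming from distinct indices with $v_i \in S$ are position-disjoint, because two cyclically consecutive walk vertices cannot both be in $S$ (else the edge between them would lie entirely inside $S$, contradicting the hypothesis). Within each pair the two positions have consecutive indices, hence opposite parity, contributing one to $n_o$ and one to $n_e$. Summing over pairs gives $n_o = n_e$, so $\mis(C, S) = 0$, contradicting the assumption $\mis(C, S) > 0$.

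The only mildly subtle point I anticipate concerns even walks of the two-odd-cycles-joined-by-a-path type, where some edge of the graph may appear twice in the walk. This is not an actual obstacle: the quantities $n_o$ and $n_e$, as well as the pairing above, are defined on positions in the walk with multiplicity (per the discussion following \cref{def:walk}), so the counting goes through unchanged. Equivalently, one can phrase the argument using the signature vector, writing $\dotprod{\1_{\delta(S)}, \sign(C)}$ as a sum over positions with $v_i \in S$ of the terms $(-1)^i + (-1)^{i+1} = 0$; this makes both the pairing and the multiplicity bookkeeping explicit.
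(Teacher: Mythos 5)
Your argument is correct and is essentially the paper's proof: both take the contrapositive (no edge of $C$ internal to $S$), observe that each entry into $S$ is immediately followed by an exit, pair each such enter/exit as two consecutive positions of opposite parity, and conclude $n_o = n_e$. You spell out the pairing and the multiplicity bookkeeping for repeated edges more carefully than the paper does, but there is no difference in the underlying idea.
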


\begin{proof}
	If this is not true, then every time $C$ enters $S$ it must immediately exit. So if we compute $\mis(C, S)$ by looking at edges that cross $S$, we always get a $+1$ followed by a $-1$, and a $-1$ followed by a $+1$. So the entire sum would be $0$ which is a contradiction.
\end{proof}

We also borrow from \textcite{FGT16} the following important result, which is also stated in \textcite{ST17} and as \cref{fact:bipartite} in this paper.
\begin{lemma}[\cite{FGT16}]\label{lem:w-exists}
	There is a polynomial sized family of polynomially bounded weight vectors $\W$, such that for any set of edge disjoint even walks $C_1,\dots,C_k$, there is some $w\in \W$ which ensures
	\[ \forall i: \crc_w(C) > 0. \]
\end{lemma}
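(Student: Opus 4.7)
The plan is to adapt the standard polynomial-hashing derandomization of the Isolation Lemma (as used in \textcite{MVV87} and \textcite{FGT16}). Two ingredients are already available from \cref{sec:walks}: for any even walk $C$, the signature $\sign(C)\in\set{-2,-1,0,1,2}^E$ is non-zero, and $\crc_w(C)=|\dotprod{w,\sign(C)}|$. Edge-disjointness of $C_1,\dots,C_k$ forces $k\leq m$, where $m:=\card{E}$, which is exactly what will let me finish with a union bound.

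Concretely, I would fix an ordering $e_1,\dots,e_m$ of $E$ and pick a prime $p$ with $2m^2\leq p\leq 4m^2$ (which exists by Bertrand's postulate). For each $c\in\set{1,\dots,p-1}$, define $w^{(c)}\in\R^E$ by $w^{(c)}_{e_j}=(c^j\bmod p)$ and set $\W=\set{w^{(c)}\given c\in\set{1,\dots,p-1}}$. Then $\card{\W}=O(m^2)$ and every coordinate of every $w^{(c)}$ is at most $p=O(m^2)$, so $\W$ is polynomial-sized with polynomially bounded entries.

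To each $C_i$, I would associate the univariate polynomial $P_i(x)=\sum_{j=1}^{m}\sign(C_i)_{e_j}\,x^j\in\mathbb{F}_p[x]$. Because $\sign(C_i)\neq 0$ and its coefficients lie in $\set{-2,-1,0,1,2}$ (all non-zero modulo $p$, since $p\geq 5$), $P_i$ is a non-zero polynomial of degree at most $m$, and so has at most $m$ roots in $\mathbb{F}_p$. A union bound over the $k\leq m$ walks leaves at most $km\leq m^2<p-1$ values of $c$ that are ``bad'' for some $i$; any good $c$ then satisfies $\dotprod{w^{(c)},\sign(C_i)}\equiv P_i(c)\not\equiv 0\pmod{p}$, whence $\crc_{w^{(c)}}(C_i)>0$ for every $i$, as required.

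The only genuine subtlety, and the closest thing to an obstacle, is verifying $\sign(C)\neq 0$ for every even walk $C$. In a simple even cycle, each edge contributes $\pm 1$, so the signature is clearly non-zero. In the odd-odd-with-path case, the two traversals of each path edge sit at positions of equal parity (the number of steps between them is even, being the length of one odd cycle plus the length of the remaining path traversal), so they contribute $\pm 2$ rather than cancelling; since the odd-cycle edges each contribute $\pm 1$ on top of this, the signature is again non-zero. This parity observation is already implicit in the discussion right after \cref{def:walk}, so no new work is needed and the polynomial-hashing skeleton does the rest.
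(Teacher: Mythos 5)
Your proof is correct and gives a self-contained reconstruction of the \textcite{FGT16} hashing result that the paper simply cites: the paper's own ``proof'' is a one-line hypothesis check, noting that $\sign(C_i)$ is a nonzero vector with coordinates bounded by $2$ and that edge-disjointness caps $k$ at $\card{E}$, so the cited derandomization applies. Your explicit hash (powers of $c$ modulo a single prime $p=\Theta(m^2)$, sweeping $c$) is a minor Reed--Solomon-style variant of the construction used in \cite{FGT16}, but the underlying polynomial-hashing argument is the same, and you correctly isolate and verify the only nontrivial prerequisite, namely $\sign(C)\neq 0$ via the parity observation already noted after \cref{def:walk}.
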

\begin{proof}
	This lemma is actually proved in \cite{FGT16, ST17} for any collection of nonzero vectors, not just $\sign(C_i)$'s, as long as there is both a polynomial bound on the number of vectors and the absolute value of their coordinates. Edge-disjointness of even walks automatically puts a bound of $\card{E}$ on their number, and the coordinates of our even walks are always bounded in absolute value by $2$.
\end{proof}

\subsection{Maximal independent sets}

Given a graph $G=(V, E)$, we call a subset $S\subseteq V$ independent if no edge $e\in E$ has both endpoints in $S$. We call an independent set maximal if no strict superset $T\supsetneq S$ is independent. We will crucially use the fact that maximal independent sets can be found in \NC{}.
\begin{theorem}[\cite{Luby86}]\label{thm:luby}
	There is a deterministic \NC{} algorithm that on input graph $G=(V, E)$ returns a maximal independent set $S\subseteq V$.
\end{theorem}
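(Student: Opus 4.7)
The plan is to follow Luby's two-step strategy: first exhibit a randomized parallel algorithm whose analysis only requires pairwise independence, then derandomize it by enumerating over a small pairwise-independent sample space.

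For the randomized part, I would use the following round: in parallel, each vertex $v$ marks itself independently with probability $1/(2\deg(v))$ (vertices of degree $0$ join the MIS automatically). In parallel, a marked vertex that has no marked neighbor of larger-or-equal degree is added to the growing MIS $S$, and then $S$ together with its neighborhood is removed from $G$. I would then prove the key progress lemma: in expectation, a constant fraction of the edges of the current graph are deleted in each round. The standard way to do this is to define an edge $e=\set{u,v}$ to be ``good'' if either $u$ or $v$ has a neighbor of lower-or-equal degree that gets marked, show that at least half of the edges are good by a degree-counting argument, and then show that each good edge is deleted with constant probability. The analysis of this probability only inspects pairs of marking events at a time, so it goes through as long as the marking bits are pairwise independent.

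Given this pairwise-independence-only analysis, derandomization is straightforward in \NC. I would replace the independent marking coins of the $n$ vertices by a pairwise independent distribution on $\set{0,1}^n$ generated from $O(\log n)$ truly random seed bits (e.g., using the standard $\F_2$-linear construction or a BCH-based construction); the sample space has polynomial size $n^{O(1)}$. In \NC, I would enumerate all seeds of this sample space in parallel, run one round of Luby's algorithm for each seed in parallel, count the number of edges removed under each seed, and select (say, lexicographically first) the seed that removes at least the expected number of edges. Such a seed is guaranteed to exist by the probabilistic argument, and selecting it is an \NC{} operation since it is just a max/arg-max over a polynomially-sized list.

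Finally I would iterate: since a constant fraction of edges is removed per round, after $O(\log \card{E}) = O(\log n)$ rounds the graph has no edges, at which point all remaining vertices are added to $S$ and we output a maximal independent set. Each round runs in $O(\log^{O(1)} n)$ parallel time with a polynomial number of processors (dominated by evaluating all seeds in parallel and computing connected-component-like statistics per seed), so the whole algorithm is in \NC. The main obstacle is genuinely the derandomization step: one must be careful that the progress lemma really only uses pairwise independence of the marking events, and that the chosen pairwise-independent generator can itself be evaluated in \NC{} on all seeds simultaneously; both are routine but must be checked. As this statement is quoted from \cite{Luby86}, I would defer to that reference for the details I have only sketched here.
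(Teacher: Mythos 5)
The paper does not prove this statement; it is imported verbatim as a black-box citation to \cite{Luby86}, so there is no in-paper proof to compare against. Your sketch is a faithful account of Luby's deterministic algorithm (randomized marking with probability $1/(2\deg(v))$, a progress lemma relying only on pairwise independence, exhaustive search over a polynomial-size pairwise-independent sample space, and $O(\log n)$ rounds), modulo the usual small care needed in the tie-breaking rule for adjacent marked vertices of equal degree and in stating the ``good vertex/good edge'' definition structurally rather than in terms of the random marking—both of which are handled in the cited reference exactly as you indicate.
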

We usually want a large, rather than a maximal, independent set. We will use the fact that in bounded degree graphs, any maximal independent set is automatically large.
\begin{fact}\label{fact:bounded-degree-ind}
	If $G=(V, E)$ is a graph with $\deg(v)\leq \Delta$ for all $v\in V$, then any maximal independent set $S\subseteq V$ satisfies
	\[ \card{S}\geq \frac{\card{V}}{\Delta+1}. \]
\end{fact}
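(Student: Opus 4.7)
The plan is a short double-counting argument that exploits maximality together with the degree bound. First I would unpack what maximality means operationally: if $S$ is a maximal independent set, then no vertex $v \in V \setminus S$ can be added to $S$ without destroying independence, which means every such $v$ has at least one neighbor in $S$. Equivalently, the closed neighborhood $N[S] = S \cup N(S)$ covers all of $V$.

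Next I would bound $|N[S]|$ from above. Since every vertex of $S$ has degree at most $\Delta$, the set of neighbors $N(S)$ satisfies $|N(S)| \leq \sum_{v \in S} \deg(v) \leq \Delta \cdot |S|$. Combining with $V = S \cup N(S)$ gives
\[ |V| \leq |S| + |N(S)| \leq (\Delta + 1) \cdot |S|, \]
from which the claimed bound $|S| \geq |V|/(\Delta+1)$ follows by rearrangement.

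There is no real obstacle here; the statement is a textbook fact whose only subtlety is recognizing that it is \emph{maximality} (not maximum cardinality) that forces $N[S] = V$. I would simply present the two displayed inequalities above and conclude. Because the paper only needs this fact to lift \cref{thm:luby} into a \emph{large} independent set whenever the input graph has bounded degree, no sharpening or constant tracking is needed beyond the stated $1/(\Delta+1)$.
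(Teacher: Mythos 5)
Your argument is correct and is the standard textbook proof: maximality forces $N[S]=V$, and the degree bound gives $\card{N(S)}\le\Delta\card{S}$, so $\card{V}\le(\Delta+1)\card{S}$. The paper states this fact without proof, and your reasoning matches exactly what one would expect to fill in.
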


	\section{The Decision Oracle}\label{sec:oracle}

We will assume that our algorithm is equipped with an oracle $\O$ which answers the following type of queries: Given a graph $G=(V, E)$ and small weights $w\in \Z^E$, what is the weight of a MWPM in $G$? We denote the answer by
\[\O(G, w)=\min\set{\dotprod{w, x}\given x\in \PM_G}.\]
	
We now list several deterministic \NC{} primitives based on $\O$. Versions of these two lemmas appear implicitly, stated for planar graphs, in \textcite{Sankowski18}, but we prove them for the sake of completeness.
\begin{lemma}\label{lem:support}
	Given access to $\O$, for polynomially bounded $w\in \Z^E$, one can find $E[w]$ in \NC.
\end{lemma}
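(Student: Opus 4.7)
The plan is to reduce the question ``is $e \in E[w]$?'' to a single oracle call per edge, and run all of these queries in parallel. First I would call $W^* \leftarrow \O(G, w)$ once to obtain the weight of a MWPM in $G$ (if $G$ has no perfect matching, then $E[w] = \emptyset$ and we are done). Then, for each edge $e = \{u, v\} \in E$ in parallel, I would call the oracle on the graph $G - \{u, v\}$ (the graph obtained by deleting $u$ and $v$ and all edges incident to them) with the restricted weight function, obtaining a value $W_{uv}$.

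The correctness criterion is as follows: an edge $e = \{u, v\}$ is allowed if and only if there exists a perfect matching $M$ of $G$ with $\dotprod{w, \1_M} = W^*$ and $e \in M$. Such an $M$ exists if and only if $M \setminus \{e\}$ is a perfect matching of $G - \{u, v\}$ of weight $W^* - w_e$. Conversely, given any perfect matching $M'$ of $G - \{u, v\}$, the set $M' \cup \{e\}$ is a perfect matching of $G$ of weight $\dotprod{w, \1_{M'}} + w_e$. Therefore $e \in E[w]$ if and only if $G - \{u, v\}$ admits a perfect matching and $W_{uv} + w_e = W^*$. After all oracle calls return, we simply include in $E[w]$ exactly those edges passing this test; this is an $O(1)$-depth arithmetic comparison per edge and can be done in \NC{}.

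To handle the case where $G - \{u, v\}$ has no perfect matching, I would either adopt the convention that $\O$ returns $+\infty$ (or a sentinel value larger than the polynomial bound on MWPM weights) in that case, or equivalently add a preliminary oracle call asking whether $G - \{u, v\}$ has any perfect matching (reducible to $\O$ by setting weights to zero and asking whether the minimum weight is $0$, or by using binary search if only the decision version is available). Since $w$ is polynomially bounded, all comparisons involve numbers of $O(\log n)$ bit-length, which poses no obstacle.

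I do not expect any serious obstacle: the whole argument is just the observation that ``$e$ lies on a MWPM'' is equivalent to a single MWPM-weight query on a vertex-deleted minor, and the $|E|$ such queries are completely independent and can be fired off in a single parallel layer. The only thing to be careful about is making sure the oracle's input convention accommodates possibly unmatchable subgraphs, which is handled as above.
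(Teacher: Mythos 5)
Your proof is correct and is essentially the same as the paper's: the paper checks, in parallel for each edge $e=\{u,v\}$, whether $\O(G,w)=w_e+\O(G-\{u\}-\{v\},w)$, which is exactly your criterion. Your extra care about the convention when $G-\{u,v\}$ has no perfect matching is a reasonable detail the paper leaves implicit, but does not change the argument.
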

\begin{proof}
	An edge $e=\set{u, v}$ can be in a MWPM if and only if $\O(G, w)=w_e+\O(G-\set{u}-\set{v}, w)$, where $G-\set{u}-\set{v}$ is obtained from $G$ by removing vertices $u, v$. This can be checked in parallel for all edges $e$.
\end{proof}
\begin{lemma}
\label{lem:laminar}
	Given access to $\O$, for polynomially bounded $w\in \Z^E$, one can run \cref{alg:balanced} in \NC.
\end{lemma}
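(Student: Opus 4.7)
The plan is to implement each step of \cref{alg:balanced} either by standard \NC{} primitives or by polynomially many parallel queries to $\O$. Initializing $\L$ with the singleton sets is trivial, and by \cref{lem:support} the allowed-edge set $E[w]$ can be computed from $\O$ in \NC{}, so $E[w]$ is available for the remainder of the algorithm.

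The heart of the argument is the parallel computation of the values $\mu(v)$. Unpacking the definition,
\[ \mu(v) = \min_{u \neq v}\; \min\set{\dotprod{w, \1_M} \given M \subseteq E[w] \text{ is a perfect matching on } V \setminus \set{u, v}}. \]
For each ordered pair $(u, v)$ with $u \neq v$, let $G_{u,v}$ denote the graph on vertex set $V \setminus \set{u, v}$ whose edges are those of $E[w]$ with both endpoints outside $\set{u, v}$. The inner minimum then equals $\O(G_{u,v}, w)$, or is declared $+\infty$ when $G_{u,v}$ has no perfect matching---a condition detected by a single decision query against a sufficiently large $W$. Dispatching these $O(\card{V}^2)$ oracle queries in parallel and then taking, for each $v$, a minimum over $u$ produces all the $\mu(v)$'s in \NC.

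Once the $\mu$-values are in hand, the modified weights $w'_e = w_e + \mu(u) + \mu(v)$ are obtained edge-by-edge in parallel by simple arithmetic. Since $w$ and every $\mu(v)$ are polynomially bounded (each $\mu(v)$ is the weight of some integer matching with polynomially bounded entries), the threshold set $T = \set{w'_e \given e \in E[w]}$ has polynomial size. For each $t \in T$ in parallel, I form the subgraph $(V, \set{e \in E[w] \given w'_e \leq t})$ and extract its connected components via a standard \NC{} undirected-connectivity algorithm; the nontrivial components are inserted into $\L$, with duplicate sets coalesced by a routine parallel deduplication.

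The only real conceptual step is replacing the implicit MWPM computations inside $\mu(v)$ by explicit, parallel queries to $\O$; everything else is either standard \NC{} arithmetic and connectivity, or a direct appeal to \cref{lem:support}. Since all oracle calls and connectivity subroutines run independently on polynomially many subproblems, the overall depth stays polylogarithmic, which yields the claimed \NC{} implementation of \cref{alg:balanced}.
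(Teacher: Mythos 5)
Your proof is correct and follows essentially the same approach as the paper: use \cref{lem:support} to get $E[w]$, unpack $\mu(v)$ as a parallel minimum over $O(\card{V}^2)$ oracle calls $\O(G_{u,v},w)$, and handle the arithmetic and connected-components steps with standard \NC{} primitives. The paper is slightly terser (it cites \textcite{Sankowski18} for the claim that all steps other than $E[w]$ and $\mu(v)$ are already in \NC{}), whereas you spell out each step explicitly, but the substance is the same.
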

\begin{proof}
	As was observed by \textcite{Sankowski18}, all steps of \cref{alg:balanced} can be run in \NC{} except for finding $E[w]$ and computing $\mu(v)$. Given access to $\O$, we can find $E[w]$ in \NC{} by \cref{lem:support}. Furthermore observe that for any $v\in V$
	\[ \mu(v)=\min\set{\O(G-\set{u}-\set{v}, w)\given u\in V-\set{v}}, \]
	which can be computed by making all queries $\O(G-\set{u}-\set{v}, w)$ in parallel and then taking the minimum.
\end{proof}

An implementation for the oracle, in \RNC{} with arbitrarily small inverse polynomial probability of error for general graphs, follows from \textcite{MVV87}, since they give an \RNC{} algorithm for finding a MWPM for small weights. Since $\O$ is promised to be called at most polynomially many times, the probability of error over the entire run of the algorithm can be made inverse polynomially small.

	\section{Structural Facts}
\label{sec:strucrture}

Our algorithm requires two structural facts, one for the case that the graph $G$ is very sparse and the other for the complementary case. They are encapsulated in \cref{lem:many-triads,lem:many-walks}.

\begin{definition}
\label{def:sparse}
	A connected graph $G = (V, E)$ is said to be \emph{very sparse} if $\card{E}<\card{V}/(1-\epsilon)$, for some constant $\epsilon<1/9$.
\end{definition}

\begin{lemma}\label{lem:many-triads}
	If $G=(V, E)$ is a matching-covered, very sparse graph, then the number of triads in any maximal set of node-disjoint triads in $G$ is at least $c_1\card{E}$, for some constant $c_1(\epsilon)>0$.
\end{lemma}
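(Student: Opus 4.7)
The plan is to exploit the very-sparse hypothesis to force most vertices to have degree $2$ with both neighbors also of degree $2$, and then to convert this abundance of potential triads into a lower bound on $\card{T}$ using the maximality of the packing. First I would note that, since $G$ is matching-covered and connected with $\card{V}>2$ (the case $\card{V}=2$ being trivial), every vertex has degree at least $2$: an isolated vertex cannot be matched, and a degree-$1$ vertex $v$ with neighbor $u$ would force $\set{u,v}$ into every perfect matching, which by matching-coveredness would collapse $\deg(u)$ to $1$ and make $\set{u,v}$ its own component, contradicting connectedness. Writing $n_2$ and $n_{\geq 3}$ for the number of degree-$2$ and degree-$\geq 3$ vertices, the identity $2\card{E}\geq 2n_2+3n_{\geq 3}=2\card{V}+n_{\geq 3}$ together with $\card{E}<\card{V}/(1-\epsilon)$ gives $n_{\geq 3}<2\epsilon\card{V}/(1-\epsilon)$, and consequently $n_2>(1-3\epsilon)\card{V}/(1-\epsilon)$.

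Next, let $W$ denote the set of degree-$2$ vertices whose both neighbors also have degree $2$ — these are precisely the possible middle vertices of triads. A degree-$2$ vertex outside $W$ is incident to at least one edge into $V_{\geq 3}$, and since each such edge has a unique endpoint among degree-$2$ vertices, the number of degree-$2$ vertices missing from $W$ is at most $\sum_{v\in V_{\geq 3}}\deg(v)=2\card{E}-2n_2$. Combining with the preceding estimates yields $\card{W}\geq 3n_2-2\card{E}>(1-9\epsilon)\card{V}/(1-\epsilon)$, which is strictly positive once $\epsilon<1/9$.

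Finally, let $T$ be any maximal node-disjoint triad packing and let $V(T)=\bigcup_{\tau\in T}\tau$. For each $b\in W$ with neighbors $a,c$, the triple $\set{a,b,c}$ is a triad, so maximality forces $\set{a,b,c}\cap V(T)\neq\emptyset$. A fixed $v\in V(T)$ necessarily has degree $2$ in $G$ and therefore lies in at most three such triads — at most one as middle (when $v\in W$) and at most one for each of its two neighbors that lie in $W$ — so $3\card{V(T)}\geq \card{W}$ and hence $\card{T}=\card{V(T)}/3\geq \card{W}/9$. Converting back via $\card{V}>(1-\epsilon)\card{E}$ gives $\card{T}>(1-9\epsilon)\card{E}/9$, so $c_1=(1-9\epsilon)/9$ suffices.

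The main obstacle is extracting the bound $\card{W}=\Omega(\card{V})$: one must carefully identify which degree-$2$ vertices fail to be interior (namely those adjacent to a degree-$\geq 3$ vertex), charge them to edges crossing into $V_{\geq 3}$, and verify that the sparsity slack remains large enough to keep the final estimate strictly positive. Once this linear-algebraic book-keeping is in place, the maximality covering argument (each vertex of $V(T)$ covers $O(1)$ potential triads) is routine.
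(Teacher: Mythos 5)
Your proof is correct and has the same two-step skeleton as the paper's (show there are $\Omega(\card{E})$ potential triads in a very sparse matching-covered graph, then convert that abundance into a large node-disjoint packing via a bounded-overlap argument), but you realize both steps differently. For the first step, the paper isolates \cref{lem:path22} and proves it with a fractional charging scheme, distributing each edge's unit budget between its endpoints in proportions $\frac12/\frac12$ or $\frac58/\frac38$, to get the clean bound $9\card{V}-8\card{E}$ on the number of triad centers; you instead count directly with two degree inequalities, first bounding $n_{\geq 3}$ via $2\card{E}\geq 2\card{V}+n_{\geq 3}$, then charging each degree-$2$ vertex outside $W$ injectively to an edge into $V_{\geq 3}$. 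This is arguably more transparent but yields the slightly weaker bound $\card{W}\geq 3n_2-2\card{E}$. For the second step, the paper forms the conflict graph of triads (each triad meets at most $4$ others) and applies \cref{fact:bounded-degree-ind}; you instead charge each middle vertex $b\in W$ to a vertex of $V(T)$ hitting the triad $\set{a,b,c}$ and observe that each $v\in V(T)$, being of degree $2$, absorbs at most $3$ charges (once as middle, once per neighbor in $W$). Both routes give a positive constant exactly when $\epsilon<1/9$, yours $(1-9\epsilon)/9$ against the paper's $(1-9\epsilon)/5$. One small remark: you dismiss $\card{V}=2$ as "trivial," but note that the lemma is actually vacuous there (an isolated edge is very sparse and matching-covered yet has no triads); the paper has the same implicit restriction since \cref{lem:path22} needs minimum degree $\geq 2$, and in the algorithm the lemma is only ever applied to graphs that are not already perfect matchings, so the gap is harmless in both proofs.
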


The proof of this lemma involves two steps: first, we prove that the total number of triads is large and second, that a maximal node-disjoint set of triads must also be large. The first step is accomplished in the following lemma.

\begin{lemma}\label{lem:path22}
	Suppose that $G=(V, E)$ is a graph with no vertices of degree $0$ or $1$. Then the number of triads in $G$ is at least $9\card{V}-8\card{E}$.
\end{lemma}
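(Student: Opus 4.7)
The plan is to partition $V$ by degree and count triads via their \emph{centers}, i.e., the degree-two vertex $b$ whose two neighbors are the other members of the triad. Set $V_2=\set{v\in V\given \deg(v)=2}$ and $V_{\geq 3}=V\setminus V_2$; since the minimum degree is $2$, these partition $V$. A vertex $b\in V_2$ is the center of a triad iff its two (unique) neighbors both lie in $V_2$, so it is enough to lower bound the number of such ``good'' vertices in $V_2$.

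First I would estimate $|V_2|$ from below. By the handshake lemma,
\[ 2\card{E} \;=\; \sum_{v\in V_2}\deg(v)+\sum_{v\in V_{\geq 3}}\deg(v) \;\geq\; 2\card{V_2} + 3\card{V_{\geq 3}} \;=\; 3\card{V}-\card{V_2}, \]
which rearranges to $\card{V_2}\geq 3\card{V}-2\card{E}$. (If this quantity is nonpositive the lemma is vacuous, so I may assume it is positive.)

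Second I would bound the number of ``bad'' vertices in $V_2$, namely those with at least one neighbor in $V_{\geq 3}$. Each bad $b$ contributes at least one edge to the cut $E(V_2,V_{\geq 3})$, so the count of bad vertices is at most $\card{E(V_2,V_{\geq 3})}$. But the handshake lemma again gives $\card{E(V_2,V_{\geq 3})}\leq \sum_{v\in V_{\geq 3}}\deg(v) = 2\card{E}-2\card{V_2}$. Hence the number of good vertices, which equals the number of triads (each good $b$ determines its triad uniquely via its two neighbors), is at least
\[ \card{V_2}-(2\card{E}-2\card{V_2}) \;=\; 3\card{V_2}-2\card{E}. \]

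Combining the two bounds: $T\geq 3\card{V_2}-2\card{E}\geq 3(3\card{V}-2\card{E})-2\card{E}=9\card{V}-8\card{E}$, which is what we want.

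I do not anticipate a real obstacle here: the entire argument is a two-step handshake calculation. The only point requiring care is the count itself: the map from a good center $b\in V_2$ to the triad $\set{a,b,c}$ is injective once we interpret a triad as determined by its \emph{center} (as in \cref{def:triad}, where the role of $b$ is singled out in the condition $\set{a,b},\set{b,c}\in E$). The only configuration in which multiple degree-two vertices could share the same triad \emph{set} is a triangle component with all three vertices of degree $2$; counting by centers, as above, still produces the correct lower bound $9|V|-8|E|$, which in fact is what the subsequent argument in \cref{lem:many-triads} needs.
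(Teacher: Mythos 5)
Your proof is correct, and it takes a genuinely different route from the paper's. The paper uses an edge-charging (discharging) scheme: each edge distributes a unit budget to its two endpoints ($1/2$--$1/2$ when the endpoints have the same degree status, $5/8$--$3/8$ otherwise, favoring the degree-$2$ endpoint), and one then shows that every vertex other than a triad center accumulates charge at least $9/8$ while a triad center accumulates only $1$; comparing against the total charge $\card{E}$ gives the bound. Your argument replaces the discharging scheme with two plain applications of the handshake lemma: $\card{V_2}\geq 3\card{V}-2\card{E}$, then an upper bound $2\card{E}-2\card{V_2}$ on the number of degree-$2$ vertices adjacent to the cut $E(V_2,V_{\geq 3})$, which together give a lower bound of $3\card{V_2}-2\card{E}\geq 9\card{V}-8\card{E}$ on the number of good centers. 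Your version is arguably more elementary and requires no ad hoc choice of charging constants; the paper's version packages the same degree-counting into a single local scheme. One point you were right to flag: both proofs in fact lower-bound the number of degree-$2$ vertices whose both neighbors also have degree $2$ (i.e., triad \emph{centers}), and the paper identifies this with ``the number of triads'' in the same implicit way; the two counts can differ only on a triangle component with all degrees $2$, which cannot occur in the matching-covered graphs to which \cref{lem:many-triads} is applied, and in any case counting by centers is the version that later argument needs.
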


\begin{proof}
	Consider a charging scheme, where we allocate a budget of $1$ to each edge, and the edge distributes its budget between its two endpoints. We then sum up the charge on all vertices and use the fact that this sum is exactly $\card{E}$.
	
	Let $e=\set{u, v}$ be an edge. If neither $u$ nor $v$ is of degree $2$, let the edge give $1/2$ to $u$, and $1/2$ to $v$. If both $u$ and $v$ are of degree $2$, we allocate the budget the same way by splitting it equally between $u$ and $v$. The only remaining case is when one of $u$ and $v$ has degree $2$ and the other has degree at least $3$; by symmetry let us assume that $\deg(u)=2$ and $\deg(v)\geq 3$. Then we allocate $5/8$ to $u$ and $3/8$ to $v$.
	
	Now let us lower bound the charge that each vertex $v$ receives. Note that the minimum amount $v$ receives from any of its adjacent edges is $3/8$, so an obvious lower bound is $3\deg(v)/8$. If $\deg(v)\geq 3$, this is at least $9/8$. Now consider the case when $\deg(v)=2$. Then $v$ receives at least $1/2$ from each of its adjacent edges. If one of the neighbors of $v$ is not of degree $2$, then the charge that $v$ receives will be at least $1/2+5/8=9/8$. The only possible case where $v$ does not receive at least $9/8$ is when it is of degree $2$, and both of its neighbors are also of degree $2$ (the center of a triad), in which case it receives $1$.
	
	Now let $k$ be the number of triads. Then, by the above argument the total charge on all the vertices is at least
	\[ \frac{9}{8}(\card{V}-k)+k\leq \card{E}. \]
	Rearranging yields $k\geq 9\card{V}-8\card{E}$.
\end{proof}

\begin{proof}[Proof of \cref{lem:many-triads}]
	We know that the number of triads is at least $9\card{V}-8\card{E}=(1-9\epsilon)\card{E}$. Now consider the conflict graph of triads, where nodes represent triads, and edges represent having an intersection. It is easy to see that any triad can only intersect at most $4$ other triads. So the degrees in this conflict graph are bounded by $4$. By \cref{fact:bounded-degree-ind}, any maximal node-disjoint set of triads will contain at least $(1-9\epsilon)\card{E}/5$ many triads. So we can take $c_1(\epsilon)=(1-9\epsilon)/5$ which is positive for $\epsilon<1/9$.
\end{proof}

\begin{lemma}\label{lem:many-walks}
	If $G=(V, E)$ is a matching-covered graph on $\card{V}>2$ vertices that is not very sparse, then there exist $c_2\card{E}/\log^2 \card{V}$ edge-disjoint even walks in $G$, for some constant $c_2(\epsilon)>0$.
\end{lemma}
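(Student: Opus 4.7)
The plan is a three-stage construction. First, pack many short edge-disjoint cycles in $G$ using the theorem of \textcite{CPR03}. Second, if at least half of these cycles have even length, use them directly as even walks. Third, otherwise pair up the odd cycles via edge-disjoint connecting paths, turning each pair into an even walk of the form ``two odd cycles joined by a path.''

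For the first two stages, since $G$ is not very sparse its cyclomatic number satisfies $\card{E}-\card{V}+1\geq \epsilon\card{E}$. The short-cycle packing theorem of \textcite{CPR03} then yields a collection $\mathcal{C}$ of $\Omega(\card{E}/\log^2\card{V})$ pairwise edge-disjoint cycles, each of length $O(\log\card{V})$. Partition $\mathcal{C}$ by length parity. If at least half of the cycles in $\mathcal{C}$ are even, each is itself an even walk by \cref{def:walk}, so we immediately obtain $\Omega(\card{E}/\log^2\card{V})$ pairwise edge-disjoint even walks and we are done. Otherwise, let $C_1,\dots,C_{2\ell}$ be pairwise edge-disjoint odd cycles with $2\ell=\Omega(\card{E}/\log^2\card{V})$.

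To pair these cycles, consider the residual graph $G'=G-\bigcup_i E(C_i)$. Because the cycles are short and few, only $O(\card{E}/\log\card{V})=o(\card{E})$ edges are removed, so $G'$ retains a $(1-o(1))$ fraction of the edges of $G$. Within each connected component $K$ of $G'$, take a spanning tree $T_K$, and for each cycle $C_i$ whose vertex set intersects $K$, mark one such vertex $v_i\in V(C_i)\cap V(K)$. I would then apply the classical T-join decomposition in trees: in any tree with an even number of marked vertices, the unique minimum T-join decomposes into edge-disjoint tree paths that pair up the marked vertices. This yields edge-disjoint paths $P_{ij}$ in $T_K\subseteq G'$ pairing the $v_i$'s in $K$ (losing at most one unpaired representative per component). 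For each resulting pair $(C_i,C_j)$, the concatenation $C_i\cdot P_{ij}\cdot C_j\cdot P_{ij}^{-1}$ is a closed walk of length $\card{C_i}+\card{C_j}+2\card{P_{ij}}$, which is even (odd $+$ odd $+$ even) and fits the ``two odd cycles $+$ doubled path'' form of \cref{def:walk}; the resulting walks are pairwise edge-disjoint because the cycles are edge-disjoint, the paths lie in $G'$ (hence avoid all cycle edges), and the paths are mutually edge-disjoint by the T-join decomposition.

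The \textbf{main obstacle} is controlling the loss of representatives at component boundaries: a component of $G'$ containing an odd number of $v_i$'s discards one unpaired cycle, and in the worst case $G'$ could shatter into many small components each housing a lone $v_i$. I would resolve this by exploiting that matching-covered graphs are $2$-edge-connected (any bridge $\set{u,v}$ would split $G$ into two odd-sized sides, and no matching could then cover the neighbors of $u$ off the bridge, contradicting matching-coveredness), combined with a more careful choice of the cycles in $\mathcal{C}$ so that $G'$ remains close to connected; for instance, by running the \textcite{CPR03} construction relative to a low-stretch spanning tree $T$ of $G$, one can ensure that $T\subseteq G'$ and hence $G'$ is connected, so the T-join argument is applied to a single tree and loses at most one representative overall. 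Either refinement reduces the total loss to a constant fraction of $2\ell$, so we retain $\Omega(\card{E}/\log^2\card{V})$ pairs and hence that many pairwise edge-disjoint even walks, giving a positive constant $c_2(\epsilon)$ as claimed.
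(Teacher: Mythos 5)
Your overall scaffold---pack edge-disjoint cycles, finish if at least half are even, else pair odd cycles by tree paths---matches the paper's. The genuine divergence, and the genuine gap, is in the pairing step. The paper routes the connecting paths in a spanning tree of the \emph{full} graph $G$ (via \cref{fact:tree-paths}), accepts that a path may reuse edges of an unrelated odd cycle, observes that each edge then lies in at most two walks, and extracts a maximal independent set from a bounded-degree conflict graph after a Markov argument on walk lengths; that conflict-graph step is precisely what yields the $\Omega(\ell^2/\card{E})$ bound of \cref{lem:tree-join}. You instead route paths in the residual graph $G'=G-\bigcup_iE(C_i)$ to get edge-disjointness for free, and you correctly flag that $G'$ may shatter, orphaning tokens. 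Neither of your fixes closes this. Two-edge-connectedness of $G$ (true for matching-covered graphs on more than two vertices) bounds nothing about $G'$: deleting the $\Omega(\card{E}/\log\card{V})$ edges that the cycles occupy can create far more components than your $2\ell=\Theta(\card{E}/\log^2\card{V})$ tokens, so in the worst case every token is orphaned. And ``running the CPR03 construction relative to a low-stretch tree $T$ so that $T\subseteq G'$'' is not a move the cycle-packing argument supports: extracting enough edge-disjoint cycles generically requires tree edges, and in the barely-non-very-sparse regime $\card{E}\approx\card{V}/(1-\epsilon)$ the subgraph $G-T$ has roughly $\epsilon\card{E}$ edges on $\card{V}$ vertices and may contain no cycles at all, so one cannot simply confine the packing to $G-T$.

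A secondary issue: you credit CPR03 with $\Omega(\card{E}/\log^2\card{V})$ edge-disjoint cycles \emph{each of length $O(\log\card{V})$}. The version proved in the paper, \cref{lem:cycles}, gives $\Omega((\card{E}-\card{V})/\log\card{V})$ cycles with \emph{no} length bound---its proof contracts degree-$2$ vertices, so a short cycle in the contracted graph can unfold into a long cycle in $G$. Your construction actually needs the length bound to control how much of $G$ the packing removes, so this discrepancy is load-bearing, not cosmetic. The cleanest repair is to adopt the paper's route: pair tokens in a spanning tree of $G$ itself, accept bounded overlap between paths and cycles, and prune via the conflict graph.
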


We first show that there are many edge-disjoint cycles in a non-sparse graph. If at least half of them are even, we are done. Otherwise, we show how to pair up odd cycles and connect them via suitable paths to get sufficiently many edge-disjoint even walks. A proof of the next lemma can be found in \cite{CPR03}; however, for the sake of completeness we provide it here.

\begin{lemma}\label{lem:cycles}
	In a graph $G=(V, E)$ there exists a collection of edge-disjoint cycles with at least the following number of cycles:
	\[ \frac{\card{E}-\card{V}}{2\log_2 \card{V}}. \]
\end{lemma}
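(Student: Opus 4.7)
The plan is strong induction on $|V|+|E|$, reducing $G$ in one of three ways so that the inductive hypothesis delivers the bound.

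First, if $G$ has a vertex $v$ with $\deg(v)\le 1$, then $v$ lies on no cycle, so I apply induction to $G-v$. Since removing $v$ decreases $|V|$ by one and $|E|$ by at most one, $|E'|-|V'|\ge |E|-|V|$ while $|V'|<|V|$, and the inductive bound for the smaller graph dominates the target bound for $G$ (when $|E|-|V|\le 0$, both bounds are vacuously met by the empty collection).

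Otherwise $G$ has minimum degree at least $2$. If the girth of $G$ is at most $2\log_2 |V|$, I take a shortest cycle $C$, add it to the output, and apply induction to $G-E(C)$. The recursive call produces at least $(|E|-|C|-|V|)/(2\log_2 |V|)$ edge-disjoint cycles, and since $|C|\le 2\log_2 |V|$, adjoining $C$ brings the total to at least $(|E|-|V|)/(2\log_2 |V|)$.

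The remaining case is girth exceeding $2\log_2 |V|$. I invoke the Moore bound: on $n\ge 2$ vertices, minimum degree $\ge 3$ forces girth $\le 2\log_2 n$, since BFS from any vertex expands by a factor of at least $2$ per level under min-degree $3$ and so exhausts the $n$ vertices within $\log_2 n$ levels, closing a short cycle. Hence $G$ has a vertex $v$ of degree exactly $2$; let $u,w$ be its two neighbors. Because the girth exceeds $3$, $\{u,w\}$ is not already an edge, so contracting $v$ yields a simple graph $G'$ with $|V'|=|V|-1$ and $|E'|=|E|-1$, hence $|E'|-|V'|=|E|-|V|$. Applying induction to $G'$ and then lifting each resulting cycle back to $G$ by replacing any use of the new edge $\{u,w\}$ with the two-edge path through $v$ produces at least $(|E|-|V|)/(2\log_2(|V|-1))\ge (|E|-|V|)/(2\log_2 |V|)$ edge-disjoint cycles in $G$; at most one lifted cycle can use the new edge, so edge-disjointness is preserved.

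The main technical hurdle is the Moore-bound invocation: the BFS-expansion argument itself is standard, but some care is required for small $|V|$ where the estimate girth $\le 2\log_2|V|$ becomes weak or vacuous. Those boundary cases are dispatched by observing that the target $(|E|-|V|)/(2\log_2|V|)$ falls below $1$ precisely when the Moore bound is too weak, so exhibiting a single cycle (which exists whenever $|E|\ge |V|$ in any component containing one) suffices.
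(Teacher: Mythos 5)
Your proof is correct and follows essentially the same route as the paper's: induction on $\card{V}+\card{E}$, splaying into cases of low-degree vertex removal, degree-$2$ contraction, and Moore-bound extraction of a cycle of length at most $2\log_2\card{V}$ once the minimum degree reaches $3$. The only cosmetic difference is that you test the girth before contracting degree-$2$ vertices (so loops and parallel edges are absorbed by the ``short cycle'' case), whereas the paper disposes of loops and parallel edges explicitly, contracts degree-$2$ vertices first, and only then invokes the Moore bound; the key lemma and accounting are the same.
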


\begin{proof}
	We prove this by induction on $\card{V}+\card{E}$. We have several cases:
	\begin{enumerate}[i)]
		\item If there are any loops in the graph, we extract that as one of our cycles, and remove the edge from the graph. The promised quantity goes down by $1/(2\log_2 \card{V})$ which is $\leq 1/2$. So from now on we assume that $G$ has no loops.
		\item If there are any two parallel edges $e, e'$, we extract those as a cycle of length $2$, and remove both from the graph. The promised number of edge-disjoint cycles goes down by $2/(2\log_2 \card{V})\leq 1$. So adding the cycle we extracted fulfills the promise. From now on we assume that $G$ is simple.
		\item If $G$ has any vertices of degree $0$: We can simply remove it and the promised quantity grows.
		\item If $G$ has a vertex of degree $1$: We can also remove this vertex. This operation does not change the numerator but shrinks the denominator, which results in a larger promised quantity.
		\item If $G$ has a vertex $v$ of degree $2$: Let $e, e'$ be the two adjacent edges to $v$. Remove $v, e, e'$ from the graph, and place a new edge $e''$ between the two former neighbors of $v$. By doing this, both $\card{V}$ and $\card{E}$ go down by $1$. So now the promised number of edge-disjoint cycles becomes larger. By induction we find them, and now we replace the edge $e''$ if it is used at all in a cycle, by the path of length two consisting of $e, e'$. Since $e''$ appears in at most one cycle, this operation preserves edge-disjointness.
		\item Finally if $G$ is a simple graph with no vertices of degree $\leq 2$, it must have a cycle of length at most $2\log_2\card{V}$. If we prove this, we are done by induction, because we can remove the edges of this cycle and the promised quantity goes down by at most $1$. Now to prove the existence of this cycle, assume the contrary, that the length of the minimum cycle of the graph is at least $2\log_2\card{V}+1$. Pick a vertex $v$ and look at all simple paths of length at most $\log_2\card{V}$ going out of $v$. The number of paths of length $i$ is at least twice the number of paths of length $i-1$. This is because every path of length $i-1$ ending at a vertex $u$ can be extended in at least $\deg(u)-1\geq 2$ ways, and none of these extensions will intersect themselves, otherwise we would get a cycle of length $\log_2\card{V}+1$. So in the end, the total number of such paths will be $>2^{\log_2 \card{V}}=\card{V}$, which means that two of the paths must share an endpoint. But now from the union of these two paths, we can extract a cycle of length at most $\log_2\card{V}+\log_2\card{V}=2\log_2\card{V}$.
	\end{enumerate}
\end{proof}

If at least half of the cycles guaranteed by \cref{lem:cycles} are odd, we need to pair them up and connect them with paths. We use a spanning tree to do this.

\begin{fact}[{\cite[For proof see Lemma 20 in][]{AV18}}]\label{fact:tree-paths}
	Consider a tree $T$ with an even number of tokens placed on its vertices, with possibly multiple tokens on each vertex. There is a pairing, i.e., a partitioning of tokens into partitions of size two, such that the unique tree paths connecting each pair are all edge-disjoint.
\end{fact}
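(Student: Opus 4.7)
The plan is to prove the statement by induction on the number of edges of $T$. The base case is a single-vertex tree: then all tokens sit at one vertex, any pairing is valid, and every path is the trivial (edgeless) path at that vertex, so edge-disjointness holds vacuously.

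For the inductive step, I would pick any leaf $v$ of $T$, let $u$ be its unique neighbor, and let $t_v$ denote the number of tokens currently at $v$. I would then distinguish two cases based on the parity of $t_v$. If $t_v$ is even, pair up the tokens at $v$ among themselves; each of these pairs has its connecting path equal to the single vertex $v$, which uses no edges. Then delete $v$ and apply the inductive hypothesis to $T - v$ with the remaining tokens, whose count is still even. If $t_v$ is odd, pair up $t_v - 1$ of the tokens at $v$ among themselves (again with edgeless paths), then ``push'' the single leftover token to $u$, i.e., increment $u$'s token count by one. The tree $T - v$ now carries an even number of tokens, so the inductive hypothesis yields an edge-disjoint pairing on it; lifting back to $T$, the unique path for the pair involving the pushed-up token is obtained by prepending the edge $\{u,v\}$ to its path in $T - v$.

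To finish, I would verify edge-disjointness. All the paths constructed at $v$ in the current step either live entirely within $v$ (and use no edges) or, in the odd case, use the edge $\{u,v\}$ exactly once. The paths produced by the inductive hypothesis live in $T - v$ and therefore do not use the edge $\{u,v\}$ at all; by induction they are edge-disjoint among themselves. Hence the complete collection of paths is edge-disjoint in $T$.

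There is no real obstacle here: the argument is a textbook leaf-stripping induction, and the only subtlety is the bookkeeping that pushing a leftover token to the parent preserves the even-parity invariant needed for the inductive hypothesis, which is immediate since we remove an odd number of tokens ($t_v - 1$ paired off plus $1$ pushed) from $v$ and add one to $u$, netting an even change to the total count. I would not dwell on this in the final write-up beyond a single sentence.
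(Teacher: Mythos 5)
Your proof is correct and complete. The paper itself does not reproduce a proof here but instead cites Lemma 20 of \textcite{AV18}; the argument given there is likewise a ``push the unpaired leftover up the tree'' parity argument, so your leaf-stripping induction is essentially the same idea, just organized as an induction on the number of edges (one leaf at a time) rather than as a single pass over a rooted tree. The key invariant you identify --- that each edge $\{u,v\}$ is crossed by at most one of the constructed paths, namely the one carrying the single unpaired token of $v$ (if any) up to $u$ --- is exactly what is needed, and your parity bookkeeping showing that the token count on $T-v$ remains even in both cases is right. No gap.
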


\begin{lemma}\label{lem:tree-join}
	Suppose that there are $2\l$ edge-disjoint cycles of odd length in a matching-covered connected graph $G=(V, E)$. Then $G$ contains at least $\Omega(\l^2/\card{E})$ edge-disjoint even walks.
\end{lemma}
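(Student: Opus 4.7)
The plan is to pair up the $2\ell$ odd cycles via tree paths to build $\ell$ candidate even walks, then extract a large edge-disjoint sub-family by a conflict-graph argument. Let $C_1,\dots,C_{2\ell}$ denote the given edge-disjoint odd cycles, and fix any spanning tree $T$ of $G$. Place one token at an arbitrary vertex of each $C_i$ and apply \cref{fact:tree-paths} to obtain a pairing of the $2\ell$ tokens together with pairwise edge-disjoint tree paths $P_1,\dots,P_\ell$; say $P_j$ connects a token on $C_{i_j}$ to one on $C_{i_j'}$. For each $j$, let $W_j$ be the closed walk that traverses $C_{i_j}$, walks along $P_j$, traverses $C_{i_j'}$, and then retraces $P_j$. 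Since $\card{C_{i_j}}+\card{C_{i_j'}}$ is even and $P_j$ is used twice, $W_j$ is an even walk in the sense of \cref{def:walk}.

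The key step is to bound conflicts. If $W_j$ and $W_{j'}$ share an edge $e$, then $e$ cannot lie in cycles of both walks (the $C_i$'s are edge-disjoint, and each sits inside exactly one candidate), nor in tree paths of both (by the edge-disjointness of the $P_r$'s), so $e$ must lie in some cycle $C_k$ that is part of one walk and in some tree path $P_r$ that is part of the other; in particular $e\in T\cap C_k$. Each such ``bad'' edge is in exactly one $C_k$ and in at most one $P_r$, so it witnesses at most one conflicting pair of walks. Hence the total number of conflicting pairs is at most
\[
	\card{T \cap \textstyle\bigcup_k C_k} \;\leq\; \card{T} \;=\; \card{V}-1 \;\leq\; \card{E}.
\]

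Viewing the $\ell$ candidates as vertices of a conflict graph, with an edge between two candidates whenever they share an edge of $G$, we therefore have a graph on $\ell$ vertices with at most $\card{E}$ edges. The Caro--Wei bound then yields an independent set of size at least $\ell^2/(\ell+2\card{E})$. Since the $2\ell$ odd cycles are edge-disjoint and each has at least three edges, $\ell\leq\card{E}/6$, so this quantity is $\Omega(\ell^2/\card{E})$. The candidates picked out by this independent set form the required family of pairwise edge-disjoint even walks. The main subtlety in the plan is that a tree path $P_r$ can silently use an edge of some cycle $C_k$ not paired with $P_r$, spoiling the edge-disjointness of the naive $\ell$-walk collection; the conflict-graph step absorbs precisely this loss at a cost of a factor of order $\ell/\card{E}$, which matches the bound stated in the lemma.
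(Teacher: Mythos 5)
Your proposal follows the paper's overall strategy (pair the odd cycles via edge-disjoint tree paths from \cref{fact:tree-paths}, then extract an edge-disjoint subfamily from the $\ell$ candidate walks via a conflict graph), but your conflict analysis is genuinely different and arguably cleaner. The paper first uses Markov's inequality to restrict to $\ell/2$ walks of length $O(\card{E}/\ell)$ and then bounds the \emph{degree} of each node in the conflict graph by its walk length; you instead bound the \emph{total number of conflicting pairs} directly, observing that any conflict must be witnessed by an edge lying in some $C_k$ and in some tree path $P_r$, hence a tree edge, and each such edge accounts for at most one conflicting pair since the $C_k$'s (resp. $P_r$'s) are mutually edge-disjoint. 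This gives a conflict graph on $\ell$ vertices with at most $\card{V}-1\leq \card{E}$ edges, and Caro--Wei then hands you $\Omega(\ell^2/\card{E})$ immediately, with no Markov step needed. That is a legitimate and slightly tighter route to the same bound.

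There is, however, a real gap earlier in the construction. You assert that $W_j$ (traverse $C_{i_j}$, walk $P_j$, traverse $C_{i_j'}$, retrace $P_j$) is an even walk in the sense of \cref{def:walk}, but the definition requires the structure ``two odd cycles joined by a path,'' with each edge appearing at most twice and, when twice, with the same parity. The tree path $P_j$ can use tree edges that also lie in $C_{i_j}$ or $C_{i_j'}$; any such edge would appear \emph{three} times in $W_j$ (once in the cycle and twice in the path), so $W_j$ is not an even walk and the circulation/mismatch machinery (\cref{lem:walk-destroy,lem:walk-lose-edge}) does not apply to it. Checking only that the total length is even, as you do, does not detect this. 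The paper patches this by replacing $P_j$ with the subpath from the last vertex of $C_{i_j}$ encountered along $P_j$ to the first vertex of $C_{i_j'}$ encountered afterwards; this subpath shares no edges with either of its own two cycles. You need the same replacement. Fortunately your conflict-counting argument is unaffected: the trimmed subpaths are subsets of the $P_j$'s, hence still pairwise edge-disjoint, and the classification of a shared edge as ``in one walk's cycle, in another walk's path'' still holds verbatim.
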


\begin{proof}
	We will pair up the odd cycles by paths connecting each pair. This will create $\l$ even walks, but they might not be edge-disjoint. We will then show how to extract $\Omega(\l^2/\card{E})$ \emph{edge-disjoint} even walks out of them.
	
	Consider a spanning tree $T$ of $G$. For each of the $2\l$ odd cycles, pick an arbitrary vertex, and put a token on that vertex. Now we have an even number of tokens on the vertices. We can pair up these tokens, so that the unique tree paths (of possibly length $0$) connecting each pair are edge-disjoint, see \cref{lem:tree-join}.
	
	Now for each pair of odd cycles $C_1, C_2$ whose tokens got paired up, we create an even walk. Let $P$ be the tree path connecting tokens from $C_1$ and $C_2$. If $P$ has no common edges with $C_1, C_2$ we can simply create our even walk, but this is not guaranteed to happen. So instead, traverse $P$ from $C_1$'s token to $C_2$'s token and look at the last exit from $C_1$; afterwards look for the first time any vertex of $C_2$ is visited. This portion of $P$ is a subpath connecting $C_1$ and $C_2$ having no common edged with either. We use $C_1, C_2$ and this subpath of $P$ to create our even walk.
	
	So far we have created $\l$ even walks, but they might not be edge-disjoint. The odd cycles are edge-disjoint, as are the paths connecting them, but one of the paths might share an edge with an unrelated odd cycle. This also means that no edge $e$ can be shared between more than two even walks; $e$ can be used once as part of an odd cycle, and once as part of a path.
	
	Now consider the number of edges in each even walk. If we sum this over all even walks, we get at most $2\card{E}$, since each edge can appear in at most two even walks. So the average number of edges in an even walk is $\leq 2\card{E}/\l$. By Markov's inequality at least half of the even walks, $\l/2$ of them, will have at most twice this average number of edges, $4\card{E}/\l$. Now create a conflict graph where nodes represent these $\l/2$ even walks, and an edge is placed when the two even walks share an edge. The degree of each node is at most $4\card{E}/\l$. So if pick a maximal independent set in this conflict graph, it will consist of at least $\Omega(\l^2/\card{E})$ many even walks.
\end{proof}

We are finally ready to prove \cref{lem:many-walks}.

\begin{proof}[Proof of \cref{lem:many-walks}]
	First note that if our graph is not an isolated edge and is matching-covered it must contain at least one even cycle. This is so because there must be at least two perfect matchings in the graph, and in their symmetric difference, we can find one such cycle.
	
	Because we are guaranteed to have at least $1$ cycle, we can simply show that asymptotically we can extract $\Omega(\card{E}/\log^2\card{V})$ edge-disjoint even walks. Then the asymptotic statement translates to the more concrete bound of $c_2\card{E}/\log^2\card{V}$.
	
	If $(1-\epsilon)\card{E}\geq \card{V}$, by \cref{lem:cycles}, we have  \[\epsilon \card{E}/2\log_2\card{V}=\Omega(\card{E}/\log\card{V})\] cycles. If at least half of them are of even length, we are done. Otherwise we get $\Omega(\card{E}/\log \card{V})$ odd cycles. Perhaps by throwing away one of them, we can assume the number of odd cycles we have is even. Then we can apply \cref{lem:tree-join} to obtain $\Omega(\card{E}/\log^2\card{V})$  edge-disjoint walks. This completes the proof.
\end{proof}

	\section{The Oracle-Based Algorithm}\label{sec:algorithm}

In this section we describe our oracle-based algorithm for finding a perfect matching. In \cref{sec:weighted}, we will extend this to finding a \emph{minimum weight} perfect matching for small weights.

On input $G=(V, E)$, our algorithm proceeds by finding smaller and smaller matching minors $H$ of $G$, until $H$ has a unique perfect matching, or in other words is a perfect matching. Then we pick the edges in $H$ as a partial matching in $G$ and extend this partial matching to a perfect matching independently and in parallel for the preimage of each node in $H$. That is for each node $s$ in $H$, we take the set $S\subseteq V$ that got shrunk to $s$, remove the single endpoint of the partial matching from $S$, and recursively find a perfect matching in $S$. In the end we return the results of all these recursive calls along with the edges of $H$ as the final answer.

We crucially make sure that the pre-image of nodes in $H$ never contain more than a constant fraction of $V$. This makes sure that our recursive calls end in $O(\log \card{V})$ steps.

In all of our algorithms, when we construct matching minors, we implicitly maintain the mapping from the resulting edges to the original edges, and the mapping from original vertices to the minor's vertices. These are trivial to maintain in \NC{}, but for clarity we avoid explicitly mentioning them. We also keep node weights for matching minors, where the \emph{weight of a node} is simply the number of original vertices that got shrunk to it.

\begin{Algorithm}
	\caption{Divide-and-conquer algorithm for finding a perfect matching.}
	\label{alg:perfect-matching}
	\PerfectMatching{$G=(V, E)$}\;
	\eIf{$V=\emptyset$}{
		\Return $\emptyset$.\;
	}{
		Call $\PartialMatching{G}$, and let $H$ be the matching minor returned.\;
		Let $M\subseteq E$ be the edges of $H$.\;
		\ParallelFor{each node $s$ of $H$}{
			Let $S\subseteq V$ be the nodes of $G$ that are shrunk to $s$.\;
			Let $v$ be the unique endpoint of the unique edge of $M$ in $\delta(S)$.\;
			Let $G_s$ be the induced graph on $S-\set{v}$.\;
			$M\leftarrow M\cup \PerfectMatching(G_s)$.\;
		}
		\Return $M$.\;
	}
\end{Algorithm}

\begin{Algorithm}
	\caption{Find a matching minor of the input graph that is itself a perfect matching.}
	\label{alg:partial-matching}
	\PartialMatching{$G=(V, E)$}\;
	Assign node weight $1$ to each node $v\in V$.\;
	\While{$G$ is not a perfect matching}{
		\If{any node $v$ of $G$ has at least $1/6$ of the total node weight}{
			Remove disallowed edges $e\notin E[0]$ from $G$.\;
			Contract the complement of $\set{v}$ to a single node. If there are parallel edges, remove all except for an arbitrary one.\;
			\Return $G$.\;
		}
		Find a maximal set of node-disjoint triads in $G$.\;
		Let $H$ be obtained from $G$ by removing disallowed edges and contracting each triad into a single node.\;
		$U\leftarrow \set{H}$.\;
		\ParallelFor{$w\in \W$}{
			Call $\Reduce(G, w)$ and let the result be $H$.\;
			$U\leftarrow U\cup\set{H}$.\;
		}
		Find the graph $H\in U$ with the minimum number of non-isolated edges.\;
		$G\leftarrow H$.\;
	}
	\Return $G$.\;
\end{Algorithm}

\begin{Algorithm}
	\caption{Remove disallowed edges and contract certain tight odd sets.}
	\label{alg:reduce}
	\Reduce{$G=(V, E)$, $w$} \tcp*{The graph $G$ has node weights.}
	Remove disallowed edges $e\notin E[w]$ from $G$.\;
	Find all connected components of $G$.\;
	\ParallelFor{each connected component $C$ of $G$}{
		Run \cref{alg:balanced} on $C$ to find a laminar family of tight odd sets $\L$.\;
		\ParallelFor{$S\in \L$}{
			\If{node weight of $S$ is more than half of the node weight of $C$}{
				Replace $S$ in $\L$ with $C-S$.\;
			}
		}
		Find the inclusion-wise maximal sets in $\L$ and shrink each one to a single node.\;
	}
	\Return $G$.\;
\end{Algorithm}

The pseudocode for the main algorithm \PerfectMatching can be seen in \cref{alg:perfect-matching}. On input $G$, the algorithm calls \PartialMatching to find a matching minor $H$ of $G$ which  itself is a perfect matching. Then the edges of $H$, which form a partial matching in $G$, are extended to a perfect matching independently and in parallel in the preimage of each node from $H$. Since $H$ is a matching minor, this extension can always be performed by \cref{lem:matching-minor}.

The pseudocode for \PartialMatching can be seen in \cref{alg:partial-matching}. This algorithm keeps a node-weighted matching minor of the input graph $G$. It tries several ways of obtaining a smaller matching minor, where \emph{size of a matching minor} is measured in terms of the number of non-isolated edges, see \cref{def:isolated-edge}. One way of obtaining a smaller matching minor is by picking a maximal node-disjoint set of triads and shrinking them simultaneously. By \cref{lem:triad}, this produces a matching minor. Also note that the maximal set of node-disjoint triads can be found in \NC{} by enumerating all triads and using \cref{thm:luby}.

Another way of obtaining smaller matching minors is by trying weights from the set of weight vectors $\W$ and calling \Reduce to remove disallowed edges $e\notin E[w]$ and shrinking top-level sets of a laminar family of tight odd sets w.r.t.\ $w$.

Finally. the pseudocode for \Reduce can be seen in \cref{alg:reduce}. This algorithm is simply fed a graph $G=(V, E)$ and a weight vector $w$. It removes disallowed edges $e\notin E[w]$ and shrinks the maximal sets of a laminar family of tight odd set. The laminar family is found using \cref{alg:balanced}, but is modified to make sure that no shrunk set becomes too large; to be more precise no shrunk vertex in the end will have node weight more than half of the total node weight.

\subsection{Finding a minimum weight perfect matching}\label{sec:weighted}

We extend our algorithm so it returns not just any perfect matching, but rather a minimum weight perfect matching, for small weights.

Given an input graph $G=(V, E)$ and a weight vector $w$, we can remove disallowed edges $e\notin E[w]$, and find a laminar family of tight odd sets $\L$ w.r.t.\ $w$, by calling \cref{alg:balanced} on each connected component of $G$. By complementary slackness, any perfect matching that has only one edge in $\delta(S)$ for each $S\in \L$ will automatically be of minimum weight, see \cref{def:laminar-dual}. We can simply contract the top level sets in $\L$, use \cref{alg:perfect-matching} to find a perfect matching in the shrunk graph, and recursively extend this to a minimum weight perfect matching in each shrunk piece. Following an almost identical argument as in the proof of \cref{fact:contraction}, the perfect matching in the shrunk graph can be extended to a minimum weight perfect matching.

The only problem with this method is that the recursion depth is not guaranteed to be polylogarithmic. However we can fix that by making sure that tight odd sets $S\in \L$ do not have more than half of the vertices in the graph; if they do, we replace them by their complements and we will see in \cref{lem:flip} why this operation preserves laminarity.

\subsection{Minor-closed families of graphs}\label{sec:minor-closed}

Throughout our algorithm we only call the decision oracle on graphs obtained from the original through a sequence of edge and vertex removals and contractions. In this section we will prove that the decision oracle is only called on minors of the original graph, that is those graphs obtained by vertex and edge removals and contractions of \emph{connected} subgraphs.

\begin{lemma}\label{lem:minor}
	\Cref{alg:reduce,alg:partial-matching,alg:perfect-matching} call the decision oracle on minors of their input graph only.
\end{lemma}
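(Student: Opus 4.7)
The plan is to argue by induction on the execution that every graph maintained by \cref{alg:perfect-matching,alg:partial-matching,alg:reduce} is a (standard) minor of the original input graph. Since the decision oracle is invoked only through \cref{lem:support,lem:laminar}, and always on either the current graph or a two-vertex deletion of it, this is enough to prove the lemma.

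Edge removals (deleting disallowed edges) and vertex removals (forming $G_s$ in \PerfectMatching) are trivially minor operations, so the real work is in justifying each contraction. There are four: (i) contracting a triad in \PartialMatching; (ii) contracting the complement of a high-weight vertex in \PartialMatching; (iii) contracting a maximal set from the laminar family $\L$ returned by \cref{alg:balanced} inside \Reduce; and (iv) contracting a complement $C - S$ in \Reduce when $S$ exceeds half of the node weight of its connected component $C$. Cases (i) and (iii) are immediate: triads are paths of length two, and the nontrivial sets placed into $\L$ by \cref{alg:balanced} are constructed explicitly as connected components of a subgraph of $(V, E[w])$, so they are connected in the current graph.

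The main obstacle lies in cases (ii) and (iv), which require the complement of a set to be connected in the current graph. The key observation is that in both cases the algorithm has already removed disallowed edges, so each relevant connected component is matching-covered (in the sense of \cref{def:matching-covered}) with respect to its face of the matching polytope. I would then invoke two classical structural results from matching theory (see Lovász and Plummer): first, every matching-covered graph on at least four vertices is $2$-connected, which handles (ii) by guaranteeing that removing a single vertex from such a component leaves a connected graph (the graphs on fewer vertices either do not enter the $1/6$ branch in an interesting way or are perfect matchings already); and second, for any nontrivial tight cut $\delta(S)$ in a matching-covered graph, both induced subgraphs on $S$ and on its complement are connected, which handles (iv) since each $S \in \L$ is a tight odd set of the face $\PM[w]$ inside the matching-covered component $C$.

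Given these two facts, the inductive step reduces to routine case analysis over the operations performed, and since the oracle is only queried on the current graph or on its two-vertex deletions, every such query lands on a minor of the input. The hardest part of writing the proof cleanly will be formalizing the invocation of the shore-connectivity theorem: one must verify that the bookkeeping in our algorithms (node weights, the mapping from minor vertices back to original vertices, and the parallel handling of the connected components of $(V, E[w])$) indeed produces exactly the kinds of tight cuts to which the classical theorem applies, with no degenerate edge cases left unaccounted for.
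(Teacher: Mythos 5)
Your overall decomposition into the four kinds of contractions is a reasonable way to organize the proof, and your observation for case (iii) — that the sets placed in $\L$ by \cref{alg:balanced} are, by construction, connected components of a subgraph of $(V, E[w])$ — is actually more direct than the paper's argument, which proves connectivity of an arbitrary face-tight odd set from scratch. However, there are two genuine problems.

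First, case (ii) is a red herring, and your treatment of it is partly wrong. When a node $v$ acquires more than $1/6$ of the total weight, \PartialMatching contracts the complement of $\{v\}$ in the \emph{entire} current graph $G$, not just in $v$'s connected component. Since $G$ may well be disconnected at this stage (previous iterations of \Reduce or triad contraction can disconnect it), the complement of $\{v\}$ need not be connected, and the resulting contraction is genuinely \emph{not} a minor — your appeal to $2$-connectivity of matching-covered graphs only applies within a single component. The paper explicitly concedes this: the returned graph ``may not be a proper minor,'' but that is fine because the algorithm returns immediately and the decision oracle is never invoked on it. The lemma is only about the graphs on which the oracle is \emph{called}, not about every intermediate graph constructed. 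You should drop the $2$-connectivity argument and make this observation instead.

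Second, and more substantively, in case (iv) you invoke the Lovász--Plummer shore-connectivity theorem for tight cuts, but its hypothesis is not satisfied here. That theorem concerns a cut $\delta(S)$ that is tight with respect to \emph{every} perfect matching of the (matching-covered) graph. In \Reduce, $S$ is a tight odd set only with respect to the \emph{face} $\PM[w]$, i.e., every \emph{minimum-weight} perfect matching meets $\delta(S)$ exactly once. The connected component $C$ of the graph on $E[w]$ can perfectly well contain perfect matchings that are not minimum weight and that cross $\delta(S)$ more than once, in which case $\delta(S)$ is simply not a tight cut of $C$ in the classical sense. A concrete example: the triangular prism with weight $0$ on the two triangle edges and weight $1$ on the three rungs; one triangle, $S$, is a face-tight odd set and all edges are allowed, yet the all-rungs matching crosses $\delta(S)$ three times. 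So the classical theorem cannot be applied directly. The paper instead gives a self-contained argument that uses only minimum-weight perfect matchings: if a face-tight $S$ inside a connected component were internally disconnected, one could exhibit two distinct edges $e, f \in M \cap \delta(S)$ for a single MWPM $M$, contradicting tightness of $S$ on the face. That argument applies uniformly to the original sets in $\L$ and to their replacements $C - S$ (both are face-tight odd sets of $C$), and is the correct replacement for your appeal to the classical tight-cut lemma.
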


This lemma is all we need to prove \cref{thm:minor}. Note that there are several minor-closed families of graphs where the decision problem can be solved in \NC{} by using a counting oracle. In particular we can count perfect matchings in graphs embedded on surfaces of genus at most $O(\log n)$, and therefore solve the decision problem, all in \NC{}. This improves upon the genus bound of $O(\sqrt{\log n})$ given by \textcite{AV18}.

\begin{corollary}
	For graphs embedded on a surface of genus at most $O(\log n)$ and weighted with polynomially bounded edge weights, there is an \NC{} algorithm to find a minimum weight perfect matching.
\end{corollary}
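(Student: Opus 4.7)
The plan is to invoke Theorem~\ref{thm:minor}. The family of graphs embeddable on a surface of genus at most $g$ is closed under taking minors, since edge deletion and edge contraction cannot increase genus. Hence it suffices to exhibit an \NC{} algorithm for the decision problem---``does $G$ have a perfect matching of weight at most $W$?''---on graphs of genus $g = O(\log n)$ with polynomially bounded integer weights, and then feed this algorithm as the oracle $\O$ into Theorem~\ref{thm:minor}.

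First I would reduce the weighted decision problem to weighted \emph{counting} of perfect matchings. With weights shifted to be nonnegative and of polynomial magnitude, introduce a formal variable $z$ and consider the generating polynomial
\[ P(z) \;=\; \sum_{\text{perfect matchings }N}\; \prod_{e \in N} z^{w_e} \;=\; \sum_{k \geq 0} a_k z^k, \]
where $a_k$ counts perfect matchings of weight exactly $k$. Since $w$ is polynomially bounded, $\deg P = n^{O(1)}$. A perfect matching of weight at most $W$ exists iff $\sum_{k \leq W} a_k > 0$, so it suffices to recover the coefficients $a_k$; these are obtained by evaluating $P$ at $\deg(P)+1$ integer points in parallel and polynomial-interpolating, i.e.\ by polynomially many calls to a weighted perfect-matching counter.

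The crux is thus that weighted counting of perfect matchings on a graph of genus $g$ is in \NC{} whenever $4^g$ is polynomial in $n$. This follows from the Kasteleyn / higher-genus Pfaffian method: one can compute, in \NC, a collection of $4^g$ orientations of $G$ (indexed by $H_1$ of the surface with $\Z/2$ coefficients) and signs $\epsilon_i \in \{\pm 1\}$ such that
\[ P(z) \;=\; \sum_{i=1}^{4^g} \epsilon_i \operatorname{Pf}(A_i(z)), \]
where $A_i(z)$ is the weighted skew-symmetric adjacency matrix for the $i$-th orientation, with entries $\pm z^{w_e}$. Each Pfaffian reduces to a determinant and so lies in \NC{} by \cite{Csanky}. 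When $g = O(\log n)$, all $4^g = n^{O(1)}$ Pfaffians can be computed in parallel over polynomially many evaluation points of $z$, and their signed combination taken in \NC.

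The one technical point to verify is the \NC{}-computability of the $4^g$ Pfaffian orientations for a given genus-$g$ embedding; this is by now standard in the matching literature and is exactly the ingredient that \textcite{AV18} used to obtain their weaker bound of $g = O(\sqrt{\log n})$. The improvement from $\sqrt{\log n}$ to $\log n$ is made possible entirely by Theorem~\ref{thm:minor}: we only need the \NC{} oracle to answer the weighted decision problem, rather than to drive the more expensive search procedure of \cite{AV18}, so the entire polynomial budget $4^{O(\log n)} = n^{O(1)}$ is available to the decision subroutine. Combining this \NC{} decision oracle with Theorem~\ref{thm:minor} then immediately yields the claimed \NC{} algorithm for MWPM on graphs of genus $O(\log n)$.
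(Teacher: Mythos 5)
Your argument is essentially the paper's own: the paper proves this corollary by noting that genus is minor-closed, that weighted counting of perfect matchings on genus-$g$ surfaces reduces to $4^g=n^{O(1)}$ Pfaffian/determinant computations (hence \NC{} for $g=O(\log n)$ by \cite{Csanky}), that counting answers the weight-$\leq W$ decision question, and then invoking \cref{thm:minor}. Your write-up just fills in the routine details (generating polynomial plus interpolation, maintaining the embedding/orientations) that the paper leaves implicit, so it is correct and takes the same route.
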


Another consequence of \cref{thm:minor} is an alternative algorithm for $K_{3,3}$-free graphs, which was resolved earlier by \textcite{EV18}.

Now we prove \cref{lem:minor}.

\begin{proof}[Proof of \cref{lem:minor}]
	First we prove this for \cref{alg:reduce}. In this algorithm, we only remove edges from the input graph, and shrink tight odd sets in connected components. We just have to show that what we shrink is already connected. Consider a tight odd set $S$ in a connected component $C$. If it is not internally connected, then one of its internal connected components must have odd size; let that be $S'$. Since $S\subseteq C$ and $C$ is a connected component, there is an edge $e\in \delta(S-S')$. Since $S'$ is not internally connected to $S-S'$, it must be that $e\in \delta(S)$ too. Now since the graph is matching-covered with minimum weight perfect matchings, there must be some minimum weight perfect matching $M\ni e$. But because $S'$ is odd, there must also be an edge $f\in M\cap \delta(S')$. But note that $e\neq f$, and both $e,f\in \delta(S)$. This is a contradiction, since $S$ cannot have more than one edge in a perfect matching. This shows that $S$ must be connected and \cref{alg:reduce} only produces minors of its input graph.
	
	Next we prove the statement for \cref{alg:partial-matching}. This algorithm either calls \cref{alg:reduce}, or finds triads and contracts them. The former produces minors of the input graph, and the latter also produces minors of the input graph since triads are connected.
	
	Note that the graph returned by \cref{alg:partial-matching} may not be a proper minor of the input graph; that could happen if the node weight of some $v$ goes above $1/6$ the total node weight. In this scenario, the complement of $v$ might not be connected and yet we contract it. However the algorithm immediately returns and the decision oracle is not called on this returned graph. So this does not contradict the statement of the lemma.
	
	Finally we prove the statement for \cref{alg:perfect-matching}. The only graphs produced and passed onto \cref{alg:partial-matching} are obtained from the input graph by vertex removals and edge removals. So they are all minors of the input graph. The output of \cref{alg:partial-matching} might not be a proper minor, but this output is only used to decide which edges and vertices to remove from the original graph to get to induced graphs on $S-\set{v}$.
\end{proof}
	\section{Analysis of the algorithm}\label{sec:analysis}

First we will prove that our oracle-based algorithm returns a correct answer. Next, we will bound the running time and prove that our algorithm runs in \NC, modulo the calls to $\O$; this constitutes the most challenging part of the analysis.

\subsection{Correctness}

We will need the following lemma.

\begin{lemma}\label{lem:flip}
	Suppose that $\L$ is a laminar family of sets in a node-weighted graph $G=(V, E)$, and we replace every $S\in \L$ whose node weight is larger than half of the total node weight by the complement, i.e., $V-S$. Then the resulting family of sets $\L'$ is also laminar.
\end{lemma}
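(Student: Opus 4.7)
Let $W$ denote the total node weight. Call a set $S \in \L$ \emph{big} if its node weight is strictly greater than $W/2$, and \emph{small} otherwise. The construction of $\L'$ replaces each big set by its complement and leaves each small set alone. Pick two arbitrary sets $A', B' \in \L'$, arising from $A, B \in \L$, and show they do not cross. I will split into three cases by how many of $A, B$ are big.

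\textbf{Both small.} Then $A' = A$ and $B' = B$, and laminarity of $\L$ directly gives the conclusion.

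\textbf{Both big.} Here $A' = V - A$ and $B' = V - B$. Since both $A$ and $B$ have node weight exceeding $W/2$, they cannot be disjoint (otherwise their combined node weight would exceed $W$). So in $\L$ they are nested; without loss of generality $A \subseteq B$. But then $V - B \subseteq V - A$, so $A'$ and $B'$ are nested and hence do not cross.

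\textbf{Mixed case.} Say $A$ is big and $B$ is small, so $A' = V - A$ and $B' = B$. Since $\L$ is laminar, $A$ and $B$ are either disjoint or nested. If $A \cap B = \emptyset$, then $B \subseteq V - A$, so $B' \subseteq A'$. If $A \subseteq B$, then $B$ has node weight at least that of $A$, which exceeds $W/2$, contradicting $B$ being small. If $B \subseteq A$, then $B \cap (V - A) = \emptyset$, so $B'$ and $A'$ are disjoint.

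In every case the two sets in $\L'$ are disjoint or nested, so $\L'$ is laminar. The only step that is not completely mechanical is the observation in the ``both big'' case that two sets of weight greater than $W/2$ cannot be disjoint, which forces nesting in $\L$; I expect this to be the only substantive point in the argument.
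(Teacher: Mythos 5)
Your proof is correct and follows essentially the same case analysis as the paper; you organize the cases by how many of the two sets are big while the paper organizes by disjoint-vs-nested, but the key observations (two big sets cannot be disjoint, and in a mixed nested pair only the outer set can be big) are identical.
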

\begin{proof}
	Let $S, S'$ be two sets in $\L$. They are either disjoint or one is contained in the other.
	
	If $S\cap S'=\emptyset$: They cannot both have node weight more than $1/2$. So at most one of them gets replaced by its complement. Then it is easy to see that the resulting sets do not cross.
	
	If $S\subseteq S'$: There are three possibilities. If none of them gets replaced by their complements, or both of them get replaced by their complements, they remain nested and therefore do not cross. If one of them gets replaced by its complement, it has to be the larger set $S'$. In that case the resulting sets become disjoint, and still do not cross.
\end{proof}

Using \cref{lem:flip} and \cref{lem:matching-minor}, we deduce that \Reduce always returns a matching minor of its input graph. By definition, \PartialMatching also returns a matching minor of its graph when it finishes (for the analysis of running time see \cref{sec:runtime}).

This proves the correctness of the algorithm, since we always find a matching minor that has a unique perfect matching (itself), and by \cref{lem:matching-minor}, we can extend it to a perfect matching, independently in the preimage of each node.

\subsection{Running time}\label{sec:runtime}

First we analyze \PerfectMatching (\cref{alg:perfect-matching}) assuming the calls to \PartialMatching (\cref{alg:partial-matching}) are in \NC{}.

\begin{lemma}
	Assuming the calls to \PartialMatching are in \NC{}, then the procedure \PerfectMatching is in \NC{}.
\end{lemma}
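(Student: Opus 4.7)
The plan is to analyze \PerfectMatching as a balanced divide-and-conquer procedure. Each invocation on $G=(V,E)$ performs one call to \PartialMatching (in \NC{} by assumption) plus routine \NC{} bookkeeping --- identifying the preimage $S \subseteq V$ of each node $s$ of the returned matching minor $H$, extracting from $M \cap \delta(S)$ the unique matched endpoint $v \in S$, and forming the induced subgraph $G_s$ on $S - \set{v}$ --- and then dispatches its recursive subproblems in parallel. Hence, once we control recursion depth, the total parallel depth is polylogarithmic and the total work polynomial.

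The entire argument thus reduces to showing that some constant $\alpha < 1$ bounds $\card{S}/\card{V}$ for every preimage $S$ of a node of $H$. I will argue $\alpha = 5/6$ works. \PartialMatching can return $H$ in two ways, and both need inspection.

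In the early-return branch, the loop has just observed a supernode $v$ of current weight at least $\card{V}/6$, and contracts $V - \set{v}$ into a single node, giving a two-node $H$. Since the same check was \emph{not} triggered at the start of the previous iteration, every supernode then had weight strictly less than $\card{V}/6$. The operations performed in that iteration can inflate a supernode only mildly: triad contraction merges three supernodes, producing weight at most $3 \cdot \card{V}/6 = \card{V}/2$, while a call to \Reduce, by its ``replace $S$ with $C-S$'' rule, caps every new contracted set at half the weight of its connected component, hence at most $\card{V}/2$. Therefore $v$'s preimage has size $\leq \card{V}/2$, and the complement supernode's preimage has size $\leq 5\card{V}/6$. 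In the natural-exit branch, where $G$ has become a perfect matching, the identical argument applies to the final iteration, whose opening check was passed; every supernode in the returned perfect matching thus has weight at most $\card{V}/2$.

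Combining, the recursion depth is at most $\log_{6/5} \card{V} = O(\log \card{V})$, and with an \NC{} cost per level this yields polylogarithmic total depth with polynomial total work, proving \PerfectMatching is in \NC. The part of the argument most easily overlooked is the early-return case: naively the offending supernode $v$ could be almost as large as $\card{V}$, and only the combination of the opening-of-iteration $<\card{V}/6$ invariant together with \Reduce's half-of-component guarantee rules that possibility out.
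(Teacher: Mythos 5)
Your argument is correct and follows essentially the same route as the paper: reduce the claim to bounding every node weight of the returned minor by $5/6$ of the total, then observe that the $<1/6$ invariant at the start of each iteration, combined with the fact that both triad contraction (merging three sub-$1/6$ supernodes) and \Reduce (via the replace-by-complement rule) cap newly formed nodes at half the total weight, forces the offending node to weigh at most $1/2$ and its contracted complement at most $5/6$. The only cosmetic difference is that you handle the natural-exit branch (loop terminates because $G$ is a perfect matching) explicitly, whereas the paper disposes of it implicitly; your version is slightly more careful but adds no new ideas.
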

\begin{proof}
	We simply need to bound the number of levels in the recursion. To do so, we will prove that when \PartialMatching returns a matching minor $H$, the node weight of every node is at most $5/6$ the total node weight. This proves that in each recursive call to \PerfectMatching, the number of vertices gets reduced by a factor of $5/6$.
	
	Note that the first time in \cref{alg:partial-matching} that a node's weight goes above $1/6$ the total weight, the algorithm stops and returns a two-node minor. So we just need to prove that the weight of the node that just went above $1/6$ is not more than $5/6$. The current minor was obtained from the previous minor by either \Reduce, or by shrinking triads. But \Reduce never creates nodes with weight more than half the total weight. The weight of each node in a triad is also at most $1/6$ the total weight, so after shrinking the triad, the new weight can be at most $1/6+1/6+1/6=1/2$ the total weight. This finishes the proof.
\end{proof}

Finally, we need to prove that \PartialMatching finishes in a polylogarithmic number of steps. Using the structural facts, \cref{lem:progress} and \cref{lem:many-triads,lem:many-walks}, we establish the following lemma.

\begin{lemma}
\label{lem:progress}
	In each iteration of \cref{alg:partial-matching}, the number of non-isolated edges gets reduced by a factor of $1-\Omega(1/\log^2 \card{V})$.
\end{lemma}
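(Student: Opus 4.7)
The plan is a case analysis on the connected components of the current matching minor $G$, applying \cref{lem:many-triads} to the very sparse components and \cref{lem:many-walks} to the rest. Since \cref{alg:partial-matching} picks the minor in $U$ with the fewest non-isolated edges among the triad trial and all $w$-trials, it suffices to exhibit one trial achieving a multiplicative $1-\Omega(1/\log^2|V|)$ drop. First I will note that isolated edges are inert under both the triad step and \Reduce: their endpoints have degree $1$, so they lie in no triad, and each isolated edge is forced in its $2$-vertex connected component's unique perfect matching, hence always allowed, while that component admits no nontrivial tight odd set. So the drop in non-isolated edges equals the drop in total edges, and since $|E| \geq n := |E_{\text{ni}}(G)|$ it suffices to show an $\Omega(|E|/\log^2|V|)$ drop in some trial. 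Also, $G$ is matching-covered throughout (every matching minor is: any allowed edge of a face is used by some vertex-matching of that face, by writing a fractional point as a convex combination), so the structural lemmas apply.

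Partition the components of $G$ by sparsity into sets of total edge counts $|E_s|$ (very sparse) and $|E_{ns}|$ (the rest). Applying \cref{lem:many-triads} componentwise, the triad trial shrinks a maximal node-disjoint set containing $\Omega(|E_s|)$ triads, each eliminating two internal edges as self-loops, for an $\Omega(|E_s|)$ drop. Applying \cref{lem:many-walks} componentwise produces $s=\Omega(|E_{ns}|/\log^2|V|)$ edge-disjoint even walks $C_1,\dots,C_s$, and by \cref{lem:w-exists} some $w\in \W$ satisfies $\crc_w(C_i)>0$ for every $i$. Granting for a moment that $\Reduce(G,w)$ loses at least one edge from each $C_i$, edge-disjointness yields $s$ distinct dropped edges. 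The inequality $\max(|E_s|,\,|E_{ns}|/\log^2|V|)\geq \frac{1}{2}(|E_s|+|E_{ns}|/\log^2|V|)\geq \Omega(|E|/\log^2|V|)$ then shows the better trial achieves the goal.

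It remains to justify the ``loses an edge'' claim. Fix $C_i$ and apply \cref{lem:walk-destroy}: either some edge of $C_i$ is disallowed and deleted in the first line of \Reduce, or some $S$ in the laminar family $\L$ output by \cref{alg:balanced} satisfies $\mis(C_i,S)>0$. In the second case, $C_i\subseteq E[w]$ lies within a single connected component $K$ of the allowed-edge graph, so its crossings of $S$ are identical to its crossings of $K-S$, giving $\mis(C_i,K-S)=\mis(C_i,S)>0$. Let $S^*$ be whichever of $S,K-S$ survives the flipping step of \Reduce---the one with at most half the node weight of $K$. By \cref{lem:walk-lose-edge} applied with $S^*$, some edge $\{u,v\}\in C_i$ has $u,v\in S^*$. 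By \cref{lem:flip}, the flipped family is laminar, so its unique maximal set containing $u$ contains $S^*$ and hence $v$; shrinking that set turns $\{u,v\}$ into a self-loop, which is discarded.

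The main subtlety is the flipping step: a priori the tight odd set $S$ witnessing the mismatch could get replaced by $K-S$, and one might worry that the relevant edge of $C_i$ ends up outside every shrunk set. The symmetry $\mis(C_i,S)=\mis(C_i,K-S)$, which holds because $C_i$ lives entirely inside $K$, delivers a usable edge on whichever side survives the flip; \cref{lem:flip} and laminarity then guarantee the contraction. Everything else is bookkeeping with the stated lemmas.
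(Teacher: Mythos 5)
Your argument is correct and follows essentially the same strategy as the paper: split the components by sparsity, apply \cref{lem:many-triads} to the very sparse ones and \cref{lem:many-walks} to the rest, use \cref{lem:w-exists} to pick a single $w$ that kills all walks, and observe that the better of the triad trial and the $w$-trial must achieve an $\Omega(1/\log^2\card{V})$ multiplicative drop in non-isolated edges (the paper phrases this as an averaging bound over $H_1, H_2$; your $\max \geq \tfrac12(\cdot + \cdot)$ step is the same computation). One thing you do that the paper does not: you explicitly justify why each walk loses an edge even after the complementation step in \Reduce, by noting that $\mis(C_i, S) = \mis(C_i, K - S)$ when $C_i$ sits entirely inside the allowed component $K$, and then invoking \cref{lem:walk-lose-edge} on whichever of $S$ or $K-S$ survives. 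That is a real subtlety the paper's proof elides (it just cites \cref{lem:walk-destroy,lem:walk-lose-edge}), and your treatment of it is correct. One small precision issue: your $\card{E_s}$ as stated includes isolated-edge components, which are very sparse but contain no triads, so \cref{lem:many-triads} does not literally apply to them (its hidden hypothesis, via \cref{lem:path22}, is minimum degree two); the paper handles this by explicitly applying the component-wise argument only to components that are not isolated edges. You already observed that isolated edges are inert, so the fix is just to define $\card{E_s}$ and $\card{E_{ns}}$ over the non-trivial components, in which case their sum is exactly the number of non-isolated edges and the conclusion follows without needing to pass through $\card{E} \geq n$.
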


\begin{proof}
	First assume $G$ is a connected graph. Then we can directly apply \cref{lem:many-triads,lem:many-walks} for some fixed $\epsilon<1/9$ to show that we either find $c_1\card{E}$ triads or there exist $c_2\card{E}/\log^2\card{V}$ edge-disjoint even walks. In the former case, after contracting the triads, the number of edges gets reduced by a factor of $1-c_1$. In the latter case, let $C_1,C_2,\dots,C_k$ be the edge-disjoint even walks, and let $w\in \W$ be the weight vector such that $\dotprod{w, \sign(C_i)}\neq 0$. Note that $w$ is guaranteed to exist by \cref{lem:w-exists}. In the call to $\Reduce(G, w)$, every $C_i$ loses at least edge by \cref{lem:walk-destroy,lem:walk-lose-edge}, either because one of its edges becomes disallowed or it gets shrunk as a result of shrinking top-level tight odd sets. Therefore, one of the candidate graphs in $U$ in \cref{alg:partial-matching} will have a factor of $1-c_3/\log^2\card{V}$ fewer edges, for some constant $c_3>0$.
	
	Next assume $G$ is not connected. If so, we apply the above-stated argument to each connected component that is not an isolated edge. We can further assume the same weight vector $w$ works for all connected components. Now if $H_1$ is the graph obtained from shrinking triads, and $H_2$ is the result of $\Reduce(G, w)$, then we know that the average number of edges in $H_1$ and $H_2$ for each connected component is at most $1-c_3/2\log^2 \card{V}$ times the number of edges in the connected component. So one of $H_1, H_2$ must have at most $(1-c_3/2\log^2\card{V})$ times as many non-isolated edges as $G$.
\end{proof}

Note that \cref{lem:progress} gives a polylogarithmic upper bound on the number of iterations in \cref{alg:partial-matching}, since if we track the number of non-isolated edges, after every $\Theta(\log^2\card{V})$ steps we get a constant factor reduction, and therefore it takes at most $O(\log \card{E}\cdot \log^2\card{V})$ iterations for it to reach $0$.

	\section{Discussion}\label{sec:discussion}

This paper has identified what appears to be the ``core'' of the difficult open problem of obtaining an \NC{} matching algorithm, namely the decision problem. We must immediately mention that both decision problems stated in \cref{sec:right} have been the subject of numerous attacks over the past decades and hence resolution is not likely to be an easy matter. At the same time, we hope that since the ``target'' has been more precisely identified, the resolution of the open problem will gain added impetus. 

An obvious open question is to build on the \QuasiNC{} algorithms of \textcite{GT17, GTV17} to obtain the appropriate oracle-based \NC{} algorithms and pseudo-deterministic \RNC{} algorithms for linear matroid intersection and for finding a vertex of a polytope with faces given by totally unimodular constraints. An interesting problem defined by Papadimitriou and Yannakakis \cite{PY} , called Exact Matching, is the following: Given a graph $G$ with a subset of the edges marked red and an integer $k$, find a perfect matching with exactly $k$ red edges. This problem is known to be in \RNC~\cite{MVV87}, even though it is not yet known to be in \P. Is there a pseudo-deterministic \RNC{} algorithm for it?

The phenomenon identified in \cref{sec:curious} clearly deserves to be studied in depth. To the best of our knowledge, there are only two algorithmic results for bipartite matching that have not been extended to general graphs. The first is obtaining a fully polynomial randomized approximation scheme for counting the number of perfect matchings \cite{JSV04}; this is also among the outstanding open problems of theoretical computer science today. The second is obtaining an $O(m^{11/8})$ algorithm for maximum matching \cite{Sidford-L}, which beats the earlier algorithms for sparse graphs.

	\PrintBibliography
\end{document}